\renewcommand*\l@subsubsection[2]{{}{}{}} %
\newcommand\bR{\mathbb{R}}
\newcommand\bC{\mathbb{C}}
\newcommand{\tspan}{\mathrm{span}}
 \newcommand{\Fix}{\mathrm{Fix}}
 \newcommand{\supp}{\mathrm{supp}}
 \newcommand{\ls}{\langle}
 \newcommand{\rs}{\rangle}
   \newcommand{\re}{\mathcal{R}e}
\newcommand{\einprogress}{\color{black}}
\newtheorem{theorem}{Theorem}[section]
 \newtheorem{lemma}{Lemma}[section]
 \newtheorem{definition}{Definition}[section]
\newtheorem{remark}{Remark}[section]
\newcommand{\gt}{\hat{x}}
\begin{document}
\DOI{DOI HERE}
\copyrightyear{2024}
\vol{00}
\pubyear{2024}
\access{Advance Access Publication Date: Day Month Year}
\appnotes{Paper}
\copyrightstatement{}
\firstpage{1}
%
%%\subtitle{Subject Section}
%
\title[Towards optimal algorithms for the recovery of low-dimensional models]{Towards optimal algorithms for the recovery of low-dimensional models with linear rates}
\author{	Yann Traonmilin* and Jean-François Aujol and Antoine Guennec\address{\orgdiv{} \orgname{Univ. Bordeaux, Bordeaux INP, CNRS, IMB}, \orgaddress{\street{UMR 5251}, \postcode{F-33400}, \state{Talence}, \country{France}}}
}
\authormark{Traonmilin et al.}
\corresp[*]{Corresponding author: \href{email:yann.traonmilin@math.u-bordeaux.fr}{yann.traonmilin@math.u-bordeaux.fr}}
\received{Date}{0}{Year}
\revised{Date}{0}{Year}
\accepted{Date}{0}{Year}
\abstract{We consider the problem of recovering elements of a low-dimensional model from linear measurements. From signal and image processing to inverse problems in data science, this question has been at the center of many applications. Lately, with the success of models and methods relying on deep neural networks, there has been a  multiplication of different algorithms and recovery results. Comparing  the performance of recovery algorithms becomes a complex task without a unifying framework. In this article, as a first step for the study of general algorithms for low-dimensional recovery,  we study  a  class of generalized projected gradient descent algorithms that can recover a given low-dimensional model with linear rates. The obtained rates decouple the impact of the quality of the measurements with respect to the model from the geometry of the properties of the chosen generalized projection: we can directly measure  performance  through a restricted Lipschitz constant of the projection with respect to the low dimensional model. By optimizing this constant, we define an optimal generalized projected gradient descent. Our general approach provides an optimality result in the case of sparse recovery. Moreover, our framework allows for a common interpretation of linear rates of recovery in the context of both sparse models and  models induced by some ``plug-and-play''  imaging methods that rely on deep neural networks. These rates of recovery  are observed in experiments on synthetic and real data.}
\keywords{optimal algorithms; low-dimensional recovery; projected gradient descent; plug-and-play methods.}
\maketitle

\section{Introduction}

%In $\bC^N$,
In this paper, we consider the general  noiseless observation model in finite dimension:
\begin{equation}\label{eq:model_obs}
 y = A \gt
\end{equation}
where $y \in \bC^m$ is a vector of measurements, $A$ is an under-determined linear map from $\bC^N$ to $\bC^m$ (i.e. $m <N$) and $\gt \in \bC^N$ is the unknown vector we want to recover.  The problem of recovering $\gt$ from $y$ is ill-posed. It is thus necessary to use prior information on $\gt$ to hope for an estimation of $\gt$ with a guarantee of success.

In this work, we will suppose that $\gt$ belongs to a homogeneous set $\Sigma$  (for every $x \in \Sigma$ and $\lambda \in \bC$, $\lambda x \in \Sigma$) that models known properties of the unknown.  In the sparse recovery literature, such sets $\Sigma$  are often considered to be (potentially infinite) union of (linear) subspaces.  Examples of such models include sparse as well as low-rank models (when $\bC^N$ is a space of matrices) and many of their generalizations (see~\cite{Foucart_2013} for an overview). The problem of recovering elements of a low-dimensional model from their measurements has been at the center of inverse problems in data science. This is for instance the case for many problems in signal and image processing where the unknown $\gt$ is the signal or image of interest, and the matrix $A$ models a degradation such as a subsampling or a blur. Note that, in this article, we consider the \emph{noiseless} case as we focus on the geometry of the estimation of elements of $\Sigma$ in relation to the chosen algorithm and the measurement operator $A$.

To recover $\gt$, a classical method is to solve the  minimization problem
\begin{equation} \label{eq:minimization1}
 x^\star \in \arg\min_{x \in \bC^N} \frac{1}{2}\|Ax-y\|_2^2  + \lambda R(x)
\end{equation}
where $R$ is a function (the regularizer) that  enforces some structure on the solution $x^\star$. It has been shown that functions $R$ of the form $d(\cdot, \Sigma)$ (distance to the model set for a given norm) achieve the best identifiability guarantees~\cite{bourrier2014fundamental}. Using the characteristic function of $\Sigma$ for $R(x)$ yields the constrained minimization 
\begin{equation} \label{eq:minimization_const}
	x^\star \in \arg \min_{x\in\Sigma} \frac{1}{2}\|Ax-y\|_2^2  
\end{equation}
 Unfortunately, for classical low-dimensional models such as sparse models, such optimization problems are NP hard~\cite{huo2010complexity}.

Another possibility is to use a convex proxy of minimization~\eqref{eq:minimization1} (i.e. using a convex $R$) that guarantees the recovery of elements of $\Sigma$. A general method to choose the best possible convex regularization $R$ (with respect to recovery guarantees of $\Sigma$ with $R$) has been presented in \cite{traonmilin2021theory}. However, in terms of practical recovery guarantees,  one must combine guarantees of success of~\eqref{eq:minimization1} (i.e. $x^\star = \hat{x}$) with convergence guarantees of the chosen algorithm.

While the convergence of many algorithms is verified for convex problems, heuristics approximating a non-convex minimization of~\eqref{eq:minimization1} (often called non-convex methods) are generally studied directly, and identifiability guarantees of the chosen algorithm are proven under conditions much more stringent than the guarantees of the ideal non-convex minimization. 

In this landscape of methods for solving inverse problems, algorithms based on deep priors have become the state-of-the-art reference in domains such as image processing. These methods have in common the fact that the prior used, one way or another, is learned on a large database with a deep neural network (DNN). Such priors are called deep priors. A large part of the resulting algorithms,  such as plug-and-play methods (PnP) \cite{venkatakrishnan2013plug}, fall within a non-convex framework, which leads to convergence to local minima with sublinear rates. In~\cite{liu2021recovery}, linear convergence requires global Lipschitz constants that are not verified for classical sparse recovery problems. Moreover, while the impact  of the properties of the low-dimensional model is studied in some works on deep priors, it is not extensively studied in the plug-and-play literature.

In this paper, we propose to study directly the identifiability performance of a class of recovery algorithms in order to answer the question:

\vspace*{1em}

\begin{center}
\emph{Given a low-dimensional model $\Sigma$, what is the optimal algorithm (in a given specific class of algorithms) to recover elements of $\Sigma$ from linear observations?}
\end{center}
\vspace*{1em}

We focus on this question for a class of algorithms that can be thought of as generalized projected gradient descent methods.  In this context,   the problem of low-dimensional recovery can be studied beyond variational methods where a functional (to be minimized)  must be explicit (in particular a regularizer). We propose state-of-the-art guarantees for these methods. Once guarantees are obtained, we can define the optimal method as the one having the best guarantees. Moreover, the framework allows us to consider algorithms relying on deep priors and to understand their identifiability and convergence properties.

\subsection{Generalized projected gradient descent}

We define the class of generalized projected gradient descent (GPGD) methods of interest as the iterations:

\begin{equation}\label{eq:def_GPGD}
	x_{n+1} = P_{\Sigma}(x_n) - \mu A^H(A P_\Sigma(x_n)-y)
\end{equation}
where  $P_\Sigma$ is a generalized projection (a (set-valued) map into $\Sigma$) and $\mu >0$ is a step size. In the context of sparse recovery with the orthogonal projection onto the set of $k$-sparse vectors, GPGD corresponds to the iterative hard thresholding (IHT) algorithm \cite{blumensath2010normalized} (or more generally iterative thresholding pursuit  \cite{foucart2011hard}).  In the context of plug-and-play methods, the generalized projection $P_\Sigma$ is a general purpose denoiser realized by a deep neural network~\cite{zhang2021plug}.

Note that projected gradient descent is often defined with the order of projection and gradient steps being reversed compared to GPGD defined in~\eqref{eq:def_GPGD}.  We will keep the name generalized projected gradient descent in the following to avoid confusion. In this article, we will specifically give theoretical results about GPGD. However, it will be useful to think of GPGD as a subset of a more general class of algorithms to lay out further questions about optimal low dimension recovery in a purely algorithmic framework:  methods of averaged directions (see Section~\ref{sec:av_direc}). In this setting, it is not necessary to define the $\ell^2$-datafit to design GPGD (this will be discussed in more details in Section~\ref{sec:choice_datafit}).

\subsection{A notion of optimal algorithm for low-dimensional recovery}

 In~\cite{traonmilin2021theory}, a notion of optimal (convex) regularization for low-dimensional recovery was proposed. Many possibilities, depending on the desired objective, exist for the definition of optimality. In this article, we will propose an optimality condition that provides a trade-off between the identifiability of elements of $\Sigma$ and the rate of convergence for newly given state-of-the-art recovery guarantees.

 In the case where elements $\gt \in \Sigma$ are uniquely identifiable from $y$ (i.e. there is a unique $\hat{x} \in \Sigma$ such that $A\hat{x}=y$), we  consider algorithms  where we can guarantee \emph{uniform recovery} with a global convergence, i.e. given $A$ (with the adequate properties) and any $\hat{x} \in \Sigma$~:

\begin{equation}
 \|x_n -\gt\|_2  \to_{n \to + \infty} 0
\end{equation}
for any choice of initialization $x_0$. Note that if any $\gt \in\Sigma$ is identifiable from $A\gt$, then the operator $A$ necessarily has a lower restricted isometry property (RIP)~\cite{bourrier2014fundamental}~(a property that will be central in our analysis, see Definition~\ref{def:RIC}).

Within the chosen set of algorithms, we will look for algorithms achieving linear rates of convergence, under a RIP assumption on $A$, i.e. there is $r \in [0;1) $ such that, for all $n \geq 0$,
\begin{equation}\label{eq:linear_rate}
 \|x_{n+1} - \gt\|_2 \leq r \|x_n - \gt\|_2.
\end{equation}
In other words, we specifically search for algorithms that perform global fast recovery of the elements of the model (as was demonstrated for iterative thresholding algorithms for sparse recovery with a restricted isometry  of $A$ \cite{blumensath2010normalized}).

Specifically, we ask ourselves what choice of generalized projection $P_\Sigma$ in GPGD  yields the best rates and recovery guarantees.   In this study,  we do not take into account the complexity of the computation of the projection $P_\Sigma$.  Although it is often fast to compute in the case of sparse modeling (in that case, projection is just a hard thresholding operation),  there are other instances when it can be much more demanding (i.e. projection onto sets of low-rank tensors). For plug-and-play methods with deep priors, the projection is realized by a forward pass in a deep neural network, which is a fast operation. Our contribution also opens broader questions about optimality within the class of averaged directions algorithms that will be discussed in Section~\ref{sec:av_direc}.

\subsection{Contributions}

In this article, we propose a framework for the study of the optimality of low-dimensional recovery with generalized projected gradient descent.

\begin{itemize}
 \item In Section~\ref{sec:recovery}, we propose a linear rate of recovery for generalized projected gradient descents. The obtained linear rate decouples the quality of the measurements through the restricted isometry constant of $A$ and the quality of the chosen algorithm through a newly introduced \emph{restricted} $\beta$-Lipschitz condition on $P_\Sigma$.  This enables the introduction of a well defined optimality notion for the class of projected gradient descent algorithms relying on the minimization of the \emph{restricted} Lipschitz constant $\beta$. 

 \item  In Section~\ref{sec:opt_proj}, we give properties related to  the \emph{restricted} Lipschitz condition. In particular, the orthogonal projection $P_\Sigma^\perp$ onto general sets (when it exists) always satisfies this condition for $\beta=2$. In the context of sparse recovery, we show that the orthogonal projection is indeed optimal when considering the family of model sets $\Sigma_k$ for all levels of sparsity $k$. This result also shows that for fixed sparsity, while the optimal projection may not be the orthogonal projection, its \emph{restricted} Lipschitz constant is close to the constant of the orthogonal projection. Also note that finding projections with optimal \emph{restricted} Lipschitz constants is of mathematical interest independently of the context of low-dimensional recovery.

 \item In Section~\ref{sec:av_direc}, we make the link between generalized projected gradient descent and iterative algorithms where a data-fit direction and a regularizing direction are averaged. We discuss open questions about  design and optimality within  these larger classes of algorithms in relation to our results.

 \item  In Section~\ref{sec:deep_priors}, we interpret our results in the context of plug-and-play methods for solving inverse problems with deep priors. We show  that, when we explicit the underlying low-dimensional model and the corresponding projection that is performed, the so-called proximal gradient descent plug-and-play method (that can be interpreted as a generalized projected gradient descent) exhibits linear rates of convergence towards elements of the model induced by the chosen denoiser if the \emph{restricted}-Lipschitz property is verified.  We show experimentally this global linear convergence and verify  that the  \emph{restricted}-Lipschitz property is indeed satisfied empirically for some general purpose denoisers (DRUNet denoisers in our case).

\end{itemize}

\subsection{Related work}\label{sec:related_work}

Our work aims at building a framework for the design of optimal algorithms for the recovery of low-dimensional models. This follows some ideas from~\cite{traonmilin2021theory} where optimal convex regularizers are defined and calculated. In~\cite{leong2022optimal}, the authors propose a similar framework to design a possibly non-convex regularizer from a data source. Another direction of research related to optimal methods is studying the intrinsic computational complexity of recovery algorithms in the context of inverse problems \cite{cucker1999complexity,burgisser2013condition,roulet2017computational,bastounis2021extended}. Our approach mainly relies on restricting the class of considered algorithms to be able to find non-trivial optimal algorithms. Optimality in the context of function minimization has been a broad question. An approach is to find algorithms achieving "worst-case" rates theoretically (see e.g. \cite[Sec. 2.2]{nesterov2018lectures}) or even numerically~\cite{goujaud2024pepit}.

Projected gradient descent, has been widely studied in many applications. For sparse recovery, projected gradient descent is called iterative hard thresholding (or more generally hard thresholding pursuit) and has been shown to linearly converge to the unknown under a restricted isometry property of the observation operator $A$ \cite{blumensath2010normalized,foucart2011hard}. For the recovery of low-dimensional models, conditions for the linear convergence of projected gradient descent with approximate projections are given in \cite{golbabaee2018inexact}. In \cite{bahmani2016learning}, global convergence of PGD is given for a class of generalized sparsity models and the orthogonal projection. In \cite{tirer2021convergence}, general linear convergence of PGD with the orthogonal projection is shown for constrained minimization. Thanks to our convergence analysis, our work explicitly links the convergence rate of PGD with a restricted Lipschitz constant of the considered  (general) projection. Within thresholding algorithms, \cite{liu2020between} studies optimal thresholding operators with respect to a local concavity property used to give local convergence properties of projected gradient descent for minimizing general functions in \cite{barber2018gradient}. Another local convergence analysis is provided in \cite{vu2022asymptotic}.  In our work, we only consider global convergence. In \cite{kamilov2016learning}, it is proposed to learn an optimal non-linearity with a data-driven method. Note that global convergence of gradient projection has been shown under a general KL property in \cite{attouch2013convergence}. General stationary properties of the iterates of PGD are given in~\cite{olikier2024projected}. In this work, we focus on linear rates of convergence to solutions of linear inverse problems.

One objective of this paper is to give insights into the geometry of algorithms using deep priors to solve inverse problems, mainly in imaging. The recent literature on the subject is  profuse (see \cite{ongie2020deep,scarlett2022theoretical} for an overview). In particular, many variations of plug-and-play methods (where a deep neural network is used to perform an iteration of an optimization algorithm) exist, each one corresponding to a variation of an optimization algorithm such as forward-backward and ADMM \cite{venkatakrishnan2013plug,cohen2021regularization,chen2021deep,kamilov2023plug}. Specific designs of the regularizing direction have been given to guarantee the sublinear convergence of such methods under a global Lipschitz condition~\cite{hauptmann2024convergent} and a differentiability hypothesis on the functional to minimize~\cite{hurault2022gradient}. In \cite{liu2021recovery}, a decomposition of the rate of convergence between a restricted isometry condition and a global Lipschitz condition (which is not verified for sparse recovery) is shown. The control of the global Lipschitz constant of the DNN architecture is also at the center of other works~\cite{unser2024parseval}.  Our work shows that the control of a weaker restricted Lipschitz constant is sufficient for low-dimensional recovery. We must also cite methods where the projection onto the model set $\Sigma$ is explicitly using, e.g auto-encoders \cite{peng2020solving} or other generative models such as variational auto-encoders~\cite{gonzalez2022solving}. Recovery guarantees of low-dimensional models are studied under very stringent conditions (Gaussian measurements) in~\cite{hand2019global}. In \cite{tachella2023sensing}, a general study of the identifiability of models obtained in a learning context is performed.

\subsection{Notation, definitions and preliminaries} \label{sec:notations}

Given a linear operator $A : \bC^N \to \bC^m$, we denote by $A^H$ its Hermitian adjoint. The identity operator is denoted by $I$. Given a function $f : \bC^N \to \bC^N$, we call $\Fix (f) = \{x \in \bC^N : f(x) =x \}$ the set of fixed points of $f$.

We use the following definition of a restricted isometry constant.

\begin{definition}\label{def:RIC}
The operator $B$  has restricted isometry constant $\delta \in [0,1)$ on the secant set $\Sigma-\Sigma =\{x_1-x_2 : x_1,x_2 \in \Sigma \}$  if for all $x_1,x_2 \in \Sigma$

\begin{equation}
 \|(I-B)(x_1-x_2)\|_2\leq \delta \|x_1-x_2\|_2
\end{equation}
We denote by $\delta_\Sigma(B)$ the  smallest restricted isometry constant (RIC) of $B$.
\end{definition}

Note that $\delta_\Sigma(B)$ is bounded by the operator norm of $B-I$.  In this article, operators $B$ will be of the form $B= \mu A^HA$ where $\mu$ is the gradient step size.
For $B= A^HA,$ the inequality $\delta_\Sigma(A^HA)<1$  implies a restricted isometry property (RIP) of the operator $A$ in the traditional sense defined by Equation~\eqref{eq:def_RIP}~\cite{Foucart_2013} as shown in the following Lemma.

\begin{lemma} Let $\Sigma \subset \bC^N$. If $A^HA$ has restricted isometry constant $ \delta_\Sigma(A^HA)$ and $x_1,x_2 \in \Sigma$, we have
 \begin{equation}\label{eq:def_RIP}
  (1-\delta_\Sigma(A^HA))\|x_1-x_2\|_2^2\leq\|A(x_1-x_2)\|_2^2 \leq (1+\delta_\Sigma(A^HA))\|x_1-x_2\|_2^2.
\end{equation}
\end{lemma}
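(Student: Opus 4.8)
The plan is to reduce everything to a single vector $z := x_1 - x_2$, which lies in the secant set $\Sigma - \Sigma$, and to translate the defining inequality of Definition~\ref{def:RIC} into a two-sided bound on $\|Az\|_2^2$ via the quadratic form associated with $I - A^HA$.

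First I would write $\|Az\|_2^2 = \langle Az, Az\rangle = \langle A^HAz, z\rangle$, so that
\begin{equation}
\|z\|_2^2 - \|Az\|_2^2 = \langle (I - A^HA)z, z\rangle.
\end{equation}
The key observation is that $A^HA$ is Hermitian, hence $I - A^HA$ is Hermitian, so the right-hand side is a real number even though we work over $\bC^N$. This is what allows the single modulus bound coming from Definition~\ref{def:RIC} to be promoted into a genuine two-sided inequality.

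Next I would apply Cauchy--Schwarz followed by the restricted isometry estimate of Definition~\ref{def:RIC}:
\begin{equation}
\bigl| \langle (I - A^HA)z, z\rangle \bigr| \leq \|(I - A^HA)z\|_2 \, \|z\|_2 \leq \delta_\Sigma(A^HA)\, \|z\|_2^2.
\end{equation}
Combining this with the identity above yields $-\delta_\Sigma(A^HA)\|z\|_2^2 \leq \|z\|_2^2 - \|Az\|_2^2 \leq \delta_\Sigma(A^HA)\|z\|_2^2$, and rearranging the two resulting inequalities gives exactly~\eqref{eq:def_RIP}.

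There is essentially no hard part here: the only point requiring a moment of care is the passage to a real quadratic form over the complex field, which is handled entirely by the Hermitian symmetry of $A^HA$. Everything else is the inner-product identity and a single application of Cauchy--Schwarz.
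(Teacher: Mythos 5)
Your proof is correct and follows essentially the same route as the paper's: the identity $\|Az\|_2^2 - \|z\|_2^2 = \langle (A^HA - I)z, z\rangle$ followed by Cauchy--Schwarz and the definition of the restricted isometry constant. Your side remark on Hermitian symmetry is fine but not strictly needed, since the quantity is already real as a difference of two squared norms.
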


\begin{proof}
We have, for $x_1,x_2 \in \Sigma$, $x_1 \neq x_2$,
\begin{equation}
\begin{split}
\left| \frac{\|A(x_1-x_2)\|_2^2 -\|x_1-x_2\|_2^2}{\|x_1-x_2\|_2^2} \right|&=  \left|  \frac{\ls A(x_1-x_2), A (x_1-x_2) \rs  -\ls x_1-x_2, x_1-x_2 \rs   }{\|x_1-x_2\|_2^2}\right|\\
&=  \left|  \frac{\ls(A^HA-I)(x_1-x_2), x_1-x_2 \rs  }{\|x_1-x_2\|_2^2}\right|
 \end{split}
\end{equation}
Using the Cauchy-Schwarz inequality, and the definition  of $\delta_\Sigma(A)$, we have 
\begin{equation}
\begin{split}
 \left| \frac{\|A(x_1-x_2)\|_2^2 -\|x_1-x_2\|_2^2}{\|x_1-x_2\|_2^2} \right|&\leq \frac{\|(A^HA-I)(x_1-x_2)\|_2  }{\|x_1-x_2\|_2}\leq \delta_\Sigma(A).\\
 \end{split}
\end{equation}

\end{proof}

It is sometimes useful to  consider operators $A$ that have been scaled to have the best possible RIC $\delta$. This eliminates the typical problem with the RIP hypothesis that multiplying the measurement operator by a factor does not change recovery capabilities but can worsen the RIC.

\begin{definition}
 Suppose that $B$ has a RIC $\delta_\Sigma(B)$. We say that $B$ is optimally scaled for the RIC  if for all $\lambda \in \bR$ such that $\delta_\Sigma(\lambda B)$ exits, we have $\delta_\Sigma(B) \leq \delta_\Sigma(\lambda B)$.
\end{definition}

Hence, if $B$ is not optimally scaled, we can consider instead the scaled operator $\tilde{B} = \lambda_0 B$ such that $\delta_\Sigma(\tilde{B}) = \inf_{\lambda \geq 0} \delta_\Sigma(\lambda B)$.

Note that the lower RIP (left inequality in \eqref{eq:def_RIP}) is a necessary condition for the identifiability of all elements of $\Sigma$ from measurements with the operator $A$ \cite{bourrier2014fundamental}.  Consequently, in the context of uniform recovery of $\Sigma$ from linear measurements in finite dimension, we can always make  a RIC hypothesis on an optimally scaled operator $A^HA$  (thanks to  the lower RIP and boundedness of the operator norm,  see e.g. \cite[Lemma 3.1]{traonmilin2021theory}), otherwise such uniform recovery is not possible.

We now define generalized projections and orthogonal projections.

\begin{definition}[Generalized projection]\label{def:proj}
Let $\Sigma \subset \bC^N$. A (set-valued) generalized projection onto $\Sigma$ is  a (set-valued) function $P$ such that for any $z\in\bC^N$, $P(z) \subset \Sigma$.
\end{definition}

By abuse of notation, to facilitate reading, an equation true for any $w \in P(z)$ is written using the notation $P(z)$. We introduce orthogonal projections (metric projections for the $\ell^2$ norm) on sets where they exists.
 
\begin{definition}[Proximinal sets and orthogonal projections]\label{def:orth_proj}
	Let $\Sigma \subset \bC^N$. The set $\Sigma$ is proximinal if for all $z \in \bC^N$, we have 
	\begin{equation}
	  \left( \arg \min_{x\in\Sigma} \| x-z \|_2   \right) \neq \emptyset
	\end{equation}
	
	Now suppose $\Sigma$ is a proximinal set, we define the orthogonal projection onto $\Sigma$ as 
	\begin{equation}
		P_\Sigma^\perp(z)  := \arg \min_{x\in\Sigma} \| x-z \|_2.
	\end{equation}
	Notice that $P_\Sigma^\perp(z)$ may be set-valued.
\end{definition} 

The proximinal property (used e.g. in \cite{franchetti1986embedding})  is verified for finite union of linear subspaces such as sparse models. It is also verified for spaces of low rank matrices with the singular value thresholding operator.

\section{On optimal low-dimensional recovery generalized projected gradient descent}\label{sec:recovery}

In this section, we  give our main theorem guaranteeing recovery with generalized projected gradient descent. This general recovery result is the basis for our definition of optimal recovery. We show that the  class of  generalized projected gradient descent can recover low-dimensional models provided that the projection satisfies a \emph{restricted} Lipschitz condition. This condition for recovery with a linear rate is weaker than the classical Lipschitz condition.

\begin{definition}[Restricted Lipschitz property]\label{def:lip_const}
Consider a generalized projection $P$. Then $P$ has the restricted $\beta$-Lipschitz property with respect to $\Sigma$ if for all $z \in \bC^N, x \in \Sigma,$, we have
\begin{equation}
\begin{split}
 \|P(z)-x\|_2 &\leq \beta \|z-x\|_2\\
 \end{split}
\end{equation}
We note $\beta_{\Sigma}(P)$ the smallest $\beta$ such that $P$ has the restricted $\beta$-Lipschitz property.

For set-valued projections we say that $P$ has the restricted $\beta$-Lipschitz property if for all  $z \in \bC^N, x \in \Sigma, u \in P(z)$, $\|u-x\|_2 \leq \beta \|z-x\|_2$.

\end{definition}

Note that any $P$ verifying this condition is such that $\Sigma \subset \Fix (P)$ as  $\|z-x\|_2 =0$  (i.e. $z=x$) implies $P(z) =x$ (see also Lemma~\ref{lem:charac_lip1}). Hence, this is a weaker statement than the classical Lipschitz property $\|P(z)-P(x)\|_2 \leq \beta \|z-x\|_2$, since one variable is restricted to $\Sigma$. 

We expect that a restricted Lipschitz constant is such that $\beta \geq 1 $ (with equality for the orthogonal projection onto a linear subspace, see Lemma~\ref{lem:lip_const_subspace}). Note that, in a context where orthogonal projections onto \emph{convex} sets are considered for the regularization by denoising algorithm ($g_n = (x_n -P(x_n))$), this restricted Lipschitz constant  was identified as important in the analysis of the plug-and-play regularization-by-denoising (RED) algorithm but not explicitly defined as a minimal hypothesis for convergence \cite{cohen2021regularization}. The main advantage of this restricted Lipschitz property is to build a common framework between the sparse recovery setting where the projection is generally not globally Lipschitz (e.g. hard thresholding) and denoisers used as generalized projections.

In  Section~\ref{sec:opt_proj}, the existence of  projections with the restricted $\beta$-Lipschitz property is guaranteed on general sets: the orthogonal projection onto a proximinal set is restricted $2$-Lipschitz.

We can now give a general result that guarantees recovery with a linear rate.

\begin{theorem}[Linear recovery of low-dimensional models]\label{th:gen_convevergence}
Let $\Sigma \subset \bC^N$ and $P_\Sigma$ a generalized projection onto $\Sigma$. Suppose the operator $\mu A^H A$ has RIC $ \delta:=\delta_\Sigma(\mu A^HA) <1$.
For any $\gt \in \Sigma, x_0 \in\bC^N$, consider the iterates $x_n$ resulting from Algorithm~\eqref{eq:def_GPGD}. Suppose  $P_\Sigma$ has the restricted $\beta$-Lipschitz property with respect to $\Sigma$ (with $\beta := \beta_\Sigma(P_\Sigma)$). 
%Suppose $\delta \beta <1$, 
Then we have, for all $\gt\in\Sigma, x_0 \in \bC^N, n\geq 1$,

\begin{equation}\label{eq:res_linear_rate}
 \|x_n-\gt\|_2 \leq (\delta\beta)^n \|x_0 -\gt\|_2.
\end{equation}

Moreover, if $\delta \beta <1$, then \eqref{eq:res_linear_rate} implies a linear convergence of $x_n$ to $\hat x$.

\end{theorem}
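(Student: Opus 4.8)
The plan is to establish a one-step contraction $\|x_{n+1}-\gt\|_2 \leq \delta\beta\,\|x_n-\gt\|_2$ and then obtain the geometric decay~\eqref{eq:res_linear_rate} by a trivial induction. The conceptual starting point is that $\gt$ is a fixed point of the iteration~\eqref{eq:def_GPGD}: since the restricted $\beta$-Lipschitz property forces $\Sigma \subset \Fix(P_\Sigma)$ (as noted right after Definition~\ref{def:lip_const}), we have $P_\Sigma(\gt)=\gt$, and because $y = A\gt$ the gradient correction $\mu A^H(A P_\Sigma(\gt)-y)$ vanishes. This fixed-point structure is what allows us to compare the iterates to $\gt$ directly.

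First I would compute the error at step $n+1$. Substituting $y = A\gt$ into~\eqref{eq:def_GPGD} and subtracting $\gt$, the correction term becomes $-\mu A^H A\bigl(P_\Sigma(x_n)-\gt\bigr)$, which combines with $P_\Sigma(x_n)-\gt$ to yield
\begin{equation}
 x_{n+1} - \gt = (I - \mu A^H A)\bigl(P_\Sigma(x_n) - \gt\bigr).
\end{equation}
The crucial observation is that both $P_\Sigma(x_n)\in\Sigma$ and $\gt\in\Sigma$, so the vector $P_\Sigma(x_n)-\gt$ lies in the secant set $\Sigma-\Sigma$ on which the restricted isometry constant of $\mu A^H A$ is defined.

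Then I would apply the two hypotheses in sequence. By Definition~\ref{def:RIC} applied to $B=\mu A^H A$ with RIC $\delta$ (taking $x_1=P_\Sigma(x_n)$, $x_2=\gt$), we get $\|(I-\mu A^H A)(P_\Sigma(x_n)-\gt)\|_2 \leq \delta\,\|P_\Sigma(x_n)-\gt\|_2$; and by the restricted $\beta$-Lipschitz property of Definition~\ref{def:lip_const} (taking $z=x_n$ and $x=\gt\in\Sigma$), we get $\|P_\Sigma(x_n)-\gt\|_2 \leq \beta\,\|x_n-\gt\|_2$. Chaining these gives the one-step bound $\|x_{n+1}-\gt\|_2 \leq \delta\beta\,\|x_n-\gt\|_2$, and iterating from $x_0$ produces~\eqref{eq:res_linear_rate}. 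The final assertion is then immediate: if $\delta\beta<1$ then $(\delta\beta)^n\to 0$, so $\|x_n-\gt\|_2\to 0$ at the linear rate $r=\delta\beta$ in the sense of~\eqref{eq:linear_rate}.

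This argument is essentially a direct calculation, so I do not expect a genuine obstacle; the points requiring care are bookkeeping rather than difficulty. The first is verifying that $\gt$ is a fixed point, which rests on the implication $\Sigma\subset\Fix(P_\Sigma)$. The second is the set-valued subtlety: since $P_\Sigma$ may be multivalued, the identity for $x_{n+1}-\gt$ and both subsequent inequalities must hold for an arbitrary selection $u\in P_\Sigma(x_n)$, which is exactly what the set-valued form of the restricted Lipschitz property (together with the paper's abuse-of-notation convention) guarantees.
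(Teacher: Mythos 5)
Your proposal is correct and follows essentially the same route as the paper's proof: rewrite $x_{n+1}-\gt = (I-\mu A^HA)(P_\Sigma(x_n)-\gt)$ using $y=A\gt$, bound this by $\delta\|P_\Sigma(x_n)-\gt\|_2$ via the RIC on the secant set, then by $\delta\beta\|x_n-\gt\|_2$ via the restricted Lipschitz property, and conclude by induction. The extra remarks about $\gt$ being a fixed point and the set-valued selection are harmless bookkeeping that the paper handles implicitly through its abuse-of-notation convention.
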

\begin{proof}
Let $n\geq 0$. With $y = A \gt$ and the definition of the RIC  $ \delta=\delta_\Sigma(\mu A^HA) <1$, using the fact that  $P_\Sigma(x_n)-\gt \in \Sigma-\Sigma$ by definition of $P_\Sigma$ and $\gt$, we have 
\begin{equation}
\begin{split}
	\|x_{n+1}-\gt\|_2 &= \|P_\Sigma(x_n) - \mu A^H(A P_\Sigma(x_n)-y) -\gt\|_2 \\
	&=  \| (I- \mu A^HA) (P_\Sigma(x_n)-\gt)\|_2 \\
	&\leq \delta\|P_\Sigma(x_n)-\gt\|_2.
\end{split}
\end{equation}
As $\gt \in \Sigma$, we use the restricted Lipschitz property of $P_\Sigma$  and deduce
\begin{equation}
	\begin{split}
		\|x_{n+1}-\gt\|_2 &\leq \delta\beta\|x_n-\gt\|_2.
		\end{split}
\end{equation}
The conclusion \eqref{eq:res_linear_rate} is obtained by a direct induction.
\end{proof}

This theorem tells us that GPGD guarantees the recovery of low-dimensional models with linear rates under a RIP condition provided a restricted Lipschitz property of the projection. Note that the optimal step size $\mu$ for these guarantees is the optimal rescaling of $A^HA$ for the RIC. 

In this context we can define optimality of GPGD for those guarantees. Optimal GPGD are those whose projection minimizes the restricted Lipschitz constant; it quantifies two properties of the algorithm:
\begin{itemize}
 \item the identifiability  properties of GPGD: if $\delta_\Sigma(\mu A^HA)<\frac{1}{\beta}$, then $x_n \to \gt$;
 \item the rate of convergence: for a fixed operator $A$ and step size $\mu$, the smaller the restricted Lipschitz constant $\beta_\Sigma(P_\Sigma)$, the faster the recovery of $\gt$. 
\end{itemize}

Given these two facts,  we propose a quantitative optimality measure (in terms of convergence) of a generalized projected descent algorithm  parametrized by a projection $P$ with $\beta_\Sigma(P)$.

\begin{definition}[Optimal projection]\label{def:optimal_proj}
We define the optimal projection $P^\star$ for the uniform recovery of a low-dimensional model set $\Sigma$ with generalized projected gradient descent with a uniform linear rate (given by Theorem~\ref{th:gen_convevergence}) as
\begin{equation}
P^\star \in \arg \min_{P \in \Pi_\Sigma} \beta_\Sigma(P)
\end{equation}
where $\Pi_\Sigma$ is the set of (generalized) projections onto $\Sigma$ having a restricted $\beta$-Lipschitz  property.
\end{definition}
In the next Section, we  investigate the  problem of finding optimal projections and focus on the \emph{orthogonal projection}. We notice that it is restricted $2$-Lipschitz for general proximinal sets.  In particular, we show that the orthogonal projection is near-optimal for sparse recovery. For closed homogeneous proximinal  sets, we show that an optimal projection exists (Theorem~\ref{th:existence}). We also will see that our convergence result can  be used to interpret the convergence of a certain class of plug-and-play algorithms for imaging inverse problems with deep priors (Section~\ref{sec:deep_priors}).

\begin{remark} The constant $\beta$ quantifies \emph{global} convergence. Indeed, for $\Sigma$ a finite union of linear subspaces, the orthogonal projection onto $\Sigma$ is locally restricted $1$-Lipschitz while globally restricted $2$-Lipschitz (see Section~\ref{sec:opt_proj}). Hence, the convergence rate parameter is just $\delta(\mu A^T A)$ if GPGD with the orthogonal projection is initialized close enough to the solution. For any finite union of subspaces and GPGD with the orthogonal projection, if we are able to initialize close enough to the solution (identify the linear subspace containing $\hat{x}$), the convergence rate does not depend on the geometry of the model and the restricted Lipschitz constant is locally optimal equal to $1$. Consequently, our notion of optimality is related to  \emph{global} convergence guarantees. 
\end{remark}

\begin{remark}\label{rem:tightness}
A question that naturally arises is the tightness of Theorem~\ref{th:gen_convevergence}. We observe that the rate of convergence should be tight if the worst initialization for the restricted Lipschitz condition of the projection $P_\Sigma$ matches the worst case for the RIC constant of $A$. While in some cases, it should be possible to construct such $A$, the relation between the RIC of $A$ and the properties of $P_\Sigma$ might be more intricate in general.
\end{remark}

\section{On optimal projections for generalized projected  gradient descent}\label{sec:opt_proj}

In this section, we  notice that the orthogonal projection on proximinal sets has  restricted Lipschitz constant $\beta=2$. The context of homogeneous spaces allows us to treat many generalized sparsity models such as group sparsity (without overlap) \cite{beck2019optimization} and sparsity in levels \cite{li2019compressed} and even low-rank recovery~\cite{davenport2016overview}. For sparse recovery, we give an optimality result for the orthogonal projection (which corresponds to iterative hard thresholding).

\subsection{General properties of the restricted Lipschitz constant }

We begin by giving the following technical Lemmas that lead to a worst case for the restricted Lipschitz condition (Lemma~\ref{lem:max_Qc}). The first one reformulates the expression of the restricted Lipschitz constant.

\begin{lemma}[{C}haracterization of the restricted Lipschitz property]\label{lem:charac_lip1}
	Let $u, x \in \Sigma, z\in \bC^N, \beta >0$ and define for any $c>0$:
	\begin{equation}\label{eq:def_Q}
		Q_{c}(u,z,x) := \|u\|_2^2 -c \|z\|_2^2    -2 \re\ls u- cz,  x \rs  +  (1-c) \|x\|_2^2.
	\end{equation}
	Then a generalized  projection $P$ has  restricted Lipschitz constant $\beta > 0 $ (Definition~\ref{def:lip_const}) if and only if
	\begin{equation}\label{eq:conditition_Qb}
		\sup_{z\in\bC^N; \; u \in P(z); \; x\in\Sigma} Q_{\beta^2}(u,z,x) \leq 0.
	\end{equation}
	
\end{lemma}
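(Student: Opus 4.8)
The plan is to turn the restricted $\beta$-Lipschitz inequality into the sign condition \eqref{eq:conditition_Qb} by a direct expansion, relying on the fact that both sides of the defining inequality are nonnegative, so squaring is an equivalence rather than merely an implication. Since the statement quantifies over all admissible triples, it suffices to establish the equivalence for a single fixed triple and then pass to the supremum.

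First I would set $c = \beta^2$ and fix an admissible triple $(u,z,x)$ with $z \in \bC^N$, $u \in P(z)$, $x \in \Sigma$. Because $\beta > 0$ and norms are nonnegative, the bound $\|u-x\|_2 \leq \beta \|z-x\|_2$ holds if and only if $\|u-x\|_2^2 \leq \beta^2 \|z-x\|_2^2$. I would then expand both squared norms through the identity $\|a-b\|_2^2 = \|a\|_2^2 - 2\re\ls a, b\rs + \|b\|_2^2$, where the cross term produces $2\re$ because $\ls a,b\rs + \ls b,a\rs = 2\re\ls a,b\rs$ for the complex inner product. This gives $\|u\|_2^2 - 2\re\ls u, x\rs + \|x\|_2^2$ on the left and $\beta^2\|z\|_2^2 - 2\beta^2\re\ls z, x\rs + \beta^2\|x\|_2^2$ on the right.

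Next I would bring everything to one side and regroup. The two real-part terms combine, using linearity of $\ls\cdot,x\rs$ in the first slot (and that $\beta^2$ is real, so $\beta^2\ls z,x\rs = \ls cz, x\rs$), into $-2\re\ls u - cz, x\rs$; the norm-of-$x$ terms give $(1-\beta^2)\|x\|_2^2 = (1-c)\|x\|_2^2$. The resulting inequality is then exactly $Q_{\beta^2}(u,z,x) \leq 0$ in the notation of \eqref{eq:def_Q}. Hence, for each fixed admissible triple, the Lipschitz bound and the sign condition $Q_{\beta^2}(u,z,x) \leq 0$ are equivalent.

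Finally, the restricted $\beta$-Lipschitz property of Definition~\ref{def:lip_const} is precisely the requirement that the bound hold simultaneously over all admissible $(u,z,x)$, which is the assertion that $\sup_{z,\,u\in P(z),\,x\in\Sigma} Q_{\beta^2}(u,z,x) \leq 0$, i.e. \eqref{eq:conditition_Qb}. I do not anticipate a genuine obstacle: the argument is a bookkeeping expansion, and the only point meriting care is the passage from the complex inner product to its real part in the cross terms. As a sanity check, taking $z=x$ forces $Q_{\beta^2}(u,x,x) = \|u-x\|_2^2 \leq 0$, recovering the remark that $\Sigma \subset \Fix(P)$, though this is not needed for the equivalence itself.
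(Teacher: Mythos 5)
Your proposal is correct and follows essentially the same route as the paper: expand $\|u-x\|_2^2 - \beta^2\|z-x\|_2^2$, identify it with $Q_{\beta^2}(u,z,x)$, and observe that the pointwise equivalence passes to the supremum. The only detail the paper leaves implicit and you make explicit — that squaring the nonnegative quantities is an equivalence — is a welcome but minor clarification.
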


\begin{proof}
	Let $z\in\bC^N,x\in \Sigma$ and $u \in P(z)$. We have:
	\begin{equation}
		\begin{split}
			\|u-x\|_2^2 - \beta^2\|z -x\|_2^2 & = \|u\|_2^2 -\beta^2  \|z\|_2^2    -2 \re\ls u,  x \rs +2 \beta^2 \re\ls z,  x \rs  +  (1-\beta^2) \|x\|_2^2 \\		
			&= \|u\|_2^2 -\beta^2  \|z\|_2^2    -2 \re\ls u- \beta^2 z,  x \rs  +  (1-\beta^2) \|x\|_2^2 \\
			&=  Q_{\beta^2}(u,z,x) .
		\end{split}
	\end{equation}
	We deduce that $	\|u-x\|_2^2 \leq \beta^2\|z -x\|_2^2 $ if and only if $  Q_{\beta^2}(u,z,x) \leq 0$ and $P$ has restricted Lipschitz constant $\beta$ if and only if \eqref{eq:conditition_Qb} is verified. 
\end{proof}

In the general setting of proximinal sets $\Sigma$ (Definition~\ref{def:orth_proj}), we can maximize the function $Q_{\beta^2}(u,z,x)$ with respect to $x\in \Sigma$. If $\Sigma$ is a finite union of subspaces it is immediate that $P_\Sigma^\perp$ exists as it is the minimum of the finite number of projections onto the individual subspaces. For sparse recovery, i.e. $\Sigma_k = \{x \in \bC^N : \|x\|_0\leq k \}$, $P_\Sigma^\perp$ is the hard-thresholding operator. For some other low-dimensional models such as low-rank models, the union is infinite but the orthogonal projection still exists (singular value thresholding). We recall some properties of the orthogonal projection onto union of subspaces.
\begin{lemma}\label{lem:prop_orth_proj}
	Suppose $\Sigma$ is a proximinal homogeneous set. Then  for all $z \in \bC^N$,
	\begin{enumerate}
		\item there exists a linear subspace $W \subset \Sigma$ such that   $P_W^\perp(z) \in P_\Sigma^\perp(z)$ ;
		\item for all linear subspaces $V \subset \Sigma$, $\|P_W^\perp(z) - z\|_2^2 \leq \|P_V^\perp(z)-z\|_2^2$;
		\item for all linear subspaces $V \subset \Sigma$, $\|P_W^\perp(z)\|_2^2 \geq \|P_V^\perp(z)\|_2^2$;
		\item  $ \re\ls P_W^\perp(z),P_W^\perp(z) -z\rs =0  $.
		\item  $	P_\Sigma^\perp$ is homogeneous: for $\lambda \in  \bR$, $	P_\Sigma^\perp(\lambda z) = \lambda  P_\Sigma^\perp(z).$
		
	\end{enumerate}
	
\end{lemma}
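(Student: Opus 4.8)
The plan is to establish the five items essentially in the order listed, treating item~1 as the structural statement from which the remaining items follow by standard Hilbert-space geometry. First I would pick any $w \in P_\Sigma^\perp(z)$, which exists because $\Sigma$ is proximinal, and set $W := \bC w = \{\lambda w : \lambda \in \bC\}$; homogeneity of $\Sigma$ gives $W \subset \Sigma$, and $W$ is a linear subspace. Since $W \subset \Sigma$, minimizing $\|x-z\|_2$ over $W$ cannot beat minimizing over $\Sigma$, so $\|w-z\|_2 \leq \|P_W^\perp(z)-z\|_2$; but $w \in W$ forces the reverse inequality, whence $w$ is itself a minimizer over $W$. By uniqueness of the metric projection onto a subspace, $w = P_W^\perp(z)$, which proves item~1 with $P_W^\perp(z) = w \in P_\Sigma^\perp(z)$.

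Items~2 and~4 are then immediate, and item~3 is a consequence of them. For item~2, any subspace $V \subset \Sigma$ has $P_V^\perp(z) \in V \subset \Sigma$, so global optimality of $w = P_W^\perp(z)$ over $\Sigma$ yields $\|P_W^\perp(z)-z\|_2 \leq \|P_V^\perp(z)-z\|_2$. For item~4, I would invoke the orthogonality characterization of projection onto a subspace: $P_W^\perp(z)-z$ is orthogonal to every vector of $W$, in particular to $P_W^\perp(z) \in W$, which after taking real parts and using conjugate symmetry of the inner product gives $\re\ls P_W^\perp(z), P_W^\perp(z)-z\rs = 0$. Item~3 then follows by writing the Pythagorean identity $\|z\|_2^2 = \|P_W^\perp(z)\|_2^2 + \|z - P_W^\perp(z)\|_2^2$, and the analogous identity for $V$ (both valid by the orthogonality of item~4), and combining with item~2: maximizing the projection norm is equivalent to minimizing the residual.

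For item~5, I would dispatch $\lambda = 0$ first, noting that homogeneity gives $0 \in \Sigma$ so that $P_\Sigma^\perp(0) = \{0\} = 0 \cdot P_\Sigma^\perp(z)$. For $\lambda \neq 0$, the substitution $x \mapsto x/\lambda$ preserves membership in $\Sigma$ by homogeneity and scales the objective, since $\|x-\lambda z\|_2 = |\lambda|\,\|x/\lambda - z\|_2$; hence $x$ minimizes $\|\cdot - \lambda z\|_2$ over $\Sigma$ if and only if $x/\lambda$ minimizes $\|\cdot - z\|_2$ over $\Sigma$, which is exactly the set identity $P_\Sigma^\perp(\lambda z) = \lambda P_\Sigma^\perp(z)$ (the argument in fact works for all $\lambda \in \bC\setminus\{0\}$). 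The only genuinely non-mechanical step is item~1: the key observation is that homogeneity lets one replace the possibly complicated set $\Sigma$ by the line through any optimal point, on which the metric projection is unique and must coincide with that optimal point. All the remaining items are then routine projection geometry together with this scaling argument.
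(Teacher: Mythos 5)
Your proposal is correct and follows essentially the same route as the paper: the paper also proves item~1 by taking a linear subspace $W\subset\Sigma$ containing $\tspan(u)$ for $u\in P_\Sigma^\perp(z)$ (your choice $W=\bC w$ is exactly the minimal such subspace, available by homogeneity) and observing that optimality over $\Sigma$ forces $P_W^\perp(z)=u$, then dismisses items~2--5 as standard properties of projection onto a subspace. Your write-up merely makes those standard steps (global optimality, orthogonality, Pythagoras, and the scaling argument for item~5) explicit.
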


\begin{proof}
	Let $u \in P_\Sigma^\perp(z)$. Take any  linear subspace $W \subset \Sigma$ such that $\tspan(u) \subset W$. By definition of the orthogonal projection, $\|u-z\|_2 \leq \|w-z\|_2$ for any $w \in W \subset \Sigma$, hence $P_W^\perp(z) =u$. The other properties are direct properties of the orthogonal projection onto a linear subspace.
\end{proof}

We give the following Lemma that explicitly calculates the maximization over $x\in\Sigma$ in condition~\eqref{eq:conditition_Qb}.

\begin{lemma}\label{lem:max_Qc}
	Let $\Sigma$ be a homogeneous proximinal set. Let  $z\in \bC^N, u \in \Sigma$. We have the following properties.
	\begin{itemize}
		\item If $c > 1$, let
		\begin{equation}
			x^\star \in P_\Sigma^{\perp} \left(\frac{u- c z}{1-c}\right).
		\end{equation}
		Then
		\begin{equation}
			Q_c(u,z,x^\star) = R_c(u,z) :=  \max_{x \in\Sigma} Q_c(u,z,x),
		\end{equation}
		where $Q_c$ is defined in Lemma~\ref{lem:charac_lip1}.
		We have  the following expressions of the maximum:
		\begin{equation} \label{eq:expr_Qc_opt}
			\begin{split}
				Q_c(u,z,x^\star)
				&=\|u\|_2^2 -c  \|z\|_2^2    +(c-1) \|  P_\Sigma^{\perp} \left(\frac{ u-cz }{1-c} \right)\|_2^2. \\
				%  &= \frac{c}{c-1}\|u-z\|_2^2- (c-1) \|P_\Sigma^{\perp} \left(y \right)-y\|_2^2.\\
			\end{split}
		\end{equation}
		%  where  $y =  u- \frac{  c }{c-1}(u-z)$.
		
		\item If $c= 1$, $Q_c(u,z,x)$ is upper bounded with respect to $x$ if and only if for all $x \in \Sigma$, $ \re\ls u- z,  x \rs  = 0$. In this case, for all $x \in \Sigma$,
		\begin{equation}
			\begin{split}
				Q_c(u,z,x) &=\|u\|_2^2 -  \|z\|_2^2.
			\end{split}
		\end{equation}
	\end{itemize}
\end{lemma}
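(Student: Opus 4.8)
The plan is to split into the two cases according to the sign of the coefficient $(1-c)$ multiplying $\|x\|_2^2$ in $Q_c$, since this is exactly what decides whether the maximization over $x\in\Sigma$ is a well-posed (strictly concave) problem or a degenerate linear one.

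For the case $c>1$, I would first isolate the $x$-dependent part of $Q_c(u,z,x)$, namely $-2\re\ls u-cz,x\rs + (1-c)\|x\|_2^2$. Since $c\neq 1$ I can write $u-cz=(1-c)w$ with $w:=\frac{u-cz}{1-c}$ and complete the square, rewriting this part as $(1-c)\big(\|x-w\|_2^2-\|w\|_2^2\big)$. Because $1-c<0$, maximizing over $x\in\Sigma$ is equivalent to \emph{minimizing} $\|x-w\|_2^2$, which by Definition~\ref{def:orth_proj} and proximinality of $\Sigma$ is attained precisely on $P_\Sigma^\perp(w)=P_\Sigma^{\perp}\!\big(\tfrac{u-cz}{1-c}\big)$. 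This both identifies $x^\star$ and shows it realizes the maximum $R_c(u,z)$.

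Next I would evaluate $Q_c(u,z,x^\star)$ in closed form. Using $u-cz=(1-c)w$ gives $\re\ls u-cz,x^\star\rs=(1-c)\re\ls w,x^\star\rs$. The key input is the orthogonality relation of Lemma~\ref{lem:prop_orth_proj}(4): writing $x^\star=P_W^\perp(w)$ for a subspace $W\subset\Sigma$ (as produced in the proof of Lemma~\ref{lem:prop_orth_proj}(1)), we have $\re\ls x^\star,x^\star-w\rs=0$, i.e. $\re\ls w,x^\star\rs=\|x^\star\|_2^2$. Substituting, the cross term contributes $-2(1-c)\|x^\star\|_2^2$, which combines with the $(1-c)\|x^\star\|_2^2$ term to give $\|u\|_2^2-c\|z\|_2^2+(c-1)\|x^\star\|_2^2$, i.e. exactly~\eqref{eq:expr_Qc_opt}. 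For the case $c=1$, the quadratic term vanishes and $Q_1(u,z,x)=\|u\|_2^2-\|z\|_2^2-2\re\ls u-z,x\rs$ is real-affine in $x$; here I would invoke homogeneity of $\Sigma$: if some $x_0\in\Sigma$ had $\re\ls u-z,x_0\rs\neq0$, then taking $\lambda x_0\in\Sigma$ with real $\lambda\to\pm\infty$ makes $Q_1$ unbounded above, while if $\re\ls u-z,x\rs=0$ for all $x\in\Sigma$ the value is the constant $\|u\|_2^2-\|z\|_2^2$, giving the stated dichotomy.

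The computations themselves are routine, so the only point requiring care is the closed-form evaluation for $c>1$: I must justify that the orthogonality relation of Lemma~\ref{lem:prop_orth_proj}(4), stated for projections onto \emph{subspaces}, transfers to $P_\Sigma^\perp$ on the homogeneous set. This rests on the fact, established in the proof of Lemma~\ref{lem:prop_orth_proj}, that every element of $P_\Sigma^\perp(w)$ coincides with $P_W^\perp(w)$ for a suitable subspace $W\subset\Sigma$, so the identity holds for every selection of the set-valued projection; this is what lets me use the abuse-of-notation convention for $P_\Sigma^\perp$ consistently throughout.
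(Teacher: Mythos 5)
Your proposal is correct and follows essentially the same route as the paper's proof: completing the square in $x$ to reduce the maximization to minimizing $\|x-\tfrac{u-cz}{1-c}\|_2^2$ over $\Sigma$, then using the orthogonality relation $\re\ls w,P_\Sigma^\perp(w)\rs=\|P_\Sigma^\perp(w)\|_2^2$ from Lemma~\ref{lem:prop_orth_proj} to obtain the closed form, and treating $c=1$ as the affine case. Your explicit use of homogeneity (scaling $x_0$ by $\lambda\to\pm\infty$) to justify the $c=1$ dichotomy is a detail the paper leaves implicit, but the argument is the same.
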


\begin{proof}
	If $c=1$,  $Q_c(u,z,x) = \|u\|_2^2 -\|z\|_2^2 - 2\re \ls u-z,x \rs$ is an affine function of $x\in \Sigma$. Hence, it is upper bounded if and only if  for all $x \in \Sigma$, $ \re \ls u- z,  x \rs  = 0$.
	
	Now let $c>1$. First notice that $Q_c(u,z,x)$ is a quadratic form with respect to $x$ with a negative leading coefficient. Thus  $\sup_{x \in\Sigma} Q_c(u,z,x)< +\infty$. By definition (Equation~\eqref{eq:def_Q}), we have
	\begin{equation}
		\begin{split}
			Q_c(u,z,x) &= -2 \re\ls u- c z,  x \rs  +  (1-c) \|x\|_2^2  + C \\
			&= (1-c) \left\|x - \frac{ u- c z}{1-c} \right\|_2^2 +C'
		\end{split}
	\end{equation}
	where $C,C'$ are constants that do not depend on $x$. Removing the constant terms, we have that maximizing $Q_c$ is equivalent to maximizing $\tilde{Q}_c$ on $\Sigma$, with
	\begin{equation}
		\tilde{Q}_c (x) := (1-c) \left\|x - \frac{ u- c z}{1-c}\right\|_2^2.
	\end{equation}
	As $1-c<0$, this is exactly the minimization of $\|x - \frac{ u- c z}{1-c}\|_2^2$ with respect to $x \in \Sigma$ which yields (as the orthogonal projection onto $\Sigma$ was supposed to exist) the optimal
	
	\begin{equation}
		x^\star =P_\Sigma^{\perp} \left(\frac{ u- c z}{1-c}\right).
	\end{equation}
	
	We  deduce, using the definition of $Q_c$ (Equation~\eqref{eq:def_Q}), that for any $z \in \bC^N$,
	\begin{equation}
		\begin{split}
			Q_c(u,z,x^\star) &=\|u\|_2^2 -c  \|z\|_2^2    -2 \re\ls u- c z,  P_\Sigma^{\perp} \left(\frac{ u- c z}{1-c}\right)\rs  +  (1-c) \|P_\Sigma^{\perp} \left(\frac{ u- c z}{1-c}\right)\|_2^2 \\
			&=\|u\|_2^2 -c  \|z\|_2^2    -2(1-c) \re\ls \frac{u- c z}{1-c},  P_\Sigma^{\perp} \left(\frac{ u- c z}{1-c}\right)\rs  +  (1-c) \|P_\Sigma^{\perp} \left(\frac{ u- c z}{1-c}\right)\|_2^2.\\
		\end{split}
	\end{equation}
	With Lemma~\ref{lem:prop_orth_proj}, for any $w \in \bC^N$, $\re\ls P_\Sigma^\perp(w),w \rs = %\ls P^\perp(z),z \rs =
	\|P_\Sigma^\perp(w)\|_2^2 + \re\ls P_\Sigma^\perp(w),w - P_\Sigma^\perp(w)\rs  = \|P_\Sigma^\perp(w)\|_2^2$ (the projection direction is orthogonal to the projection). We deduce
	
	\begin{equation}
		\begin{split}
			Q_c(u,z,x^\star) &=R_c(u,z)=\|u\|_2^2 -c  \|z\|_2^2    -2(1-c) \|  P_\Sigma^{\perp} \left(\frac{ u- c z}{1-c}\right)\|_2^2 +  (1-c) \|P_\Sigma^{\perp} \left(\frac{ u- c z}{1-c}\right)\|_2^2 \\
			&=\|u\|_2^2 -c  \|z\|_2^2    +(c-1) \|  P_\Sigma^{\perp} \left(\frac{ u- c z}{1-c}\right)\|_2^2. \\
			%  &=\|u\|_2^2 -c  \|z\|_2^2    +(c-1) \|  P_\Sigma^{\perp} \left(u + \frac{  c }{1-c}(u -z ) \right)\|_2^2. \\
		\end{split}
	\end{equation}
	This is exactly conclusion~\eqref{eq:expr_Qc_opt}. 	
\end{proof}

In the hypotheses of Lemma~\ref{lem:max_Qc}, the study of the function $R_{\beta^2}(P(z),z)$ suffices to study the restricted Lipschitz constant of a projection $P$.
The next Lemma  shows in  the case of linear subspaces that the best restricted Lipschitz constant $\beta =1$ is reached by the orthogonal projection.

\begin{lemma}\label{lem:lip_const_subspace}
	Let $\Sigma = V \subset \bC^N$ be a linear subspace. Then $P_V^\perp$ has the restricted $1$-Lipschitz property.
\end{lemma}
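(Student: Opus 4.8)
The plan is to establish the restricted $1$-Lipschitz inequality $\|P_V^\perp(z) - x\|_2 \leq \|z - x\|_2$ for every $z \in \bC^N$ and every $x \in V$ by a direct Pythagorean argument, exploiting the two defining features of the orthogonal projection onto a (complex) linear subspace: that $P_V^\perp(z) \in V$, and that the residual $z - P_V^\perp(z)$ is orthogonal to all of $V$ for the Hermitian inner product.

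First I would recall, or quickly derive from the first-order optimality condition for $\min_{w \in V}\|z-w\|_2$, the orthogonality relation $\ls z - P_V^\perp(z), w\rs = 0$ for all $w \in V$; this strengthens item 4 of Lemma~\ref{lem:prop_orth_proj} to the whole subspace rather than just the projection direction. Then, fixing $x \in V$, I would split $z - x = (z - P_V^\perp(z)) + (P_V^\perp(z) - x)$ and observe that the second summand lies in $V$, since $V$ is a linear subspace and both $P_V^\perp(z)$ and $x$ belong to it, so it is orthogonal to the first summand. Applying the Pythagorean identity, valid for the Hermitian inner product once the cross term $2\re\ls \cdot, \cdot\rs$ vanishes, gives $\|z - x\|_2^2 = \|z - P_V^\perp(z)\|_2^2 + \|P_V^\perp(z) - x\|_2^2$, and dropping the nonnegative first term yields $\|P_V^\perp(z) - x\|_2 \leq \|z - x\|_2$, which is exactly the restricted $1$-Lipschitz property. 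Since the general lower bound $\beta \geq 1$ was already observed, this value is also optimal.

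There is essentially no deep obstacle here; the only point requiring a little care is working over $\bC$ rather than $\bR$, so that one must confirm the orthogonality carries through with the Hermitian inner product and the real-part cross term in the expansion of $\|(z - P_V^\perp(z)) + (P_V^\perp(z) - x)\|_2^2$ indeed vanishes (which it does, since $P_V^\perp(z) - x \in V$).

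Alternatively, I could route the argument entirely through the machinery just set up: apply Lemma~\ref{lem:charac_lip1} with $c = \beta^2 = 1$ and invoke the $c = 1$ case of Lemma~\ref{lem:max_Qc}. Its boundedness condition requires precisely $\re\ls P_V^\perp(z) - z, x\rs = 0$ for all $x \in V$, which is the orthogonality relation above, and then reduces $Q_1(P_V^\perp(z),z,x)$ to $\|P_V^\perp(z)\|_2^2 - \|z\|_2^2$; a further application of Pythagoras gives $\|P_V^\perp(z)\|_2 \leq \|z\|_2$, so this quantity is $\leq 0$, and Lemma~\ref{lem:charac_lip1} delivers the same conclusion $\beta_V(P_V^\perp) = 1$.
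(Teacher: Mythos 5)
Your proposal is correct and covers both of the paper's arguments, just in the opposite order: your second route (applying Lemma~\ref{lem:charac_lip1} with $c=1$ and Lemma~\ref{lem:max_Qc} to get $Q_1(P_V^\perp(z),z,x)=\|P_V^\perp(z)\|_2^2-\|z\|_2^2\leq 0$) is exactly the paper's main proof, while your primary Pythagorean decomposition is a spelled-out version of the paper's one-line alternate remark that orthogonal projection onto a closed convex set is $1$-Lipschitz. No gaps; the orthogonality of the residual to all of $V$ over $\bC$ is handled correctly.
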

\begin{proof}
	Let $z \in \bC^N, x \in V$. We have $\re\ls P_V^\perp(z)-z, x \rs =  0 $ (because $P_V^\perp(z)-z \perp V$).  Hence, with Lemma~\ref{lem:max_Qc}, we have  $Q_1(P_V^\perp(z),z,x^\star)=\|P_V^\perp(z)\|_2^2 -\|z\|_2^2 = -\|P_V^\perp(z)-z\|_2^2  \leq 0$. With Lemma~\ref{lem:charac_lip1}, $P_V^\perp$ is restricted $1$-Lipschitz.
	
	An alternate proof is to remark that this Lemma is a direct consequence of the fact that the orthogonal projection onto a convex closed subset is globally $1$-Lipschitz.
\end{proof}

The best possible restricted Lipschitz constant $\beta=1$ is never reached when  $\Sigma \varsubsetneq \bC^N$  and $\tspan(\Sigma)=\bC^N$ (e.g in the sparse or low-rank case). Hence, a constant $c>1$ is expected in challenging cases.
\begin{lemma}	%can also be a corollary
	Let $\Sigma \subset \bC^N$ be a homogeneous set such that  $\Sigma \neq \bC^N$ and $\tspan(\Sigma)=\bC^N$. For any projection $P$ onto $\Sigma$, $\beta_{\Sigma}(P)  >1$.\\
	%\cmt{Alternative plus générale: Let $\Sigma \subset \bC^N$ be a union of subspaces not reduced to a single linear subspace, i.e. $\Sigma\neq\text{span}(\Sigma)$. For any projection $P$ onto $\Sigma$, we have $\beta_{\Sigma}(P)  >1$. ($\rightarrow$ Il faut juste modifié $\bC^N$ par dans la demonstration par $span(\Sigma))$}
\end{lemma}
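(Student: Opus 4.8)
The plan is to argue by contradiction: I will assume $\beta_\Sigma(P)\le 1$ and derive that $\Sigma=\bC^N$, contradicting the hypothesis $\Sigma\neq\bC^N$. Since the restricted $\beta$-Lipschitz property only weakens as $\beta$ grows, the assumption $\beta_\Sigma(P)\le 1$ means that $P$ is restricted $\beta_\Sigma(P)$-Lipschitz (attainment is addressed below) and hence, a fortiori, restricted $1$-Lipschitz. I would then feed this into the characterization of Lemma~\ref{lem:charac_lip1} applied with $\beta=1$, that is $c=\beta^2=1$, which gives $\sup_{z,\,u\in P(z),\,x\in\Sigma} Q_{1}(u,z,x)\le 0$. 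In particular, for each fixed $z\in\bC^N$ and $u\in P(z)$, the affine map $x\mapsto Q_1(u,z,x)$ must be bounded above on $\Sigma$. By the $c=1$ case of Lemma~\ref{lem:max_Qc}, this boundedness is exactly equivalent to the orthogonality condition $\re\ls u-z,x\rs=0$ for every $x\in\Sigma$.

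The next step is to upgrade this real orthogonality to genuine orthogonality using homogeneity. Since $\Sigma$ is homogeneous, $ix\in\Sigma$ whenever $x\in\Sigma$, so we also obtain $\re\ls u-z,ix\rs=0$; combining the two equalities (which recover the real and imaginary parts of $\ls u-z,x\rs$) yields $\ls u-z,x\rs=0$ for all $x\in\Sigma$. Because $\tspan(\Sigma)=\bC^N$, the vector $u-z$ is then orthogonal to all of $\bC^N$, forcing $u=z$. Thus every $u\in P(z)$ equals $z$, and since $P(z)\subset\Sigma$ is nonempty this gives $z\in\Sigma$ for every $z\in\bC^N$, i.e. $\Sigma=\bC^N$ — the desired contradiction. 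Hence $\beta_\Sigma(P)>1$.

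\textbf{Main obstacle.} The only genuinely delicate point is the passage from ``$P$ is not restricted $1$-Lipschitz'' to the strict inequality $\beta_\Sigma(P)>1$: one must be sure that $\beta_\Sigma(P)$ is attained, so that $\beta_\Sigma(P)\le 1$ really does make $P$ restricted $1$-Lipschitz. This holds because the set of admissible constants is upward closed and closed under limits (the defining inequality $\|u-x\|_2\le\beta\|z-x\|_2$ passes to the limit in $\beta$), hence equals $[\beta_\Sigma(P),\infty)$ with $\beta_\Sigma(P)$ its minimum. A secondary point of care is the complex inner-product convention in the homogeneity step, but multiplying $x$ by a suitable unit-modulus scalar recovers both parts of $\ls u-z,x\rs$ regardless of the convention. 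Everything else is routine; alternatively one can bypass Lemma~\ref{lem:max_Qc} by expanding $\|u-x\|_2^2\le\|z-x\|_2^2$ directly and letting $x$ range over $\lambda x$, $\lambda\in\bC$, which sends the right-hand side to $-\infty$ unless $\ls u-z,x\rs=0$.
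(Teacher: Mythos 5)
Your proof is correct and follows essentially the same route as the paper's: both negate the claim, invoke the $c=1$ case of Lemma~\ref{lem:max_Qc} to force $\re\ls P(z)-z,x\rs=0$ for all $x\in\Sigma$, extend this over $\tspan(\Sigma)=\bC^N$, and conclude $P(z)=z$, contradicting $\Sigma\neq\bC^N$. Your two extra points of care — attainment of $\beta_\Sigma(P)$ so that $\beta_\Sigma(P)\le 1$ genuinely yields the restricted $1$-Lipschitz property, and the use of homogeneity ($ix\in\Sigma$) to pass from real to full orthogonality — are details the paper glosses over, and they are handled correctly.
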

\begin{proof}
	By contradiction, assume that $P:\bC^N\rightarrow \Sigma$  has the $1$-Lipschitz property. Let $z \in\bC^N \setminus \Sigma$. From Lemma \ref{lem:max_Qc}, $Q_1(P(z), z, x)$ is upper bounded with respect to $x$ only if for all $x\in \Sigma$, $\re\langle P(z) - z, x\rangle =0 $. However, since $\text{Span}(\Sigma)=\bC^N$, this extends to $ \forall w\in \mathbb C^N$: just write $w= \sum_i \lambda_i x_i$ with $x_i \in\Sigma$. Then
	\begin{equation}
		\re\langle P(z)-z, w\rangle=\re\langle P(z)-z, \sum_i \lambda_i x_i\rangle=\sum_i \lambda_i\re\langle P(z)-z,  x_i\rangle =0.
	\end{equation}
	Take $w =P(z)-z$, we deduce that $\|P(z)-z\|_2^2 = 0$ and  $ z=P(z) \in \Sigma$. Since we supposed $z\notin \Sigma$, we reach a contradiction.
\end{proof}

In the following, we notice that the orthogonal projection always has a restricted Lipshitz property for proximinal sets.

\begin{lemma}\label{th:lip_const_orth_proj} Let $\Sigma$ be a proximinal set. Then  $P_\Sigma^\perp$ has restricted Lipschitz constant  $\beta = 2$.
\end{lemma}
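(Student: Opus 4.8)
The plan is to show that $\beta = 2$ is an admissible restricted Lipschitz constant, i.e.\ that $\beta_\Sigma(P_\Sigma^\perp) \le 2$, via a direct triangle-inequality argument exploiting the variational definition of the orthogonal projection. Fix an arbitrary $z \in \bC^N$, a point $x \in \Sigma$, and any $u \in P_\Sigma^\perp(z)$; by Definition~\ref{def:lip_const} it suffices to bound $\|u - x\|_2$ by $2\|z - x\|_2$ uniformly over these choices.

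The key observation is the minimality that defines $P_\Sigma^\perp$: since $u \in \arg\min_{w \in \Sigma}\|w - z\|_2$ and $x \in \Sigma$ is admissible in this minimization, we immediately get $\|u - z\|_2 \le \|x - z\|_2$. First I would record this inequality, and then combine it with the triangle inequality
\[
 \|u - x\|_2 \le \|u - z\|_2 + \|z - x\|_2 \le \|x - z\|_2 + \|z - x\|_2 = 2\|z - x\|_2 .
\]
Since $z$, $x$ and $u$ were arbitrary, this establishes that $P_\Sigma^\perp$ has the restricted $2$-Lipschitz property, hence $\beta_\Sigma(P_\Sigma^\perp) \le 2$.

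An alternative route, should one prefer to stay within the $Q_c$/$R_c$ machinery, is to invoke the characterization of Lemma~\ref{lem:charac_lip1} with $\beta = 2$ and check that the supremum of $Q_4(u,z,x)$ over $x \in \Sigma$ is nonpositive, using the closed-form maximizer from Lemma~\ref{lem:max_Qc}; this reduces matters to verifying $R_4(u,z) \le 0$, but it is heavier and far less transparent than the triangle inequality. I expect essentially no obstacle here: the only subtlety is that $P_\Sigma^\perp$ may be set-valued, so the estimate must hold for \emph{every} $u \in P_\Sigma^\perp(z)$, which the argument above already accommodates since the minimality bound $\|u - z\|_2 \le \|x - z\|_2$ is valid for any such selection $u$. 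Note also that the argument uses only proximinality of $\Sigma$ and not homogeneity, consistent with the stated generality.
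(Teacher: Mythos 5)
Your argument is correct and is essentially identical to the paper's own proof: both use the minimality property $\|u-z\|_2 \le \|x-z\|_2$ of the orthogonal projection together with the triangle inequality to obtain $\|u-x\|_2 \le 2\|z-x\|_2$. Your additional remarks on set-valuedness and on not needing homogeneity are accurate but do not change the substance.
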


\begin{proof}
	Let $z\in \bC^N, u \in P_\Sigma(z)$ and $x \in \Sigma $. With the triangle inequality, we have 
	\begin{equation}
		\|u-x\|_2 = \|u-z +z-x\|_2\leq \|u-z\|_2 + \|z-x\|_2.
	\end{equation}
	By definition of the orthogonal projection, as $u \in P_\Sigma(z)$ and $x\in\Sigma$, we have  $\|u-z\|_2\leq \|x-z\|_2$ and 
	\begin{equation}
		\|u-x\|_2 \leq \|x-z\|_2 + \|z-x\|_2 = 2 \|z-x\|_2.
	\end{equation}
\end{proof}

With Theorem~\ref{th:gen_convevergence}, this result shows the global linear convergence of projected gradient descent with orthogonal projection for a large class of low-dimensional models. Note as well that compressive measurements with RIC close to $0$ can be built for general low-dimensional models~\cite{Puy_2015}.  For more particular model sets such as sparse models, we can give a better estimation of the restricted Lipschitz constant which leads to an optimality result (when considering a collection of models) of the orthogonal projection. 

\begin{remark}  The upper bound on the restricted $\beta$-Lipschitz constant from Lemma \ref{th:lip_const_orth_proj} cannot be improved  for homogeneous sets. Consider  $\Sigma = V_{\varepsilon} \cup W_{\varepsilon} \subset \bR^2$, with  $\varepsilon >0$, $V_\epsilon = \{\alpha\cdot(1,\varepsilon)\in\bR^2\ |\ \alpha\in \bR \}$ and $W_\epsilon = \{\alpha\cdot(1,-\varepsilon)\in\bR^2\ |\ \alpha\in \bR \}$. Let $z=(1,0)$. Then, we have
	\begin{equation*}
		\begin{split}
			P_{V_{\varepsilon}}^\perp(z) &= \underset{w\in V_\varepsilon}{\arg\min}\|z - w\|^2 = \underset{\substack{w = \alpha\cdot (1,\varepsilon)^T \\ \alpha \in \bR}}{\arg\min}\ \|(1 , 0) -\alpha\cdot(1,\varepsilon)\|^2 \\
			&= \underset{\substack{w = \alpha\cdot (1,\varepsilon)^T \\ \alpha \in \bR}}{\arg\min}\  (\alpha - 1)^2 + \alpha^2 \varepsilon^2 = \frac{1}{1+\varepsilon^2}(1 , \varepsilon)
		\end{split}
	\end{equation*}
	and through the same computations, we have 
	\begin{equation*}
		P_{W_{\varepsilon}}^\perp(z) = \frac{1}{1+\varepsilon^2}(1 , -\varepsilon).
	\end{equation*}
	Then,
	\begin{equation*}
			\|P_{V_{\varepsilon}}^\perp(z) - z\|_2 = \|P_{W_{\varepsilon}}^\perp(z) - z\|_2 = \sqrt{\left(1-\frac{1}{1+\varepsilon^2}\right)^2 + \left(\frac{\pm\varepsilon}{1+\varepsilon^2}\right)^2} 
			= \sqrt{\frac{\varepsilon^4 + \varepsilon^2}{(1+\varepsilon^2)^2}} 
			= \frac{\varepsilon}{\sqrt{1+\varepsilon^2}}
	\end{equation*}
	and thus $P_\Sigma^{\perp}(z) = \{P_{V_{\varepsilon}}^\perp(z), P_{W_{\varepsilon}}^\perp(z)\}$. Moreover, we have
	\begin{equation*}
		\|P_{V_{\varepsilon}}^\perp(z) - P_{W_{\varepsilon}}^\perp(z)\|_2= \frac{1}{1+\varepsilon^2} \|(1 , \varepsilon) - (1 ,  -\varepsilon)\|_2= \frac{2\varepsilon}{1+\varepsilon^2}.
	\end{equation*}
	Consequently,
	\begin{equation*}
	 \beta_\Sigma(P_\Sigma^\perp) \geq 	\frac{\|P_{V_{\varepsilon}}^\perp(z) - P_{W_{\varepsilon}}^\perp(z)\|_2}{\|z- P_{W_{\varepsilon}}^\perp(z)\|_2} = \frac{2}{\sqrt{1+\varepsilon^2}} \underset{\varepsilon\rightarrow 0}{\longrightarrow} 2. 
	\end{equation*}
\end{remark}

For closed homogeneous proximinal sets, we can prove the existence of an optimal projection. We first give the following result that allows us to use a continuity argument.
\begin{lemma}\label{lem:exist0b}
	Suppose that $\Sigma$ is a proximinal homogeneous set. Then the function $z \mapsto \|P_\Sigma^\perp(z)\|_2^2$ is continuous.
\end{lemma}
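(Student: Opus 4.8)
The plan is to reduce the claim to the continuity of two elementary functions of $z$ by way of a Pythagorean identity. The starting point is the orthogonality relation in Lemma~\ref{lem:prop_orth_proj}: for any $z\in\bC^N$ and any $u\in P_\Sigma^\perp(z)$ (which, as noted in the proof of that Lemma, equals $P_W^\perp(z)$ for $W=\tspan(u)$), we have $\re\ls u, u-z\rs = 0$. Expanding $\|z\|_2^2 = \|u + (z-u)\|_2^2$ and using $\re\ls z-u, u\rs = -\re\ls u-z,u\rs = 0$, I would obtain the identity
\begin{equation}
 \|u\|_2^2 = \|z\|_2^2 - \|u-z\|_2^2 = \|z\|_2^2 - d(z,\Sigma)^2,
\end{equation}
where $d(z,\Sigma) = \inf_{x\in\Sigma}\|z-x\|_2$. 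The crucial consequence is that the right-hand side does not depend on the particular choice of $u\in P_\Sigma^\perp(z)$: every orthogonal projection of $z$ realizes the same distance $d(z,\Sigma)$, so $\|P_\Sigma^\perp(z)\|_2^2$ is a genuine (single-valued) function of $z$ equal to $\|z\|_2^2 - d(z,\Sigma)^2$, notwithstanding the possible set-valuedness of $P_\Sigma^\perp$.

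Once this representation is in hand, continuity follows from standard facts. The map $z\mapsto\|z\|_2^2$ is continuous, and the distance function $z\mapsto d(z,\Sigma)$ is $1$-Lipschitz for any nonempty set $\Sigma$, since for all $z,z'$ and all $x\in\Sigma$ one has $\|z-x\|_2 \leq \|z-z'\|_2 + \|z'-x\|_2$, whence taking infima gives $|d(z,\Sigma)-d(z',\Sigma)|\leq\|z-z'\|_2$. Thus $z\mapsto d(z,\Sigma)^2$ is continuous (being the square of a continuous real-valued function), and $z\mapsto\|z\|_2^2 - d(z,\Sigma)^2$ is a difference of continuous functions, hence continuous.

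The only genuinely delicate point, and the one I would be careful to flag, is the well-definedness issue created by the set-valued nature of $P_\Sigma^\perp$: a priori $\|P_\Sigma^\perp(z)\|_2^2$ is ambiguous. The Pythagorean identity above is exactly what resolves this, by expressing the norm of \emph{any} minimizer in terms of the intrinsically single-valued quantities $\|z\|_2^2$ and $d(z,\Sigma)$. With that subtlety handled, the remainder is routine, relying only on the Lipschitz continuity of the distance-to-a-set function, which holds for arbitrary nonempty $\Sigma$ and does not even require the homogeneity or proximinality hypotheses beyond ensuring that $P_\Sigma^\perp$ is nonempty.
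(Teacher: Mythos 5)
Your proof is correct and follows essentially the same route as the paper's: both use the Pythagorean identity from Lemma~\ref{lem:prop_orth_proj} to write $\|P_\Sigma^\perp(z)\|_2^2 = \|z\|_2^2 - \|z - P_\Sigma^\perp(z)\|_2^2$, and then establish that $z \mapsto \|z - P_\Sigma^\perp(z)\|_2$ (which is exactly $d(z,\Sigma)$) is $1$-Lipschitz via the same triangle-inequality argument. Your explicit remark on why the set-valuedness of $P_\Sigma^\perp$ causes no ambiguity is a welcome clarification that the paper leaves implicit.
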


\begin{proof}
 With Lemma~\ref{lem:prop_orth_proj}, $\|z\|_2^2= \|z-P_\Sigma^\perp(z)\|_2^2 + 2 \re \ls z-P_\Sigma^\perp(z),P_\Sigma^\perp(z) \rs + \|P_\Sigma^\perp(z)\|_2^2  = \|z-P_\Sigma^\perp(z)\|_2^2 + \|P_\Sigma^\perp(z)\|_2^2 $. As $  \|P_\Sigma^\perp(z)\|_2^2 = \|z\|_2^2-\|z-P_\Sigma^\perp(z)\|_2^2 $, we deduce that  $z\mapsto \|P_\Sigma^\perp(z)\|_2^2$ is continuous if $z\mapsto \|z-P_\Sigma^\perp(z)\|_2$  is continuous. 
 
 Let $z, y \in \bC^N$, we have by definition of the orthogonal projection, as $P_\Sigma^\perp(y) \in \Sigma$, 
 \begin{equation}
 \|z-P_\Sigma^\perp(z)\|_2-\|y-P_\Sigma^\perp(y)\|_2 \leq \|z-P_\Sigma^\perp(y)\|_2-\|y-P_\Sigma^\perp(y)\|_2. 
 \end{equation}
 With the triangle inequality, 
 \begin{equation}
 	\|z-P_\Sigma^\perp(z)\|_2-\|y-P_\Sigma^\perp(y)\|_2 \leq \|z-y \|_2+\|y-P_\Sigma^\perp(y)\|_2-\|y-P_\Sigma^\perp(y)\|_2 = \|z-y\|_2.
 \end{equation}
 We similarly show that  $	\|y-P_\Sigma^\perp(y)\|_2 -\|z-P_\Sigma^\perp(z)\|_2 \leq \|z-y\|_2$ and 
 \begin{equation}
 	|\|z-P_\Sigma^\perp(z)\|_2-\|y-P_\Sigma^\perp(y)\|_2 |\leq  \|z-y\|_2 \to_{z\to y} 0.
 \end{equation}
 which implies the continuity of  $z\mapsto \|z-P_\Sigma^\perp(z)\|_2$ and the result.
\end{proof}

We now show   the existence of the  minimizers of $R_{\beta^2}(u,z)$ (that characterizes the restricted Lipschitz constant (Lemma~\ref{lem:max_Qc})) with respect to $u \in \Sigma$.

\begin{lemma}\label{lem:exist1}
	Suppose that $\Sigma$ is a closed homogeneous proximinal set. Let $\beta^\star = \inf_{P \in \Pi_\Sigma} \beta_\Sigma(P)$.  Suppose $\beta^\star>1$.  Then for all $\beta> \beta^\star, z \in \bC^N$, with $R_{\beta^2}$ defined in Lemma~\ref{lem:max_Qc}, we have 
	\begin{equation}
		P_{\beta^2}(z) := \left(\arg \min_{u\in\Sigma} R_{\beta^2}(u ,z)\right) \neq \emptyset
	\end{equation}
	with $R_{\beta^2}(P_{\beta^2}(z),z) \leq 0$ and $\|P_\beta(z)\|_2^2 \leq \beta^2 \|z\|_2^2$. 
\end{lemma}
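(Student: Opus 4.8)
The plan is to treat this, for each fixed $z$, as a minimization of the map $u \mapsto R_{\beta^2}(u,z)$ over the closed set $\Sigma$: I would show this map is continuous and coercive, so that its infimum is attained, and then extract the sign bound and the norm bound from a single elementary inequality. The starting point is the explicit formula from Lemma~\ref{lem:max_Qc}. Writing $c=\beta^2>1$, it reads
\[
 R_c(u,z) = \|u\|_2^2 - c\|z\|_2^2 + (c-1)\left\| P_\Sigma^\perp\!\left(\frac{u-cz}{1-c}\right)\right\|_2^2 \geq \|u\|_2^2 - c\|z\|_2^2,
\]
where the inequality holds because $c-1>0$ and the projection term is nonnegative. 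This lower bound immediately yields coercivity in $u$, and it will later deliver the norm bound. For continuity in $u$, I would observe that $u\mapsto\|u\|_2^2$ is continuous, that $u\mapsto \frac{u-cz}{1-c}$ is affine, and that $w\mapsto\|P_\Sigma^\perp(w)\|_2^2$ is continuous by Lemma~\ref{lem:exist0b}; hence $u\mapsto R_c(u,z)$ is continuous as a composition and sum of continuous maps.

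With continuity and coercivity established, existence of a minimizer is the standard direct-method argument. I would take a minimizing sequence $(u_n)\subset\Sigma$ for $R_{\beta^2}(\cdot,z)$; since the values $R_{\beta^2}(u_n,z)$ are bounded above and $R_{\beta^2}(u_n,z)\geq\|u_n\|_2^2-\beta^2\|z\|_2^2$, the sequence $(u_n)$ is bounded. By Bolzano--Weierstrass it has a subsequence converging to some $u^\star$, which belongs to $\Sigma$ because $\Sigma$ is closed, and continuity gives $R_{\beta^2}(u^\star,z)=\inf_{u\in\Sigma}R_{\beta^2}(u,z)$. Thus $P_{\beta^2}(z)\neq\emptyset$, with $u^\star\in P_{\beta^2}(z)$.

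It remains to control the optimal value. For the sign, I would invoke the strict inequality $\beta>\beta^\star=\inf_{P\in\Pi_\Sigma}\beta_\Sigma(P)$: by definition of the infimum there is a projection $P\in\Pi_\Sigma$ with $\beta_\Sigma(P)<\beta$, so $P$ has the restricted $\beta$-Lipschitz property, and Lemma~\ref{lem:charac_lip1} combined with the maximization over $x\in\Sigma$ in Lemma~\ref{lem:max_Qc} gives $R_{\beta^2}(P(z),z)\leq 0$. Since $P(z)\in\Sigma$, this forces $\min_{u\in\Sigma}R_{\beta^2}(u,z)\leq 0$, i.e. $R_{\beta^2}(P_{\beta^2}(z),z)\leq 0$. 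Finally, evaluating the lower bound at the minimizer $u^\star=P_\beta(z)$ gives $\|u^\star\|_2^2-\beta^2\|z\|_2^2\leq R_{\beta^2}(u^\star,z)\leq 0$, hence $\|P_\beta(z)\|_2^2\leq\beta^2\|z\|_2^2$.

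No step presents a genuine obstacle once the closed-form expression for $R_c$ is in hand. The two points requiring the most care are the continuity of $u\mapsto R_{\beta^2}(u,z)$, which is precisely why Lemma~\ref{lem:exist0b} was proved beforehand, and the translation of the \emph{strict} inequality $\beta>\beta^\star$ into the existence of a single point of $\Sigma$ at which $R_{\beta^2}(\cdot,z)$ is nonpositive; both follow directly from the definitions and the earlier lemmas.
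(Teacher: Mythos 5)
Your proof is correct and follows essentially the same route as the paper: continuity of $u\mapsto R_{\beta^2}(u,z)$ via Lemma~\ref{lem:exist0b}, the lower bound $R_{\beta^2}(u,z)\geq\|u\|_2^2-\beta^2\|z\|_2^2$ to confine minimizers to a bounded subset of the closed set $\Sigma$, and the definition of $\beta^\star$ together with Lemmas~\ref{lem:charac_lip1} and~\ref{lem:max_Qc} to get nonpositivity of the minimum. Your phrasing via coercivity and a minimizing sequence is interchangeable with the paper's restriction to the compact set $\Sigma\cap B(\sqrt{c}\|z\|_2)$; there is no substantive difference.
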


\begin{proof}
	Let $c = \beta^2 > (\beta^{\star})^2$.	We consider the minimization of 
	\begin{equation}
		\begin{split}
			R_{c} (u,z) & = \|u\|_2^2 - c\|z\|_2^2 +\frac{1}{c-1} \|P_\Sigma^\perp(u-cz)\|_2^2. \\
		\end{split}
	\end{equation}
	
	With Lemma~\ref{lem:exist0b},  $R_c(u,z)$ is continuous with respect to $u$. Also note  that  $\inf_{u \in \Sigma}R_c(u,z)= \inf_{u \in \Sigma} \max_{x\in\Sigma} Q_c(u,z,x) \leq 0$ (using Lemma~\ref{lem:charac_lip1} and $ c=\beta^2 > (\beta^{\star})^2$)  and that if $\|u\|_2^2 > c\|z\|_2^2$, we have $	R_{c} (u,z) >0 $.  We deduce 
	\begin{equation}
		\begin{split}
			\inf_{u\in\Sigma}R_{c} (u,z)= \inf_{u\in \Sigma, \|u\|_2^2\leq c\|z\|_2^2} R_{c} (u,z)
		\end{split}
	\end{equation}
	
	As $\Sigma$ is closed, $\Sigma \cap B(\sqrt{c}\|z\|_2) $ (where $B(r)$ is the closed $\ell^2$-ball of radius $r$) is closed and bounded. By continuity, we deduce that the minimum over  $\Sigma \cap B(\sqrt{c}\|z\|_2) $  is reached  in  $\Sigma \cap B(\sqrt{c}\|z\|_2) $ and that $P_{\beta^2}(z) \neq \emptyset$.  
	
	Also note that we necessarily have $P_{\beta^2}(z) \subset \Sigma \cap B(\sqrt{c}\|z\|_2) $  as  $	R_{c} (u,z) >0 $ for $\|u\|_2 > c\|z\|_2^2$.
\end{proof}

We now prove  the existence of optimal projections for closed homogeneous proximinal sets (such as finite union of linear subspaces or sets of matrices of rank lower than  $r$).

\begin{theorem}[Existence of optimal projections]\label{th:existence}
Suppose that $\Sigma$ is a closed homogeneous proximinal set. Let $\beta^\star := \inf_{P \in \Pi_\Sigma} \beta_\Sigma(P)$. Suppose $\beta^\star>1$. 
Then, for all $z\in\bC^N$, with $R_{(\beta^\star)^2}$ defined in Lemma~\ref{lem:max_Qc},
\begin{equation}
	P^\star(z):=	\left(\arg\min_{u\in\Sigma}R_{(\beta^\star)^2}(u ,z)\right)\neq \emptyset.
\end{equation}
and $\beta_\Sigma(P^\star) = \beta^\star$.
\end{theorem}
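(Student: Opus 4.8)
The plan is to extend the existence argument of Lemma~\ref{lem:exist1}, valid for $\beta>\beta^\star$, down to the critical value $\beta=\beta^\star$, and then to upgrade the bound $R_{\beta^2}(P_{\beta^2}(z),z)\le 0$ (only established for $\beta>\beta^\star$) to $\beta=\beta^\star$ by a limiting argument. Writing $c_0:=(\beta^\star)^2$, which is $>1$ because $\beta^\star>1$, it suffices to establish, for every $z\in\bC^N$, that (i) $u\mapsto R_{c_0}(u,z)$ attains its minimum over $\Sigma$, and (ii) this minimum is nonpositive. Granting (i) and (ii), any minimizer $P^\star(z)$ satisfies $R_{c_0}(P^\star(z),z)\le 0$; since $R_{c_0}(u,z)=\max_{x\in\Sigma}Q_{c_0}(u,z,x)$ for all $u\in\Sigma$ by Lemma~\ref{lem:max_Qc}, Lemma~\ref{lem:charac_lip1} yields $\beta_\Sigma(P^\star)\le\beta^\star$, so $P^\star\in\Pi_\Sigma$; the reverse inequality $\beta_\Sigma(P^\star)\ge\beta^\star$ is then immediate from the definition of $\beta^\star$ as the infimum over $\Pi_\Sigma$, giving $\beta_\Sigma(P^\star)=\beta^\star$.

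For (i) I would reuse verbatim the coercivity argument of Lemma~\ref{lem:exist1}. Since $c_0>1$, the term $\tfrac{1}{c_0-1}\|P_\Sigma^\perp(u-c_0z)\|_2^2$ is nonnegative, so $R_{c_0}(u,z)>0$ whenever $\|u\|_2^2>c_0\|z\|_2^2$; hence the infimum over $\Sigma$ coincides with the infimum over $\Sigma\cap B(\sqrt{c_0}\|z\|_2)$, which is compact because $\Sigma$ is closed and we are in finite dimension. Lemma~\ref{lem:exist0b} makes $u\mapsto\|P_\Sigma^\perp(u-c_0z)\|_2^2$ continuous, so $R_{c_0}(\cdot,z)$ is continuous and attains its minimum on this compact set.

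The heart of the proof, and the step I expect to be the main obstacle, is (ii). First I would pick a sequence $\beta_k\downarrow\beta^\star$ with $\beta_k>\beta^\star$; Lemma~\ref{lem:exist1} then supplies minimizers $u_k\in\arg\min_u R_{\beta_k^2}(u,z)$ with $R_{\beta_k^2}(u_k,z)\le 0$ and $\|u_k\|_2^2\le\beta_k^2\|z\|_2^2$. As $\beta_k\le\beta_1$, the bound $\|u_k\|_2^2\le\beta_1^2\|z\|_2^2$ is uniform, so $(u_k)$ is bounded and, $\Sigma$ being closed, a subsequence converges to some $u^*\in\Sigma$. The crucial ingredient is that $(c,u)\mapsto R_c(u,z)=\|u\|_2^2-c\|z\|_2^2+\tfrac{1}{c-1}\|P_\Sigma^\perp(u-cz)\|_2^2$ is jointly continuous on $(1,\infty)\times\bC^N$: this combines continuity of $(c,u)\mapsto u-cz$, continuity of $w\mapsto\|P_\Sigma^\perp(w)\|_2^2$ from Lemma~\ref{lem:exist0b}, and continuity of $c\mapsto\tfrac{1}{c-1}$ away from $c=1$, the latter being harmless precisely because $c_0>1$. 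Passing to the limit along the subsequence in $R_{\beta_k^2}(u_k,z)\le 0$ then gives $R_{c_0}(u^*,z)\le 0$, so $\min_{u\in\Sigma}R_{c_0}(u,z)\le 0$, which is (ii).

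The delicate points are all concentrated in (ii): securing a uniform bound on the $u_k$ so that a convergent subsequence can be extracted, and establishing joint continuity of $R_c(u,z)$ in $(c,u)$ near $c_0$. Both rely on the hypothesis $\beta^\star>1$, which ensures $c_0-1>0$ so that the factor $\tfrac{1}{c-1}$ stays bounded and continuous in a neighborhood of $c_0$; the nonnegativity of the orthogonal-projection term (yielding coercivity in $u$) together with the closedness of $\Sigma$ handle the rest.
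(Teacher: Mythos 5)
Your proposal is correct and follows essentially the same route as the paper's proof: take minimizers at levels $c>(\beta^\star)^2$ supplied by Lemma~\ref{lem:exist1}, use their uniform boundedness and the closedness of $\Sigma$ to extract a convergent subsequence, pass to the limit via the continuity of $\|P_\Sigma^\perp(\cdot)\|_2^2$ (Lemma~\ref{lem:exist0b}) to get $\inf_u R_{(\beta^\star)^2}(u,z)\le 0$, and then invoke coercivity and compactness for attainment. The only cosmetic differences are that you obtain the uniform bound from $\beta_k\le\beta_1$ rather than from $\beta^\star\le 2$ as in the paper, and you make the joint continuity in $(c,u)$ explicit where the paper uses it implicitly.
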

\begin{proof}
		
	Let $c = (\beta^\star)^2$. Consider, with Lemma~\ref{lem:exist1}, for all $\epsilon>0, z \in \bC^N$,
	\begin{equation}
		P_{c+\epsilon}(z) := \left(\arg \min_{u\in\Sigma} R_{c+\epsilon}(u,z) \right)\neq \emptyset.
	\end{equation}
 and $\|P_{c+\epsilon}(z)\|_2^2 \leq (c+\epsilon)\|z\|_2^2$. Considering $c+\epsilon \leq 4$ (as $\beta^\star \leq 2$ with Lemma~\ref{th:lip_const_orth_proj} and  the orthogonal projection is optimal if $\beta^\star=2$), we have $\|P_{c+\epsilon} (z)\|_2^2 \leq 4\|z\|_2^2$ and $\|P_\epsilon(z)\|$ is bounded independently of $\epsilon$ and $c$.  We also have $R_{c+\epsilon}(P_{c+\epsilon}(z),z) \leq 0$. As $P_{c+\epsilon} (z)$ is bounded and in $\Sigma$, there is a convergent sequence $u_n = P_{c+\epsilon_n}(z)\to_{n\to+\infty}\tilde{u} \in \Sigma$ with $\epsilon_n\to_{n\to+\infty}0$ and  
 
	\begin{equation}
		P_{c+\epsilon_n}(z) \in  \arg \min_{u\in\Sigma} R_{c+\epsilon_n}(u,z) \leq 0.
	\end{equation}
	where $R_{c+\epsilon_n}(u,z)=\|u\|_2^2 - (c+\epsilon_n)\|z\|_2^2 +\frac{1}{c+\epsilon_n-1} \|P_\Sigma^\perp(u-(c+\epsilon_n)z)\|_2^2 $.
	With Lemma~\ref{lem:exist0b}, by continuity of $\|P_\Sigma^\perp(\cdot)\|_2^2 $,  we deduce 
	
	\begin{equation}
	0\geq R_{c+\epsilon_n}(u_n,z) \to_{\epsilon_n \to 0} \|\tilde{u}\|_2^2 - c\|z\|_2^2 +\frac{1}{c-1} \|P_\Sigma^\perp(\tilde{u}-cz)\|_2^2  =  R_{c}(\tilde{u},z).
	\end{equation}
	
	We have just shown that for any $z$, there exists $P(z) = \tilde{u}$ such that 
	
	\begin{equation}
	R_{(\beta^\star)^2}(P(z) ,z)\leq 0.
	\end{equation}
	which implies that $P(z)$ is restricted $\beta^\star$-Lipschitz (existence of an optimal projection).
	
	This also shows that 	
	\begin{equation}
		\inf_{u\in\Sigma}R_{(\beta^\star)^2}(u ,z) \leq R_{(\beta^\star)^2}(P(z) ,z) \leq 0.
	\end{equation}
	
	Using the same argument as in the proof Lemma~\ref{lem:exist1}, this infimum is indeed reached and  we can define
	
	\begin{equation}
	P^\star(z):=	\arg\min_{u\in\Sigma}R_{(\beta^\star)^2}(u,z) \neq \emptyset
	\end{equation}
	such that $	Q_{(\beta^\star)^2}	(P^\star(z) ,z,) \leq 0$ and $\beta_\Sigma(P^\star)= \beta^\star$.
\end{proof}

Note that the optimal projection is constructed explicitly from the pointwise minimization (i.e. minimization for each $z$) of an explicit function $R_c$ constructed with $Q_c$ and the orthogonal projection onto $\Sigma$. Also note that the proof is valid for  closed infinite unions of linear subspaces that are proximinal, such as sets of low-rank matrices.

\subsection{An optimality result for sparse recovery with iterative hard thresholding}

We observe that iterative hard thresholding for sparse recovery fits well with the previous framework. Indeed, for sparse recovery $\Sigma =\Sigma_k \subset \bR^N$ (we place ourselves in $\bR^N$ for simplicity) the set of vectors with at most $k$ non-zero elements, $P_\Sigma^\perp(z) = \mathrm{HT}(z)$ where $\mathrm{HT}$ is the hard thresholding operator selecting  $k$ largest absolute amplitudes in $z$.

The restricted Lipschitz property of the hard thresholding operator is a direct corollary of Theorem~\ref{th:lip_const_orth_proj}. We give a tighter restricted Lipschitz constant in the following theorem.

\begin{theorem}[Restricted Lipschitz property of hard thresholding] \label{th:lip_const_ht}
 Let $\Sigma = \Sigma_k \subset \bR^N$. Then $P_\Sigma^\perp$ has the restricted Lipschitz condition w.r.t to $\Sigma$ with constant $\beta = \sqrt{\frac{3 + \sqrt{5}}{2}} \approx 1.618$
 \einprogress
\end{theorem}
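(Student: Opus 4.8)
The plan is to verify the restricted $\beta$-Lipschitz inequality $\|u-x\|_2 \le \beta\|z-x\|_2$ directly from the definition (rather than through the $R_c$ machinery of Lemma~\ref{lem:max_Qc}), exploiting the explicit combinatorial structure of hard thresholding. Fix $z \in \bR^N$, let $u \in P_\Sigma^\perp(z) = \mathrm{HT}(z)$ with support $S := \supp(u)$, and let $x \in \Sigma_k$ with support $T := \supp(x)$. If $\|z\|_0 \le k$ then $u = z$ and the inequality is trivial, so I assume $|S| = k$. Since $u_i = z_i$ for $i \in S$ and $u_i = 0$ otherwise, the terms on $S$ cancel in the difference of the two squared norms, giving the identity
\begin{equation}
\|u-x\|_2^2 - \|z-x\|_2^2 = \sum_{i \notin S}\left(x_i^2 - (z_i - x_i)^2\right).
\end{equation}
Writing $T_0 := T \setminus S$ and discarding the non-positive summands coming from indices $i \notin S \cup T$ (where $x_i = 0$, so the summand is $-z_i^2$), I obtain the upper bound $\|u-x\|_2^2 - \|z-x\|_2^2 \le \sum_{j \in T_0}\left(x_j^2 - (z_j - x_j)^2\right)$. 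It therefore suffices to prove $\sum_{j\in T_0}\left(x_j^2 - (z_j-x_j)^2\right) \le (\beta^2-1)\|z-x\|_2^2$.

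The second step is the combinatorial heart of the argument. Set $S_0 := S \setminus T$; since $|T| \le k = |S|$ we have $|S_0| \ge |T_0|$, so there exists an injection $\sigma : T_0 \to S_0$. Because every index of $S$ dominates every index of $S^c$ in magnitude (the defining property of hard thresholding) and $\sigma(j) \in S$ while $j \in S^c$, we have $z_{\sigma(j)}^2 \ge z_j^2$ for each $j$. Keeping only the $S_0$- and $T_0$-indexed terms of $\|z-x\|_2^2$ (and using $x_i = 0$ on $S_0$, so those terms equal $z_i^2$), injectivity of $\sigma$ yields the lower bound
\begin{equation}
\|z-x\|_2^2 \ge \sum_{i \in S_0} z_i^2 + \sum_{j\in T_0}(z_j - x_j)^2 \ge \sum_{j\in T_0}\left(z_{\sigma(j)}^2 + (z_j - x_j)^2\right).
\end{equation}
Substituting this bound reduces the target inequality to a single sum over $T_0$, which I prove term by term: it suffices to show $x_j^2 - (z_j - x_j)^2 \le (\beta^2-1)\big(z_{\sigma(j)}^2 + (z_j-x_j)^2\big)$ for each $j$.

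Finally comes the scalar optimization that fixes the value of $\beta$. Writing $\phi := \tfrac{1+\sqrt5}{2}$ for the golden ratio, the claimed constant is $\beta = \phi$, so that $\beta^2 = \phi + 1 = \tfrac{3+\sqrt5}{2}$ and $\beta^2 - 1 = \phi$. Since $z_{\sigma(j)}^2 \ge z_j^2$ and the right-hand side is increasing in $z_{\sigma(j)}^2$, it is enough to establish the stronger per-term bound $x_j^2 - (z_j-x_j)^2 \le \phi\big(z_j^2 + (z_j-x_j)^2\big)$. Expanding and using the identities $\phi + 1 = \phi^2$ and $2\phi + 1 = \phi^3$, the difference of the right- and left-hand sides equals $\phi\,(x_j - \phi z_j)^2 \ge 0$, which holds with equality exactly when $x_j = \phi z_j$. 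This proves $\|u-x\|_2^2 \le \beta^2 \|z-x\|_2^2$ and hence $\beta_\Sigma(P_\Sigma^\perp) \le \sqrt{\tfrac{3+\sqrt5}{2}}$. The main obstacle is the bookkeeping of the second step: correctly matching the excess support $T_0$ of $x$ to unused top-$k$ coordinates $S_0$ and charging exactly the right portion of $\|z-x\|_2^2$ so that the problem decouples into independent scalar inequalities. Once decoupled, the emergence of the golden ratio as the value making the per-term expression a perfect square is the clean payoff.
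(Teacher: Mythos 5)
Your proof is correct, and it takes a genuinely different route from the paper's. The paper proves this theorem through the $Q_c$/$R_c$ machinery of Lemmas~\ref{lem:charac_lip1} and~\ref{lem:max_Qc}: it computes the exact maximizer $x^\star = P_\Sigma^\perp\left(\frac{u-cz}{1-c}\right)$ of $Q_c(z_T,z,\cdot)$ over $\Sigma_k$, splits its support into the parts $I \subset T$ and $J \subset T^{\mathrm{c}}$, and reduces everything to the sign of $\frac{-c^2+3c-1}{c-1}\|z_J\|_2^2$, whose root gives $c^\star = \frac{3+\sqrt{5}}{2}$. You instead verify $\|u-x\|_2^2 \le \beta^2\|z-x\|_2^2$ directly: the cancellation on $S=\supp(u)$, the injection $\sigma: T\setminus S \to S\setminus T$ exploiting $|z_{\sigma(j)}| \ge |z_j|$, and the decoupling into the scalar inequality $x^2-(z-x)^2 \le \phi\left(z^2+(z-x)^2\right)$, which is exactly $\phi(x-\phi z)^2 \ge 0$ via $\phi^2=\phi+1$, $\phi^3=2\phi+1$ (I checked the algebra; it is exact, and indeed $\beta = \sqrt{(3+\sqrt{5})/2}$ \emph{is} the golden ratio). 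Your argument is more elementary and self-contained --- it needs none of the preparatory lemmas, only the defining order property of hard thresholding --- and it cleanly localizes where the constant comes from (the per-coordinate perfect square). What the paper's heavier approach buys is the explicit worst-case $x^\star$ and the function $R_c$, which are reused downstream in Lemma~\ref{lem:tech_lem1} and Theorem~\ref{th:optimality_orth_union} to prove tightness and near-optimality; your proof yields the upper bound but its equality conditions ($x_j = \phi z_j$ on $T\setminus S$, ties between the $k$-th and $(k{+}1)$-th magnitudes) would need separate elaboration to recover those results. One minor presentational point: you should note explicitly that $S\setminus T$ and $T\setminus S$ are disjoint (being subsets of $S$ and $S^{\mathrm{c}}$ respectively), which is what legitimizes keeping both groups of terms simultaneously in the lower bound on $\|z-x\|_2^2$; the fact is immediate but it is the load-bearing step of the bookkeeping.
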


\begin{proof}
We use the characterization of the Lipschitz constant with the function $Q_c$ given in Lemma~\ref{lem:charac_lip1}.
Let $c> 1$. Let $z\in\bC^N$. We write $z_S$ the restriction of $z$ to a support $S$. Let $z_T = P_{\Sigma}^\perp(z) = \mathrm{HT}(z)$, with a support $T$ that selects $k$ largest amplitudes in $z$.

We use Lemma~\ref{lem:max_Qc}. The maximum of $Q_c(z_T,z,\cdot)$ with respect to $x$ is reached at
  \begin{equation}
  \begin{split}
   x^\star &= P_{\Sigma}^\perp \left(\frac{1}{1-c}(z_T-c z)\right) = P_\Sigma^\perp \left(z_T + \frac{c}{c-1} z_{T^\mathrm{c}}\right). \\
   \end{split}
  \end{equation}
  We define $I$ the set of (less than $k$) coordinates of $z_T + \frac{ c }{1-c}z_{T^\mathrm{c}}$  selected by $P_\Sigma^\perp$ in  $T$ and $J$ the coordinates selected in $T^\mathrm{c}$: 
  \begin{equation}
  	I = \supp \left( P_\Sigma^\perp \left(z_T + \frac{c}{c-1} z_{T^\mathrm{c}}\right) \right) \cap T; \quad J = \supp \left( P_\Sigma^\perp \left(z_T + \frac{c}{c-1} z_{T^\mathrm{c}}\right) \right) \cap T^\mathrm{c}
  \end{equation}
  
   Note that $|I|+|J|=k$. Also remark that, $[P_\Sigma^\perp \left(z_T + \frac{c}{c-1} z_{T^\mathrm{c}}\right)]_I = z_I$, $[P_\Sigma^\perp \left(z_T + \frac{c}{c-1} z_{T^\mathrm{c}}\right)]_J = \frac{c}{c-1} z_J$ and
   \begin{equation}
 	\| P_\Sigma^\perp \left(z_T + \frac{c}{c-1} z_{T^\mathrm{c}}\right)\|_2^2 = \|z_I\|_2^2 + \|\frac{c}{c-1} z_J\|_2^2
   \end{equation}
    We deduce
 \begin{equation}
 \begin{split}
 Q_c(P_\Sigma^\perp(z),z,x^\star) = Q_c(z_T,z,x^\star)
 &=(1-c)\|z_T\|_2^2   -c  \|z_{T^\mathrm{c}}\|_2^2    +(c-1) \|  P_\Sigma^{\perp} \left(z_T + \frac{ c z_{T^\mathrm{c}}}{1-c}\right)\|_2^2 \\
  &=(1-c)\|z_T\|_2^2   -c  \|z_{T^\mathrm{c}}\|_2^2    +(c-1) \|z_I\|_2^2 + \frac{ c^2 }{c-1} \|z_{J}\|_2^2 \\
   &=(1-c)\|z_{T\setminus I}\|_2^2   -c  \|z_{T^\mathrm{c}}\|_2^2  + \frac{ c^2 }{c-1} \|z_{J}\|_2^2 \\
  \end{split}
 \end{equation}
We remark that $|T \setminus I| = k - |I| = |J|$. By definition of $T$ any coordinate of $z_{T\setminus I}$ is larger than a coordinate of $z_J$ as $J\subset T^\mathrm{c}$. We deduce $\|z_{J}\|_2^2 \leq \|z_{T\setminus I}\|_2^2$ and, as $1-c < 0$ and $\|z_{T^\mathrm{c}}\|_2^2\geq \|z_J\|_2^2$,
   \begin{equation}\label{eq:interm_eq_th_it_lip}
 \begin{split}
 Q_c(P_\Sigma^\perp(z),z,x^\star) &\leq (1-c)\|z_{J}\|_2^2   -c  \|z_{J}\|_2^2  + \frac{ c^2 }{c-1} \|z_{J}\|_2^2 \\
      &=  \frac{ -c^2+3c-1 }{c-1} \|z_{J}\|_2^2 \\
 \end{split}
 \end{equation}

 The last expression is zeroed for $ c = c^\star := 3/2 + \sqrt{5}/2 \approx 2.62$, indeed $-(3/2 + \sqrt{5}/2 )^2 +  9/2 + 3\sqrt{5}/2 -1 =  9/4-1  -5/4 -3 \sqrt{5}/2 + 3\sqrt{5}/2 = 0$. We deduce for any $z \in \bC^N, x \in \Sigma$

\begin{equation}
\begin{split}
 Q_{c^\star}(P_\Sigma^\perp(z),z,x) &\leq 0.\\
 \end{split}
 \end{equation}
 and $P_\Sigma^\perp$ is restricted $\beta$ - Lipschitz with $\beta = \sqrt{c^\star}$.
\end{proof}

Going back to our condition we have  $  \delta \beta< 1  $  i.e. linear recovery with rate  $r = \delta^2\beta^2$ provided the operator $A$ has a RIC on $\Sigma -\Sigma = \Sigma_{2k}$ with constant
\begin{equation}
\delta < \frac{1}{\beta}= \frac{1}{\sqrt{c^\star}} \approx 0.618
\end{equation}
(the threshold for recovery with convex methods and potentially sublinear rates is $\delta < \frac{1}{\sqrt{2}}\approx 0.707$ \cite{traonmilin2018stable}). In \cite{foucart2011hard}, hard thresholding pursuit (with a linear rate of convergence) is successful if $\delta_{3k} < 0.57$. As the RIP on $\Sigma_{3k}$ is much more stringent than the RIP on $\Sigma_{2k}$, we conclude that our general result is state-of-the-art for sparse recovery.

In the following, we show that the restricted Lipschitz constant given by Theorem~\ref{th:lip_const_ht} is tight when considering the collection of all sparse models. We also show  that the orthogonal projection is a projection having the best Lipschitz constant if we consider the whole collection of sparse models. Also, for a fixed sparse model, the orthogonal projection is near-optimal.

We begin by the following technical Lemma.
\begin{lemma} \label{lem:tech_lem1}
  Let an integer $k \geq 3$ and $c \in [1,+\infty)$. Consider $\Sigma = \Sigma_k \subset \bR^N$. Let $z \in \bC^N$ defined by $z_i = 1$ for $1\leq i\leq k+1$ and $z_i = 0$ for $i> k+1$. For $u \in \Sigma$, with $R_c$ defined in Lemma~\ref{lem:max_Qc}, we have
	\begin{equation}\label{eq:res_tech_lem1}
		\min_{u\in\Sigma} R_c(u,z) = F_k\left(\frac{(k-1)c}{(k-1)c +c-1}\right)
	\end{equation} 
	where we define, for $v\in \bR$, 
	\begin{equation}\label{def:F_k_hyp}
		F_k(v):= k|v|^2 +\frac{1}{c-1} (c^2 + (k-1) |v-c|^2) - c(k+1).
	\end{equation} 
\end{lemma}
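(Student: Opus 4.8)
The plan is to combine the explicit formula for $R_c$ from Lemma~\ref{lem:max_Qc} with a convexity-plus-symmetry argument, so as to reduce the minimization over $\Sigma_k$ to the one-dimensional minimization of $F_k$. First I would rewrite $R_c$ as in the proof of Lemma~\ref{lem:exist1}: using homogeneity of $P_\Sigma^\perp$ (Lemma~\ref{lem:prop_orth_proj}) one has, for $c>1$,
\begin{equation}
R_c(u,z) = \|u\|_2^2 - c\|z\|_2^2 + \frac{1}{c-1}\|P_\Sigma^\perp(u-cz)\|_2^2,
\end{equation}
and since $z$ has $k+1$ entries equal to $1$, $\|z\|_2^2 = k+1$. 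Recall that $\|P_{\Sigma_k}^\perp(w)\|_2^2$ is the sum of the $k$ largest squared magnitudes of the entries of $w$, which is well defined even though the selection may be non-unique. (The value $c=1$ is degenerate: both sides of~\eqref{eq:res_tech_lem1} are $+\infty$ since $z\notin\Sigma_k$, so I assume $c>1$, which also makes $\tfrac1{c-1}>0$.)

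For the upper bound I would test the symmetric candidate $u=v\,\mathbf{1}_{\{1,\dots,k\}}$. Then $\|u\|_2^2=kv^2$, the nonzero entries of $u-cz$ are $k$ copies of $v-c$ and one entry $-c$, and when $|v-c|<c$ the projection keeps the entry $-c$ together with $k-1$ copies of $v-c$, giving $\|P_\Sigma^\perp(u-cz)\|_2^2=(k-1)(v-c)^2+c^2$ and hence $R_c(u,z)=F_k(v)$. The parabola $F_k$ is strictly convex, and $F_k'(v)=0$ yields $v^\star=\tfrac{(k-1)c}{kc-1}=\tfrac{(k-1)c}{(k-1)c+c-1}$; one checks $0<v^\star<c$ for $c>1$, so the selection hypothesis $|v^\star-c|<c$ holds and $R_c(v^\star\mathbf{1}_{\{1,\dots,k\}},z)=F_k(v^\star)$. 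This proves $\min_{u\in\Sigma}R_c(u,z)\le F_k(v^\star)$.

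The lower bound is where the work lies, and I would argue in three steps. (i) \emph{Support reduction:} given any $u\in\Sigma_k$, zero out every coordinate of $u$ outside $\{1,\dots,k+1\}$ to obtain $\tilde u\in\Sigma_k$. Since $z$ vanishes there, this only lowers $\|u\|_2^2$ and only reduces magnitudes of entries of $u-cz$, so the top-$k$ squared-magnitude sum cannot increase (it is monotone under coordinatewise magnitude reduction); hence $R_c(\tilde u,z)\le R_c(u,z)$, and I may assume $\supp(u)\subseteq\{1,\dots,k+1\}$. (ii) \emph{Convexity and symmetry:} fix a size-$k$ set $A\subseteq\{1,\dots,k+1\}$ and minimize over all $u$ supported in $A$. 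On this copy of $\bR^{k}$, $\|u\|_2^2$ is convex, while $\|P_\Sigma^\perp(u-cz)\|_2^2=\max_{|T|=k}\sum_{i\in T}(\cdot)^2$ is a maximum of sums of squares, hence convex; moreover, because $z$ is constant on $\{1,\dots,k+1\}$, $R_c(\cdot,z)$ is invariant under permutations of the coordinates of $A$. A convex, permutation-symmetric, coercive function is minimized at an all-equal point, whose value is $\ge F_k(v)$ (equal when $|v-c|\le c$, larger otherwise); minimizing over $v$ gives $F_k(v^\star)$. (iii) Any admissible $u$ from step (i) has support inside some size-$k$ set $A\subseteq\{1,\dots,k+1\}$, so $R_c(u,z)\ge F_k(v^\star)$. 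Combining the two bounds yields~\eqref{eq:res_tech_lem1}.

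The main obstacle is precisely the lower bound, namely justifying that the minimizer is the symmetric configuration. A naive coordinatewise optimization is misleading: each separable term $u_i^2+\tfrac1{c-1}(u_i-c)^2$ is individually minimized at $u_i=1\ne v^\star$, and the true optimum instead sits at the nonsmooth ``tie'' where all $|u_i-c|$ coincide and the identity of the dropped coordinate is ambiguous. The clean way around this is the observation that the top-$k$ squared-magnitude sum is convex (as a maximum of convex functions), combined with the permutation symmetry coming from the special choice of $z$; the non-convexity of $\Sigma_k$ is neutralized by first fixing the support, which is legitimate after the support-reduction step. I would also take care to verify that $v^\star$ lies in the regime $|v^\star-c|<c$, so that $F_k$ genuinely coincides with $R_c$ at the optimum (in the complementary regime the projection discards the $-c$ entry and the value is strictly larger).
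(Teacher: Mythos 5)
Your proposal is correct, and the lower bound is obtained by a genuinely different mechanism than in the paper. The paper, after the same support-reduction step, proceeds by hand: it first disposes of the configuration where every $|u_i-c|\geq c$ via the crude bound $k\frac{c^2}{c-1}-(k+1)c$ (this is where the hypothesis $k\geq 3$ is consumed, in checking that this bound exceeds $F_k(1)$), and otherwise identifies explicitly which coordinate the hard thresholding discards (the one with smallest $|u_i-c|$), reducing to a coordinatewise minimization of $u_i\mapsto u_i^2+\tfrac{1}{c-1}(u_i-c)^2$ under the constraint $|u_i-c|\geq|u_k-c|$, with two sub-cases according to whether that constraint is active. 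You replace all of this with the observation that $u\mapsto\|P_{\Sigma_k}^\perp(u-cz)\|_2^2=\max_{|T|=k}\sum_{i\in T}(u_i-cz_i)^2$ is convex as a maximum of convex functions, so that $R_c(\cdot,z)$ restricted to a fixed size-$k$ support $A\subseteq\{1,\dots,k+1\}$ is convex, coercive and invariant under permutations of $A$ (because $z$ is constant there), and averaging any minimizer over the symmetric group produces an all-equal minimizer; this collapses the problem to the one-dimensional minimization of $F_k$, with the inequality $R_c(v\mathbf{1}_A,z)\geq F_k(v)$ holding in the regime $|v-c|\geq c$ as well since $k(v-c)^2\geq c^2+(k-1)(v-c)^2$ there. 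Your route is shorter, avoids the tie-breaking bookkeeping (which, as you rightly note, is exactly where a naive separable optimization goes wrong), and in fact does not use $k\geq 3$ at all, so it proves the identity for every $k\geq 1$; what it costs is the invocation of two standard but non-elementary facts (convexity and monotonicity of the top-$k$ squared-magnitude statistic, and the existence of symmetric minimizers of symmetric convex functions), whereas the paper's computation is entirely explicit and incidentally produces the intermediate value $F_k(1)$ that is reused in the proof of Theorem~\ref{th:optimality_orth_union}. Both arguments correctly verify that $v^\star=\frac{(k-1)c}{kc-1}\in(0,1)$ lies in the regime where the discarded entry is one of the $v^\star-c$'s, so the candidate value really is $F_k(v^\star)$.
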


\begin{proof}
We first show that $R_c(\cdot,z)$ is minimized by $u \in \Sigma$ such that $\supp(u) \subset  \{1, \ldots, k+1\} $.

\noindent\textbf{Restriction of the support of $u^\star $.} Let $u \in \Sigma$. Let  $I = \supp(u) \cap  \{1, \ldots, k+1\} $ and $J =\supp(u) \setminus I$.
We have
\begin{equation}
	\begin{split}
		R_c(u_I,z)-R_c(u,z)  &= \|u_I\|_2^2  + \frac{1}{c-1} \|P_\Sigma^\perp(u_I-cz)\|_2^2
		- \|u_I+u_J\|_2^2 - \frac{1}{c-1} \|P_\Sigma^\perp(u_I +u_J-cz)\|_2^2 \\
		&= -\|u_J\|_2^2  + \frac{1}{c-1} \|P_\Sigma^\perp(u_I-cz)\|_2^2 - \frac{1}{c-1} \|P_\Sigma^\perp(u_I +u_J-cz)\|_2^2. \\
	\end{split}
\end{equation}

As $J \cap \supp(u_I -cz) = \emptyset$ (as $J \cap \{1, \ldots, k+1\} =\emptyset $), we have that $\|P_\Sigma^\perp(u_I-cz)\|_2^2 = \|\mathrm{HT}(u_I-cz)\|_2^2 \leq\| \mathrm{HT}(u_I+u_J-cz)\|_2^2 =\|P_\Sigma^\perp(u_I+u_J-cz)\|_2^2 $.
We deduce that

\begin{equation}
	\begin{split}
		R_c(u_I,z)-R_c(u,z)
		&\leq -\|u_J\|_2^2 \leq 0. \\
	\end{split}
\end{equation}

We deduce that we can consider $u^\star $ minimizing $R_c(u,z)$ such that $|\supp(u^\star )|  \subset \{1, \ldots, k+1\}$, i.e
\begin{equation}
	\min_{u \in \Sigma}R_c(u,z) = \min_{u \in \Sigma, \supp(u) \subset \supp(z)} R_c(u,z).
\end{equation}

\noindent\textbf{Explicit minimization of $R_c(\cdot,z)$.} Let $u \in \Sigma$ such that $\supp(u) \subset \{1, \ldots, k+1\}$, e.g. (without loss of generality) $\supp(u) \subset \{1, \ldots, k\}$ as $z$ is constant on $\{1, \ldots, k+1\}$.

We distinguish two cases:
\begin{itemize}
	\item \textbf{Case 1:} for all $i \in \{1, \ldots, k\}, |u_i - cz_i| = |u_i-c| \geq c$. Then, $P_\Sigma^\perp(u-cz) = \mathrm{HT}(u-cz) = (u_1-cz_1,\ldots,u_k-c z_k, 0  \ldots 0 ) $ and we have
	\begin{equation} \label{eq:opt_proj0}
		\begin{split}
			R_c(u,z) &=  \sum_{i=1}^k |u_i|^2  - c(k+1)+\frac{1}{c-1} \sum_{i=1}^k |u_i-c|^2 \geq  k\frac{c^2}{c-1} - (k+1)c.\\
		\end{split}
	\end{equation}

	\item \textbf{Case 2:}  there exists $j \in \{1, \ldots, k\}$ such that $|u_j - c| < c$. Without loss of generality (just reorder the supports), suppose $|u_k - c| = \min_{ i \in \{1, \ldots, k\}} |u_i -c| < c = |u_{k+1}-c z_{k+1}| $ (as $u_{k+1} =0$ and $z_{k+1}=1$ ). Hence, $P_\Sigma^\perp(u-cz)= \mathrm{HT}(u-cz) $ necessarily selects index $k+1$. We deduce that $\|P_\Sigma^\perp(u-cz)\|_2^2=c^2 + \sum_{i=1}^{k-1} |u_i-c|^2$.	Then
	\begin{equation}\label{eq:proof_opt_ht1}
		\begin{split}
			R_c(u,z) &=  \sum_{i=1}^k |u_i|^2- c(k+1) +\frac{1}{c-1} (c^2 + \sum_{i=1}^{k-1} |u_i-c|^2) .\\
		\end{split}
	\end{equation}
	
	We minimize $R_c(u,z)$ with respect to $u_i$, $i\neq k$ and the constraint $|u_i -c| \geq |u_k-c| =: \lambda$. We have
	\begin{equation}\label{eq:proof_opt_ht2}
		\begin{split}
			\arg \min_{u_i, |u_i -c| \geq \lambda }R_c(u,z) =  \arg \min_{u_i, |u_i -c| \geq \lambda } |u_i|^2 +\frac{1}{c-1} |u_i-c|^2 =: g(u_i).
		\end{split}
	\end{equation}
	remarking that $g'(u_i) = 2u_i + \frac{2}{c-1} (u_i-c) = 0$ for $u_i=1$.  We distinguish two sub-cases:
\begin{enumerate}
	\item If $|c-1| \geq \lambda =|u_k-c|$ (the global minimum of $g$ is within the constraint), this second-degree polynomial is minimized  at $u_i^\star =1$ . In this case, this gives
	\begin{equation}
		\begin{split}
			R_c( (u_1^\star , \ldots, u_{k-1}^\star ,u_k,0 \ldots,0) ,z) &=  (k-1)  + |u_k|^2  +\frac{1}{c-1} (c^2 + (k-1)(1-c)^2 )- c(k+1) \\
		\end{split}
	\end{equation}
	As $c\geq 1$, minimizing $|u_k|^2$ with respect to $u_k$ such that $|u_k-c| \leq |u_i^\star -c| = |c-1|$ (i.e. $1  \leq u_k \leq 2c-1$) yields $u_k^\star =1$ (as $|\cdot|^2$ is increasing for positive reals). We deduce that
	
	\begin{equation}\label{eq:opt_proj1}
		\begin{split}
			R_c(u^\star ,z) &=  k  +\frac{1}{c-1} (c^2 + (k-1)(c-1)^2 )- c(k+1)\\
			&= (c-1)(k-1) +k  -c(k+1) +\frac{c^2}{c-1}\\
			&= (c-1)k -(c-1) +(1-c)k -c+  \frac{c^2}{c-1} =  \frac{c^2}{c-1}  -2c +1;\\
		\end{split}
	\end{equation}
	i.e., using the definition of  $F_k$  in the hypotheses~\eqref{def:F_k_hyp} and the first line of~\eqref{eq:opt_proj1},
	\begin{equation}\label{eq:opt_proj3}
		\begin{split}
			\min_{u \in \bR^N: \supp(u) \subset \{1,\ldots,k \}\min_{ i \in \{1, \ldots, k\}} |u_i -c| \leq c-1  }R_c(u,z)  &= F_k(1) = \frac{c^2}{c-1}  -2c +1.\\
		\end{split}
	\end{equation}
 \item 	If   $|u_k-c| = \lambda > |c-1| $,  we write $H_c(u_1, \ldots, u_{k-1}):=\sum_{i=1}^{k-1} |u_i|^2- c(k+1) +\frac{1}{c-1} (c^2 + \sum_{i=1}^{k-1} |u_i-c|^2)$ and
 \begin{equation}\label{eq:interm1}
 	\begin{split}
	&\min_{c>|u_k-c|  > |c-1|,|u_i-c|\geq |u_k-c|, i \neq k }R_c( (u_1, \ldots, u_{k-1} ,u_k,0 \ldots,0) ,z) \\
	&= \min_{c>|u_k-c|> |c-1|} \left(|u_k|^2 + \min_{ |u_i-c|\geq |u_k-c|, i \neq k}  H_c(u_1, \ldots, u_{k-1})\right)\\
		&= \min_{ 0<v<1 \;\mathrm{or} \; 2c-1<v<2c} \left(|v|^2 + \min_{ |u_i-c|\geq |v-c|, i \neq k}  H_c(u_1, \ldots, u_{k-1})\right).\\
	\end{split}
 \end{equation}
 We remark with the change of variable $ \tilde{v} = 2c-v$ and that $|c-\tilde{v}| = |c-2c+v| = |v-c|$ that 
 \begin{equation}\label{eq:interm2}
 	\begin{split}
 	& \min_{ 2c-1<v<2c} \left(|v|^2 + \min_{ |u_i-c|\geq |v-c|, i \neq k}  H_c(u_1, \ldots, u_{k-1})\right)\\
 	 &=  \min_{0<\tilde{v}<1} \left(|2c-\tilde{v}|^2 + \min_{ |u_i-c|\geq |\tilde{v}-c|, i \neq k}  H_c(u_1, \ldots, u_{k-1})\right)\\
 	 &\geq  \min_{0<\tilde{v}<1} \left(|\tilde{v}|^2 + \min_{ |u_i-c|\geq |\tilde{v}-c|, i \neq k}  H_c(u_1, \ldots, u_{k-1})\right)\\
 	\end{split}
 \end{equation}
 where the last inequality comes from the fact that  $|2c-\tilde{v}|^2 = 2c-\tilde{v} \geq 2c-1 \geq 1 \geq \tilde{v}$ as $c\geq 1$. Combining~\eqref{eq:interm1} and \eqref{eq:interm2} we deduce 
  \begin{equation}
 	\begin{split}
 		&\min_{c>|u_k-c|  > |c-1|,|u_i-c|\geq |u_k-c|, i \neq k }R_c( (u_1, \ldots, u_{k-1} ,u_k,0 \ldots,0) ,z) \\
 	=	&\min_{ 0<u_k<1,|u_i-c|\geq |u_k-c|, i \neq k }R_c( (u_1, \ldots, u_{k-1} ,u_k,0 \ldots,0) ,z) \\
 	\end{split}
 \end{equation}

 Now supposing $0<u_k<1$, starting from~\eqref{eq:proof_opt_ht1}, we minimize $R_c$ with respect to $u_i$, $i\neq k$. This yields  $u_i^\star =u_k$ (as the function $g$ from~\eqref{eq:proof_opt_ht2} is minimized on the constraint) and
	\begin{equation}\label{eq:opt_proj0b}
		\begin{split}
			R_c((u_1^\star , \ldots, u_{k-1}^\star ,u_k,0,\ldots,0),z) &=    k|u_k|^2 +\frac{1}{c-1} (c^2 + (k-1) |u_k-c|^2) - c(k+1) = F_k(u_k)\\
		\end{split}
	\end{equation}
	 where $F_k$ is defined in the hypotheses~\eqref{def:F_k_hyp}. We have $F_k'(u_k) = 2k u_k+ 2\frac{k-1}{c-1}(u_k-c)$ and $F_k'(u_k^\star)=0$ if  $(k(c-1) +(k-1) )u_k^\star = (k-1)c $, i.e. if $(kc-1)u_k^\star=(k-1)c$ and $u_k^\star  = \frac{(k-1)c}{kc-1} =  \frac{(k-1)c}{(k-1)c +c-1} \leq 1$. We deduce that the inequality $|c-u_k^\star |  \leq c$ is verified and
	
	\begin{equation}
		\begin{split}
			R_c(u^\star ,z) &=   F_k\left(\frac{(k-1)c}{(k-1)c +c-1}\right),  \\
		\end{split}
	\end{equation}
	i.e.
	\begin{equation}\label{eq:opt_proj2}
		\begin{split}
			\min_{u: \supp(u) \subset \{1,\ldots,k \}, c > \min_{ i \in \{1, \ldots, k\}} |u_i -c| \geq c-1  }R_c(u,z)  &=   F_k\left(\frac{(k-1)c}{(k-1)c +c-1}\right).  \\
		\end{split}
	\end{equation}
\end{enumerate}
\end{itemize}

With \eqref{eq:opt_proj2} and \eqref{eq:opt_proj3} and using the fact that $F_k\left(\frac{(k-1)c}{(k-1)c +c-1}\right)$ is the global minimum of $F_k(\cdot)$, we get

\begin{equation}
	\begin{split}
		\min_{u: \supp(u) \subset \{1,\ldots,k \}, \min_{ i \in \{1, \ldots, k\}} |u_i -c| <c  }R_c(u,z)  &= \min \left( F_k(1), F_k\left(\frac{(k-1)c}{(k-1)c +c-1}\right)\right)  \\
		&=  F_k\left(\frac{(k-1)c}{(k-1)c +c-1} \right).\\
	\end{split}
\end{equation}

We compare with the lower bound from \eqref{eq:opt_proj0}, we have, for $k>2$

\begin{equation}
	\begin{split}
		k\frac{c^2}{c-1} - (k+1)c - F_k(1) = (k-1)\frac{c^2}{c-1} - (k-1)c -1&= (k-1)c(\frac{c}{c-1} - 1) -1\\
		&= (k-1)\frac{c}{c-1} -1 \geq k-2 >0.
	\end{split}
\end{equation}
We deduce that for $k\geq 3$

\begin{equation}
	\begin{split}
		\min_{u: \supp(u) \subset \{1,\ldots,k \}} R_c(u,z)  &= \min_{u: \supp(u) \subset \{1,\ldots,k \}, \min_{ i \in \{1, \ldots, k\}} |u_i -c| <c  }R_c(u,z) \\
		&=  F_k\left(\frac{(k-1)c}{(k-1)c +c-1}\right).\\
	\end{split}
\end{equation}	
\end{proof}
Note that this Lemma shows that the projection optimizing $R_c$ is not the orthogonal projection for such choices of $z$. Hence the question of optimal restricted $\beta$-Lipschitz projection for fixed $k$ is still open.

\begin{theorem}[An optimality result for hard thresholding] \label{th:optimality_orth_union}
For any $k \geq 1$, consider  $\Sigma =\Sigma_k \subset \bR^N$ the $k$-sparse model set. Let $\Pi_{\Sigma_k}$ the set of projections onto $\Sigma_k$ with a restricted Lipschitz property. Let
\begin{equation}
\beta_k^\star = \inf_{P \in \Pi_{\Sigma_k}} \beta_{\Sigma_k}(P).
\end{equation}
Then,
\begin{enumerate}
 \item  the restricted Lipschitz constant from Theorem~\ref{th:lip_const_ht} is tight when considering the collection of sparse models for all sparsities $k \geq 1$, i.e.
\begin{equation}
\sup_{k\geq1} \beta_{\Sigma_k}^2(P_{\Sigma_k}^\perp)= \frac{3 +\sqrt{5}}{2} ;
\end{equation}

 \item for any $k \geq 3$,

\begin{equation}
2.457 \approx
   \frac{3 +\sqrt{11/3}}{2}   \leq (\beta_k^\star)^2 \leq \beta_{\Sigma_k}^2(P_{\Sigma_k}^\perp)\leq \frac{3 +\sqrt{5}}{2}   \approx 2.618.
\end{equation}
%where $\frac{3 +\sqrt{11/3}}{2}  \approx 2.457$; 
% and $ \approx 2.618$;

 \item we have the optimality of the orthogonal projection with respect to the sequence of models $\Sigma_{k}$:
\begin{equation}
   \sup_{k \geq 3} (\beta_k^\star )^2=  \sup_{k\geq 3}\beta_{\Sigma_k}^2(P_{\Sigma_k}^\perp) = \frac{3 +\sqrt{5}}{2}.
\end{equation}
\end{enumerate}

\end{theorem}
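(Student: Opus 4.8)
The plan is to reduce every quantity appearing in the statement to the sign of the scalar function $R_c$ from Lemma~\ref{lem:max_Qc}, evaluated at a single well-chosen test vector, and then to locate the value of $c$ at which that one-variable function changes sign. Recall from Lemmas~\ref{lem:charac_lip1} and~\ref{lem:max_Qc} that a projection $P$ is restricted $\sqrt{c}$-Lipschitz if and only if $R_c(P(z),z)\le 0$ for every $z$; consequently $\beta_{\Sigma_k}^2(P_{\Sigma_k}^\perp)=\inf\{c>1:\sup_z R_c(\mathrm{HT}(z),z)\le 0\}$ and $(\beta_k^\star)^2=\inf\{c>1:\sup_z \min_{u\in\Sigma_k}R_c(u,z)\le 0\}$. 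Throughout I would take $z$ to be the vector of Lemma~\ref{lem:tech_lem1}, namely $z_i=1$ for $1\le i\le k+1$ and $z_i=0$ otherwise.

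For assertion (1), the upper bound $\beta_{\Sigma_k}^2(P_{\Sigma_k}^\perp)\le \frac{3+\sqrt5}{2}$ is exactly Theorem~\ref{th:lip_const_ht}. For the matching lower bound I would evaluate $R_c(\mathrm{HT}(z),z)$ on this test vector directly via formula~\eqref{eq:expr_Qc_opt}: writing $u=\mathrm{HT}(z)$ (any $k$ of the $k+1$ ones), the vector $\frac{u-cz}{1-c}$ has $k$ entries equal to $1$ and one entry equal to $\frac{c}{c-1}>1$, so its orthogonal projection keeps the large entry and $k-1$ of the ones, giving $\|P_{\Sigma_k}^\perp(\tfrac{u-cz}{1-c})\|_2^2=\frac{c^2}{(c-1)^2}+(k-1)$. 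Substituting yields $R_c(\mathrm{HT}(z),z)=\frac{c^2}{c-1}-2c+1=-\frac{c^2-3c+1}{c-1}$, which is independent of $k$ and strictly positive for $1<c<\frac{3+\sqrt5}{2}$. Hence $\beta_{\Sigma_k}^2(P_{\Sigma_k}^\perp)\ge \frac{3+\sqrt5}{2}$ for every $k$; in fact equality holds for each $k$, and the supremum over $k\ge1$ is $\frac{3+\sqrt5}{2}$.

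For assertion (2), the right-hand inequality is again Theorem~\ref{th:lip_const_ht}, and the middle inequality holds because $P_{\Sigma_k}^\perp\in\Pi_{\Sigma_k}$ forces $\beta_k^\star\le\beta_{\Sigma_k}(P_{\Sigma_k}^\perp)$. For the left-hand inequality I would invoke Lemma~\ref{lem:tech_lem1}, which gives $\min_{u\in\Sigma_k}R_c(u,z)=F_k(v^\star)$ with $v^\star=\frac{(k-1)c}{kc-1}$ the vertex of the upward quadratic $F_k$. The decisive step is to simplify this minimum: evaluating the quadratic~\eqref{def:F_k_hyp} at its vertex $v^\star$ and clearing denominators, one finds $F_k(v^\star)=-\frac{c\,(kc^2-3kc+k+1)}{(c-1)(kc-1)}$, whose sign is opposite to that of $c^2-3c+1+\frac1k$. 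Therefore $\min_u R_c(u,z)>0$ precisely for $1<c<c_k$, where $c_k:=\frac{3+\sqrt{5-4/k}}{2}$ is the larger root of $c^2-3c+1+\frac1k=0$. Since for such $c$ no projection can be restricted $\sqrt c$-Lipschitz, this gives $(\beta_k^\star)^2\ge c_k$. As $c_k$ is increasing in $k$ with $c_3=\frac{3+\sqrt{11/3}}{2}$, we conclude $(\beta_k^\star)^2\ge\frac{3+\sqrt{11/3}}{2}$ for all $k\ge3$.

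Finally, assertion (3) combines the previous two. The second equality is immediate from the identity $\beta_{\Sigma_k}^2(P_{\Sigma_k}^\perp)=\frac{3+\sqrt5}{2}$ established in (1). For the first, the bound $(\beta_k^\star)^2\le\beta_{\Sigma_k}^2(P_{\Sigma_k}^\perp)=\frac{3+\sqrt5}{2}$ caps the supremum from above, while $(\beta_k^\star)^2\ge c_k=\frac{3+\sqrt{5-4/k}}{2}\to\frac{3+\sqrt5}{2}$ as $k\to\infty$ forces it from below, so $\sup_{k\ge3}(\beta_k^\star)^2=\frac{3+\sqrt5}{2}$. I expect the only genuinely delicate part to be the bookkeeping that collapses $F_k(v^\star)$ to the clean normal form $c^2-3c+1+\frac1k$; once that identity is secured, the monotonicity of $c_k$ in $k$ and its limit make both (2) and (3) fall out immediately.
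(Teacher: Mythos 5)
Your proof is correct, and it rests on the same skeleton as the paper's argument: the same test vector $z$ with $k+1$ unit entries, and the same reduction of the restricted Lipschitz constant to the sign of $R_c$ via Lemmas~\ref{lem:charac_lip1}, \ref{lem:max_Qc} and~\ref{lem:tech_lem1}. The computations are organized differently, however, and in two places more efficiently. For item 1 the paper obtains the lower bound indirectly, by showing $F_k\bigl(\tfrac{(k-1)c}{kc-1}\bigr)\to F_k(1)$ as $k\to\infty$ and then sandwiching the suprema; you instead evaluate $R_c(\mathrm{HT}(z),z)$ directly via \eqref{eq:expr_Qc_opt} and find the $k$-independent value $\tfrac{-c^2+3c-1}{c-1}$ (consistent with the paper's \eqref{eq:opt_proj1}), which gives $\beta_{\Sigma_k}^2(P_{\Sigma_k}^\perp)=\tfrac{3+\sqrt{5}}{2}$ for \emph{every} $k$ with $N\ge k+1$ --- a slightly stronger statement than the paper's supremum, with no limit argument needed. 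For items 2 and 3 the paper evaluates $F_k$ at its vertex only for $k=3$ and handles the large-$k$ behaviour by a separate expansion; your closed form $F_k(v^\star)=-\tfrac{c\,k\left(c^2-3c+1+1/k\right)}{(c-1)(kc-1)}$, which I verified and which reduces at $k=3$ to the paper's $c\tfrac{-3c^2+9c-4}{(3c-1)(c-1)}$, unifies both: it yields $(\beta_k^\star)^2\ge c_k=\tfrac{3+\sqrt{5-4/k}}{2}$ for all $k\ge 3$, and the monotone limit $c_k\to\tfrac{3+\sqrt{5}}{2}$ gives item 3 immediately. The only mild caveats are that the test vector requires $N\ge k+1$ (implicit in the paper as well), that Lemma~\ref{lem:tech_lem1} is only available for $k\ge 3$ (which you respect, using it only for items 2 and 3), and that your identification of $(\beta_k^\star)^2$ as an infimum over $c$ should be read as using only the elementary direction $R_c(P(z),z)\ge\min_{u\in\Sigma}R_c(u,z)$, which is all the lower bounds require.
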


\begin{proof}

Recall that the study of $R_c$ from from Lemma~\ref{lem:max_Qc}, permits to characterize restricted Lipschitz constants.

The idea of this proof is to exactly optimize $R_c$  for a specific choice  of $z$, yielding necessary conditions on $c$ (a lower bound) to obtain $R_c(u^\star,z)= Q_c(u^\star,z,x^\star(z)) \leq 0$.  Indeed, we have  
\begin{equation}
	\sup_{\tilde{z} \in \bC^N}  \inf_{u \in \Sigma} R_c(u,\tilde{z}) \geq   \inf_{u \in \Sigma} R_c(u,z) .
\end{equation}
 Hence,  $\inf_{u \in \Sigma} R_{\beta_0^2}(u,z) > 0$ implies $	\sup_{\tilde{z} \in \bC^N}  \inf_{u \in \Sigma} R_{\beta_0^2}(u,\tilde{z}) > 0$ and the optimal projection on $\Sigma$ has  restricted Lipschitz constant $\beta> \beta_0$ (with Theorem~\ref{th:existence}).
 
Let $z \in \bC^N$ defined by $z_i = 1$ for $1\leq i\leq k+1$ and $z_i = 0$ for $i> k+1$. With Lemma~\ref{lem:tech_lem1},  we have, for $k \geq 3$,
\begin{equation}\label{eq:interm_result1}
 \min_{u\in\Sigma} R_c(u,z) = F_k\left(\frac{(k-1)c}{(k-1)c +c-1}\right)
\end{equation} 
 where we define, for $v\in \bR$, 
 \begin{equation} \label{eq:def_F_k_th}
 F_k(v)= k|v|^2 +\frac{1}{c-1} (c^2 + (k-1) |v-c|^2) - c(k+1).
\end{equation} 

 With this result, we will show our conclusion by looking at $k\to\infty$ (for  item 1 and 3) and the case $k=3$ (lower bound in item 2.).

\noindent\textbf{Best uniform constant independent of $k$.} Keeping the same $z$, we have shown that

\begin{equation}
\begin{split}
 \sup_{\tilde{z} \in \bC^N} \inf_{u\in\Sigma} R_c(u,\tilde{z}) \geq  \min_{u \in \Sigma} R_c(u,z) = F_k\left(\frac{(k-1)c}{(k-1)c +c-1}\right).\\
 \end{split}
\end{equation}

We now show $F_k\left(\frac{(k-1)c}{(k-1)c +c-1}\right) \to_{k\to+\infty} F_k(1)= \frac{c^2}{c-1}  -2c +1$. Let $w_k :=\frac{(k-1)c}{(k-1)c +c-1} = \frac{1}{1+v_k}$ with $v_k:=\frac{c-1}{(k-1)c}$. With the definition of $F_k$ in~\eqref{eq:def_F_k_th},
\begin{equation}\label{eq:conv_F_cond_nec1}
\begin{split}
	F_k(w_k) &= kw_k^2+\frac{k-1}{c-1}(w_k-c)^2 +\frac{c^2}{c-1} -c(k+1) \\
	&=[ kw_k^2+\frac{k}{c-1}(w_k-c)^2  -ck]  +[\frac{c^2}{c-1} - \frac{1}{c-1}(w_k-c)^2  -c] \\
\end{split}
\end{equation}
As $w_k \to_{k\to+\infty} 1$, we have the second term $[\frac{c^2}{c-1} - \frac{1}{c-1}(w_k-c)^2  -c] \to_{k\to +\infty}\frac{c^2}{c-1} -(c-1) -c = \frac{c^2}{c-1} -2c +1 = F_k(1)$ . We rewrite the first term of~\eqref{eq:conv_F_cond_nec1} as  
\begin{equation}
	\begin{split}
		kw_k^2+\frac{k}{c-1}(w_k-c)^2  -ck &= \frac{k}{c-1}((c-1)w_k^2 +(w_k-c)^2 -c) \\
		&= \frac{k}{c-1}(cw_k^2 -2w_kc +c^2-c(c-1))\\
		&= \frac{ck}{c-1}(w_k^2 -2w_k+1)\\
		&= \frac{ck}{c-1}(w_k-1)^2 = \frac{ck}{c-1}(\frac{1}{1+v_k}-1)^2 = \frac{ck}{c-1}(\frac{v_k}{1+v_k})^2 \to_{k \to +\infty} 0 \\
	\end{split}
\end{equation}
where the convergence is deduced from the fact that $v_k = O(1/k)$.
We deduce from~\eqref{eq:conv_F_cond_nec1} that $F_k\left(\frac{(k-1)c}{(k-1)c +c-1}\right) \to_{k\to+\infty} F_k(1)$.

As $F_k(1) =  \frac{-c^2+3c-1}{c-1}$, for $c\geq 1$, $F_k(1) \leq 0$ if and only if $c \geq \frac{3 +\sqrt{5}}{2} \approx 2.6180$ (as shown in Theorem~\ref{th:lip_const_ht} after \eqref{eq:interm_eq_th_it_lip}).

We have  that $P_\Sigma^\star \in \arg \min_{P \in \Pi_\Sigma} \beta_\Sigma(P) $  implies

\begin{equation}
\begin{split}
 \sup_{\tilde{z} \in \bC^N}  R_c(P_\Sigma^\star(\tilde{z}),\tilde{z}) = \sup_{\tilde{z} \in \bC^N} \inf_{u\in\Sigma} R_c(u,\tilde{z}) \geq F_k\left(\frac{(k-1)c}{(k-1)c +c-1}\right).
 \end{split}
\end{equation}

We deduce that 
\begin{equation}
	\begin{split}
		 \sup_{k\geq3} \sup_{\tilde{z} \in \bC^N}  R_c(P_\Sigma^\star(\tilde{z}),\tilde{z}) \geq F_k \left(1\right) = \frac{-c^2+3c-1}{c-1}.
	\end{split}
\end{equation}
and $\sup_{k\geq3} \sup_{\tilde{z} \in \bC^N}  R_c(P_\Sigma^\star(\tilde{z}),\tilde{z}) \leq 0$ implies $c\geq \frac{3 +\sqrt{5}}{2} $ and $\sup_{k\geq 3} \beta_{\Sigma_k}(P_{\Sigma_k}^\star)^2 \geq \frac{3 +\sqrt{5}}{2} $. Using Theorem~\ref{th:lip_const_ht}, we have that

\begin{equation}
\begin{split}
  \frac{3 +\sqrt{5}}{2} &\geq \sup_{k\geq 1} \beta_{\Sigma_k}(P_{\Sigma_k}^\perp)^2 \geq \sup_{k\geq 3} \beta_{\Sigma_k}(P_{\Sigma_k}^\star)^2 \geq  \frac{3 +\sqrt{5}}{2}.\\
\end{split}
\end{equation} 
This proves the third conclusion of this theorem. Note that this also forces the equality $\sup_{k\geq 1} \beta_{\Sigma_k}(P_{\Sigma_k}^\perp)^2 = \frac{3 +\sqrt{5}}{2}$, which is the first conclusion of this theorem.

\noindent\textbf{Lower bound with $k=3$.} We give the lower bound in item 2 of the conclusion by simply looking at the case $k=3$. Indeed for such $k$, we have, using the definition of $F_k$, from~\eqref{eq:opt_proj0b},
\begin{equation}
\begin{split}
 F_k\left(\frac{(k-1)c}{(k-1)c +c-1}\right) &= F_k\left(\frac{2c}{3c-1}\right) \\
 &=  3\left(\frac{2c}{3c-1}\right)^2 +\frac{1}{c-1} (c^2 +2 |\frac{2c}{3c-1}-c|^2 )- 4c \\
&=  c\left(3 c\left(\frac{2}{3c-1}\right)^2 +\frac{c}{c-1} (1 +2 |\frac{2}{3c-1}-1|^2 )- 4\right)\\
&=  c\left( 3c\left(\frac{2}{3c-1}\right)^2 +\frac{c}{c-1} (1 +2 \left(\frac{-3c +3}{3c-1}\right)^2 )- 4\right)\\
\end{split}
\end{equation}
We reduce  to the same denominator.
\begin{equation}
\begin{split}
 F_k\left(\frac{(k-1)c}{(k-1)c +c-1}\right)&=c\left(  \frac{12c(c-1) +c (3c-1)^2 +2c (-3c+3)^2 - 4 (3c-1)^2(c-1)} {(3c-1)^2(c-1)} \right)\\
&=c\left(  \frac{12c(c-1) +c (3c-1)^2 +18c (c-1)^2 - 4 (3c-1)^2(c-1)} {(3c-1)^2(c-1)} \right)\\
&=c\left(  \frac{(c-1) (12c + 18c(c-1)) +c (3c-1)^2  - 4 (3c-1)^2(c-1)} {(3c-1)^2(c-1)} \right)\\
&=c\left(  \frac{(c-1) (18c^2-6c) +c (3c-1)^2  - 4 (3c-1)^2(c-1)} {(3c-1)^2(c-1)} \right)\\
&=c\left(  \frac{6c(c-1) (3c-1) +c (3c-1)^2  - 4 (3c-1)^2(c-1)} {(3c-1)^2(c-1)} \right)\\
&=c\left(  \frac{6c(c-1) +c (3c-1)  - 4 (3c-1)(c-1)} {(3c-1)(c-1)} \right)= c \frac{-3c^2 + 9c -4}{(3c-1)(c-1)}.\\
 \end{split}
\end{equation}
Calculating the roots of the second-degree polynomial in $c$ given by $-3c^2 + 9c -4$, we deduce that $F_k\left(\frac{(k-1)c}{(k-1)c +c-1}\right)\leq 0$ for  $c\geq \frac{3 +\sqrt{11/3}}{2} \approx 2.457$.

Hence, for $k \geq 3$,

\begin{equation}
  2.457   \leq \beta_{\Sigma_k}^2(P_{\Sigma_k}^\star) \leq \beta_{\Sigma_k}^2(P_{\Sigma_k}^\perp)^2   \approx 2.618.
\end{equation}
This is the second conclusion of this Theorem.
\end{proof}

\subsection{Discussion}

We have shown that for generic models (homogeneous sets), the orthogonal projection plays an important role within the set of possible projections onto $\Sigma$. In particular, it is nearly optimal for sparse recovery.  More surprisingly, for a fixed sparsity (or an arbitrary union of subspaces), the orthogonal projection might not be optimal. Our investigations reveal that it minimizes $Q_c(u,z,x^\star)$ only for  some $z \in \bC^N$ (but not all as shown in Lemma~\ref{lem:tech_lem1}). However the bound on Lipschitz constants shows that there is little to be gained with another projection in the case of sparse recovery. We also expect these results to extend naturally to low-rank matrix recovery with iterative singular value thresholding.

Another important conclusion that we can draw from these results, is that for general sets, we can always bound reasonably the restricted Lipschitz constant of the orthogonal projection. Finding optimal projections for a general model is still an open question. In particular, does the orthogonal projection have the same near-optimality result for an arbitrary union of subspaces as iterative hard thresholding for sparse recovery?

\section{Towards the design of optimal methods of averaged directions}\label{sec:av_direc}
In this Section, we discuss how GPGD can be seen in the broader class of averaged directions method from a design and optimality perspective. We define the class of averaged directions methods as the iterations:

\begin{equation}
	\begin{split}
		x_{n+1} &= x_n - \mu d_n \\
	\end{split}
\end{equation}
where we decompose
\begin{equation}
	\begin{split}
		d_n = A^H(Ax_n-y) +g_n \\
	\end{split}
\end{equation}
where $g_n = g(x_n)$ is a function of $x_n$ only. At each step, a direction is calculated as the average of a data-fit direction and a regularizing direction. Such algorithms can be viewed as the result of a design that does not rely on an underlying functional to minimize: the iterations are built by  averaging:
\begin{itemize}
	\item a back-projection (a way to map the measurements to the ambient space $\bC^N$) of the residual between the measurements $y$ and the current iterate;
	\item and a direction $g_n$ pushing towards the low-dimensional model $\Sigma$ (that might not be the gradient of a regularization function).
\end{itemize}
We can model both convex and ``non-convex'' approaches for sparse recovery and beyond (even in the case where there is no functional to minimize).

\begin{itemize}
	\item In the convex case, we can set $g_n =\lambda \nabla R(x_n)$. We fall exactly on the gradient descent for~\eqref{eq:minimization1}.
	
	\item To obtain generalized projected gradient descent, we can choose $g_n  = ( I-A^HA)(x_n-P_\Sigma(x_n))$ where $P_\Sigma$ is a generalized projection onto the model set (see Section~\ref{sec:choice_reg_dir}). 
	
\end{itemize}

\subsection{Design of the data-fit direction $h_n$}\label{sec:choice_datafit}

To design an iterative algorithm, it is natural to first choose a direction that pushes towards $\gt$ from $x_n$, i.e. an approximation of the direction $x_n-\gt$. As  we have access to $r_n = Ax_n-y = A(x_n-\gt)$, the most common direction that helps convergence to an estimation of $\gt$ is the back-projection of the residual
\begin{equation}
	h_n = A^H(Ax_n-y), %-A^He
\end{equation}
which matches the gradient of an $\ell^2$ data-fit functional. We call $h_n$ the data-fit direction.
If $x_n$ is close to $\Sigma$, under a restricted isometry hypothesis on $A$ with small RIC constant $\delta$ (Definition~\ref{def:RIC}), we have that $A^Hr_n = A^HA(x_n-\gt) \approx x_n-\gt$ (because if $z = x_1-x_2$ with $x_1,x_2 \in \Sigma$, $\|z -A^HAz\|_2 \leq \delta \|z\|_2$) . Also, if $g_n = 0$ we fall on the Landweber iteration: $d_n=h_n =  \nabla F(x_n)$ with $F(x) = \frac{1}{2}\|Ax_n-y\|_2^2$ (gradient descent of under-determined least-squares functional). In this article, we use this specific choice of $h_n$ which can be interpreted in a purely geometrical way without considering $F$: we just use the adjoint of the measurement operator. Other possibilities could be the (sub)-gradient of other data-fit functionals such as the $\ell^1$ data-fit for robust regression. We can also mention implicit schemes such as the forward-backward algorithm where the gradient direction is estimated at $x_{n+1}$ instead of $x_n$. Such generalizations are left for future work as the choice of a data-fit direction is also linked with the type of noise  in the noisy case (which is out of the scope of this paper).

\subsection{Design of the regularizing direction $g_n$} \label{sec:choice_reg_dir}

Given the choice of data-fit directions of Section~\ref{sec:choice_datafit} and given a model set $\Sigma$, what is the best way to make $x_n$ converges towards an element of  $\Sigma$?
Using the direction $g_n = x_n - P_\Sigma^\perp(x_n)$ where $P_\Sigma^\perp$ is an orthogonal projection onto $\Sigma$ seems a good choice as it is the shortest path between $x_n$ and $\Sigma$ for the $\ell^2$ metric. Choosing   $g_n =(I/\mu-A^HA)(x_n - P_\Sigma(x_n))$ leads to GPGD iterations. Indeed, we have
\begin{equation}
	\begin{split}
		x_{n+1} &= x_n - \mu A^H(Ax_n-y) - (I-\mu A^HA)(x_n - P_\Sigma(x_n)) \\
		&= x_n  - \mu A^HAx_n +  \mu A^Hy - x_n+ P_\Sigma(x_n) + \mu A^HA(x_n - P_\Sigma(x_n))\\
		&= P_\Sigma(x_n)  - \mu A^HA( P_\Sigma(x_n)  - \gt ). \\
	\end{split}
\end{equation}

Our framework can be interpreted as a definition of optimal regularizing directions with the class of directions having the form $g_n = (I/\mu-A^HA)(x_n - P_\Sigma(x_n))$. Our results open the question of defining optimality for more general classes of regularizing directions.

\subsection{Open questions for the design of optimal averaged directions} 
A typical problem  associated to this question is to define classes of algorithms sufficiently large to yield interesting design choices, but not too large to not be of any interest. For example, to have an interesting meaning from an optimization point of view, we should consider regularizing directions as functions of $x_n$ that do not depend on $y = A \gt$. Otherwise, we could just choose  $ g_n $ depending on $\arg \min_{x\in \Sigma} \|Ax-y\|_2^2$ for a convergence in one iteration.

Another question arising from this work is to compare GPGD with more general regularizing directions.  Let us try to replicate the proof of Theorem~\ref{th:gen_convevergence} with iterations
\begin{equation}
	x_{n+1} = x_n - \mu A^H(Ax_n-y) - g(x_n)
\end{equation}
for some choice of regularizing direction $g(x_n)$ that only depend on $x_n$. Suppose for the sake of discussion that  $A^HA$ is optimally scaled and that $\mu=1$. We have 

\begin{equation}
	\|x_{n+1} -\gt\|_2 =  \| (I  - A^HA)(x_n-\gt) -  g(x_n)\|_2.
\end{equation}
If we consider that the RIC assumption  on $A^HA$ is minimal, only terms of the form $(I  - A^HA)(u-v)$ with $u,v\in\Sigma$ can be bounded by a contractive term. Setting $v=\gt$, we have, with the triangle inequality,
\begin{equation}
	\begin{split}
		\|x_{n+1} -\gt\|_2 &=  \| (I  - A^HA)(u-\gt) + (I  - A^HA)(x_n-u) -  g(x_n)\|_2\\ 
		&\leq 	\| (I  - A^HA)(u-\gt)\|_2 + \|(I  - A^HA)(x_n-u) -  g(x_n)\|_2\\ 					   
	\end{split}
\end{equation}
We immediately observe that setting $g(x_n) = (I  - A^HA)(u-x_n) $  solves two problems at the same time: it makes the second positive term $0$ (hence the triangle inequality tight) and uses only the minimal RIC assumption. We further notice that simply renaming $u= P_\Sigma(x_n)$  leads to the generalized projected gradient descent. This remark opens the broader question of the potential optimality of GPGD into broader classes methods of averaged directions  (or of  the existence of averaged direction algorithms with faster linear convergence rates than GPGD).

\section{Application to inverse problems with deep priors}\label{sec:deep_priors}

We have shown in the previous section that generalized projected gradient descent identifies low-dimensional models with a linear rate as soon as the  projection $P_\Sigma$ has the restricted Lipschitz condition (Definition~\ref{def:lip_const}) on the low-dimensional model $\Sigma$.  We now show that the plug-and-play (PnP) framework can be interpreted as a low-dimensional model recovery and that, experimentally, for image inverse problems, linear rates of convergence to the underlying low-dimensional model are observed. This shows that the restricted Lipschitz condition appears to hold approximately in practice and that global fast rates are obtained beyond the typical non-convex setting of the literature.

\subsection{An interpretation of the plug-and-play method  with low-dimensional recovery theory}

Many variations of  PnP methods exist in the literature~\cite{zhang2021plug,hurault2021gradient}. PnP methods use a general denoiser to approximate the proximal operator associated with a regularization function, which could be interpreted as an operator minimizing a distance between a low-dimensional model and a given point, i.e. a projection operator. In the following we use the formalism of the proximal gradient method \cite{liu2021recovery,kamilov2017plug} (PnP-PGM), which directly uses the denoiser in a generalized projected gradient descent scheme and  yields state-of-the-art results for inverse problems in imaging. In this context, the problem of estimating $\gt$ from $y = A \gt$ is solved using :
\begin{equation}\label{eq:PnP_it}
 x_{n+1} = D(x_n- \mu A^H(Ax_n-y))
\end{equation}
where $D$ is the general purpose denoiser and $\mu$ is the gradient step size. We fall in the iterations defined in  Theorem~\ref{th:gen_convevergence} (with projection and descent steps reversed). Hence \emph{global} convergence will be guaranteed if $D$ is a generalized projection onto a set $\Sigma$ (which is exactly the set of fixed points of the denoiser $D$) with \emph{restricted} Lipschitz constant $\beta_\Sigma(D)$. Note that, to the best of our knowledge convergence of PnP methods rely on a much more stringent \emph{global} Lipschitz condition on the considered objects (objective functional, etc,..., see  Section~\ref{sec:related_work}).  We show  experimentally that the convergence to the underlying low-dimensional model set $\Sigma$ (the fixed points of the denoiser $D$)  corresponds to  linear rates. We will illustrate this linear convergence in the following experimental sections. The main advantage of the restricted Lipschitz condition is to provide a framework to understand  simultaneously sparse recovery and such PnP methods.

Note that we opened the question of the place of GPGD compared to other regularizing directions (Section~\ref{sec:av_direc}). As PnP-PGM can be interpreted as a generalized projected gradient descent, this method might have better rates than the other natural choice of regularizing direction $(I-D)(x_n)$ (see Section~\ref{sec:choice_reg_dir}) which was indeed proposed in the gradient method regularization by denoising (GM-RED)~\cite{romano2017little} :
\begin{equation}\label{eq:PnP_RED}
 x_{n+1} = x_n- \mu (A^H(Ax_n-y) + \lambda(I-D)(x_n)).
\end{equation}
We also propose a comparison of the two methods in Section~\ref{sec:comp_PGM_RED} in light of our results. 

\subsection{Experiments}

In practice, PnP-PGM and GM-RED never fully recover $\hat{x}$ due to approximations error (that will be quantified experimentally). Indeed, the unknown $\hat{x}$ is never a perfect fixed point of the denoiser $D$ and thus we cannot expect it to be fully recovered. Further approximation errors may also occur which makes it complicated for the theory to perfectly fit. Hence, it is near impossible for the sequence $(x_{n})_{n\geq 1}$ given by \eqref{eq:PnP_it} to have a linear convergence rate with respect to $\hat{x}$ as stated by Theorem \ref{th:gen_convevergence}. However, under the assumption that $\|x_n-x^\ast\|_2^2$ has a $r$-linear rate of convergence with  $x^\ast= \lim\limits_{n\rightarrow +\infty} x_n$, we have
\begin{equation}\label{eq:numerical_linear_bound}
	\begin{split}
		\|x_n-\hat{x}\|_2^2 &\leq (\|x_n - x^\ast\|_2 + \|x^\ast - \hat{x}\|_2)^2 \\
			 &\leq ( r^{\frac{n}{2}}\|x_0 - x^\ast\|_2 + \|x^\ast - \hat{x}\|_2)^2.
	\end{split}
\end{equation}
This illustrates the fact that in case of linear convergence to an approximate $\hat{x}$ without knowing the exact limit $x^\ast$, we should observe a convergence given by the last inequality in~\eqref{eq:numerical_linear_bound}.

In the following experiments, we compare this estimation of convergence with other sublinear convergence rates $\frac{1}{n}$, $\frac{1}{n^2}$ on both synthetic images and natural images. We conduct our experiments on two different linear operators $A$: a mask operator that erases $30\%$ of the pixels of the image and a Gaussian blur operator (with $\sigma=1.0$, $3.0$). For each algorithm, the tuning parameters $\mu$ and $\lambda$ are selected through line search such that they ensure the best recovery of $\hat{x}$ (independently for each method). Moreover, the initialization $x_0$ is set using a random uniform distribution and the mean is reset such that it matches the target image mean. We verify  empirically that the restricted isometry constant and the restricted Lipschitz constant are controlled within PnP-PGM iterations (Figure \ref{fig:verification_RIP_beta}). In particular, under the consideration that $D(x_n), \hat{x}\in \Sigma$, we observe that the hypothesis of Theorem \ref{th:gen_convevergence} are verified. In particular, the measured restricted $\beta$-Lipschitz constant is bounded by ${\approx}1.6$. We remark that we do not expect the RIP to be completely satisfied as there is an incompressible error since $\hat{x}$ is an imperfect fixed-point of the denoiser $D$. We also verify the fact that the denoiser nearly realizes a projection (in the sense that it is approximately idempotent) with respect to the iterations $x_n$ (i.e. $D^2(x_n)\approx D(x_n)$). Note that these controls are verified for all experiments, for concision we only present them for the butterfly image.

The code and weights of the denoisers used for the numerical experiments can be found in the open-access GitLab repository \cite{gitlab-link}. We show in Figure~\ref{fig:original_images}, the synthetic and natural images and initializations used in our experiments.

\begin{figure}
	\centering
	\begin{subfigure}[b]{5cm}
		\centering
		\includegraphics[width=0.99\linewidth]{./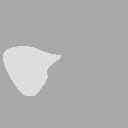}
		\caption{synthetic $\hat{x}$}
		\label{fig:cartoon_x0}
	\end{subfigure}
	\begin{subfigure}[b]{5cm}
		\centering
		\includegraphics[width=0.99\linewidth]{./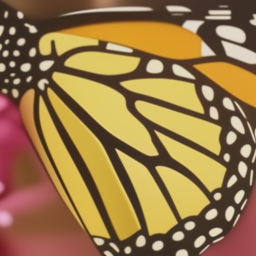}
		\caption{butterfly $\hat{x}$}
		\label{fig:butterfly_pnp_pgm_x0}
	\end{subfigure}
	\begin{subfigure}[b]{5cm}
		\centering
		\includegraphics[width=0.99\linewidth]{./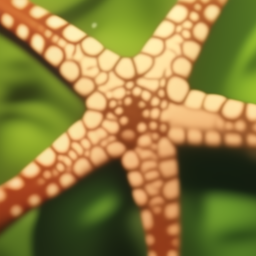}
		\caption{starfish $\hat{x}$}
		\label{fig:starfish_x0}
	\end{subfigure}
	\begin{subfigure}[b]{5cm}
		\centering
		\includegraphics[width=.99\linewidth]{./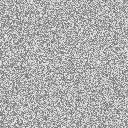}
		\caption{initialization $x_0$}
		\label{fig:cartoon_init}
	\end{subfigure}
	\begin{subfigure}[b]{5cm}
		\centering
		\includegraphics[width=.99\linewidth]{./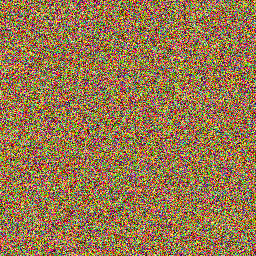}
		\caption{initialization $x_0$}
		\label{fig:butterfly_pnp_pgm_init}
	\end{subfigure}	
	\begin{subfigure}[b]{5cm}
		\centering
		\includegraphics[width=.99\linewidth]{./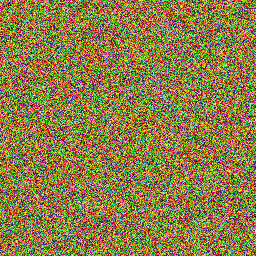}
		\caption{initialization $x_0$}
		\label{fig:starfish_init}
	\end{subfigure}
	\caption{Target images used in numerical experiments. For each numerical experiment, the image $\hat{x}$ is set such that it is approximately a fixed point of the respective denoiser. (a) The synthetic image is generated such that it is an approximate of a fixed point of the denoiser. (b-c) The DRUNet denoiser is applied a few times on a natural image  such that we obtain an approximation of a fixed point of the denoiser. (d-f) The initialization for each image respectively, generated from a random uniform distribution and with the same mean as the respective target image.}
	\label{fig:original_images}
	\vspace{-5mm}
\end{figure}

\subsubsection{Synthetic piecewise-constant images}
In this experiment, we use a  denoiser $D$ (without parameter for the  noise level), parametrized with  a DRUNet  architecture~\cite{zhang2021plug} (a state-of-the-art combination of ResNet and U-Net architectures, without explicit low-dimensional latent space), trained on a dataset of randomly generated piecewise-constant images (see~\cite{guennec2024joint} for a precise description of the random  synthetic dataset). The target image $\hat{x}$ (see Fig.~\ref{fig:cartoon_x0}) was generated such that it had a high fixed point value with respect to the denoiser ($\text{PSNR}(\hat{x}, D(\hat{x}))\approx 64.84$). We observe that the theoretical bound \eqref{eq:numerical_linear_bound} matches well with the observed convergence rate $\|x_n - \hat{x}\|$~in Figure~\ref{img:cartoon_pnp_pgm}, whereas the sublinear rates $\frac{1}{n}$ and $\frac{1}{n^2}$ did not. The theoretical rate in Figure~\ref{img:cartoon_pnp_pgm} is calculated as the minimal rate upper-bounding the experimental convergence curve.

% synthetic

\begin{figure}[]
	\centering
	\begin{subfigure}[b]{0.3\linewidth}
		\centering
		\includegraphics[width=0.99\linewidth]{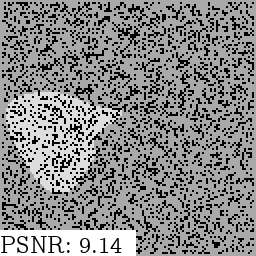}
		\caption{$y$ inpainting}
		\label{fig:cartoon_pnp_pgm_y_mask}
	\end{subfigure}
	\begin{subfigure}[b]{0.3\linewidth}
		\centering
		\includegraphics[width=0.99\linewidth]{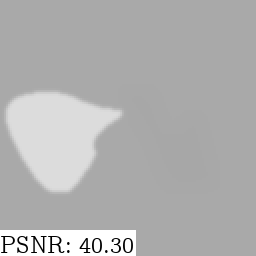}
		\caption{$y$ low blur $\sigma=1.0$}
		\label{fig:cartoon_pnp_pgm_y_blur_low}
	\end{subfigure}
	\begin{subfigure}[b]{0.3\linewidth}
		\centering
		\includegraphics[width=0.99\linewidth]{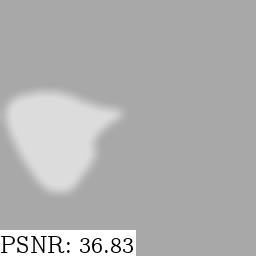}
		\caption{$y$ high blur $\sigma=3.0$}
		\label{fig:cartoon_pnp_pgm_y_blur_high}
	\end{subfigure}
	\begin{subfigure}[b]{0.3\linewidth}
		\centering
		\includegraphics[width=0.99\linewidth]{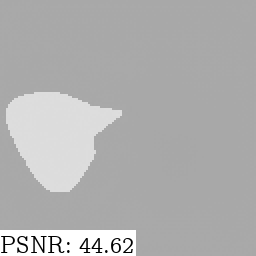}
		\caption{$x^*$ (inpainting)}
		\label{fig:cartoon_pnp_pgm_output_mask}
	\end{subfigure}
	\begin{subfigure}[b]{0.3\linewidth}
		\centering
		\includegraphics[width=0.99\linewidth]{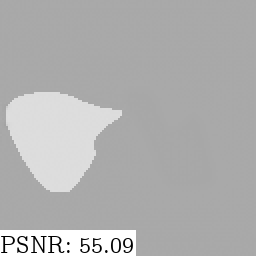}
		\caption{$x^*$ (low blur)}
		\label{fig:cartoon_pnp_pgm_output_blur_low}
	\end{subfigure}
	\begin{subfigure}[b]{0.3\linewidth}
		\centering
		\includegraphics[width=0.99\linewidth]{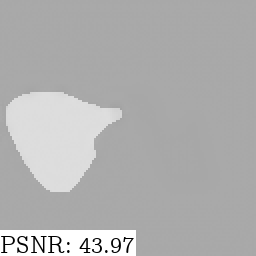}
		\caption{$x^*$ (high blur)}
		\label{fig:cartoon_pnp_pgm_output_blur_high}
	\end{subfigure}
	\begin{subfigure}[b]{0.325\linewidth}
		\centering
		\includegraphics[width=0.99\linewidth]{./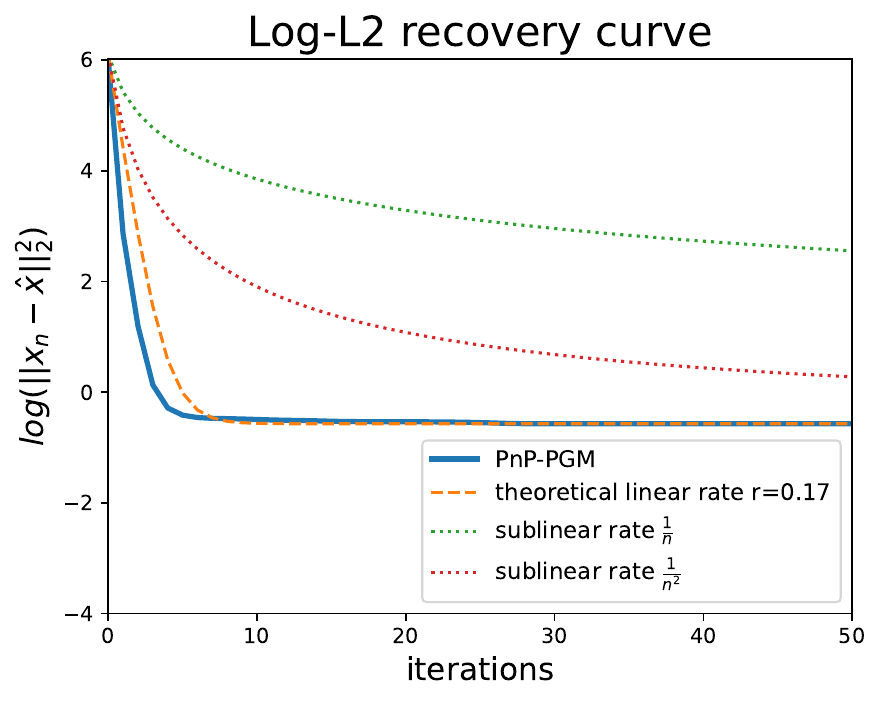}
		\caption{Log-L2 recovery curve (inpainting)}
		\label{plot:synthetic_inpainting_pnp_pgm}
	\end{subfigure}
	\begin{subfigure}[b]{0.325\linewidth}
		\centering
		\includegraphics[width=.99\linewidth]{./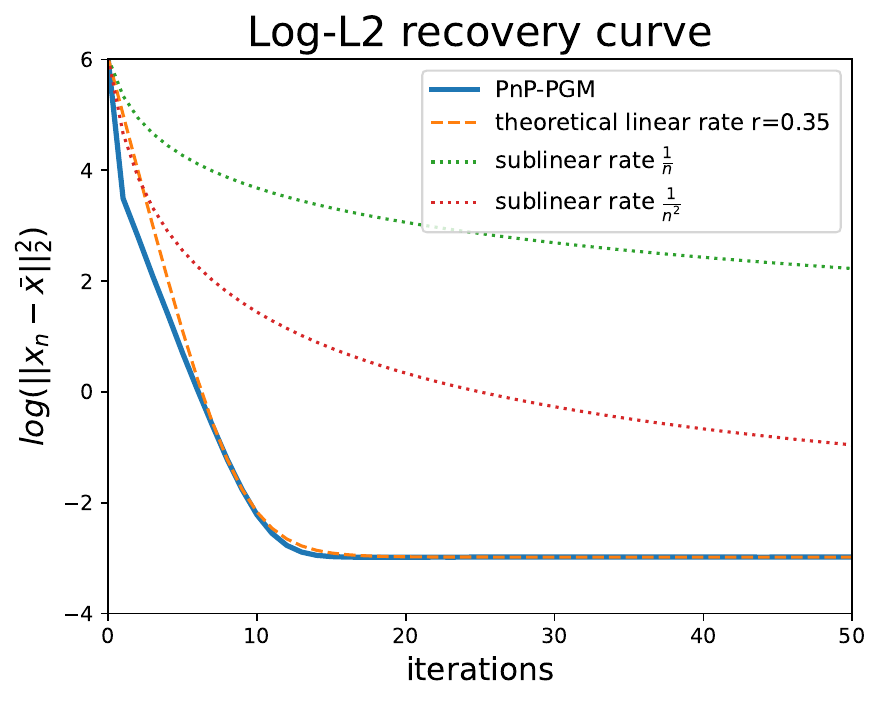}
		\caption{Log-L2 recovery curve (low blur $\sigma=1.0$)}
		\label{plot:synthetic_blur_low_pnp_pgm}
	\end{subfigure}
	\begin{subfigure}[b]{0.325\linewidth}
		\centering
		\includegraphics[width=0.99\linewidth]{./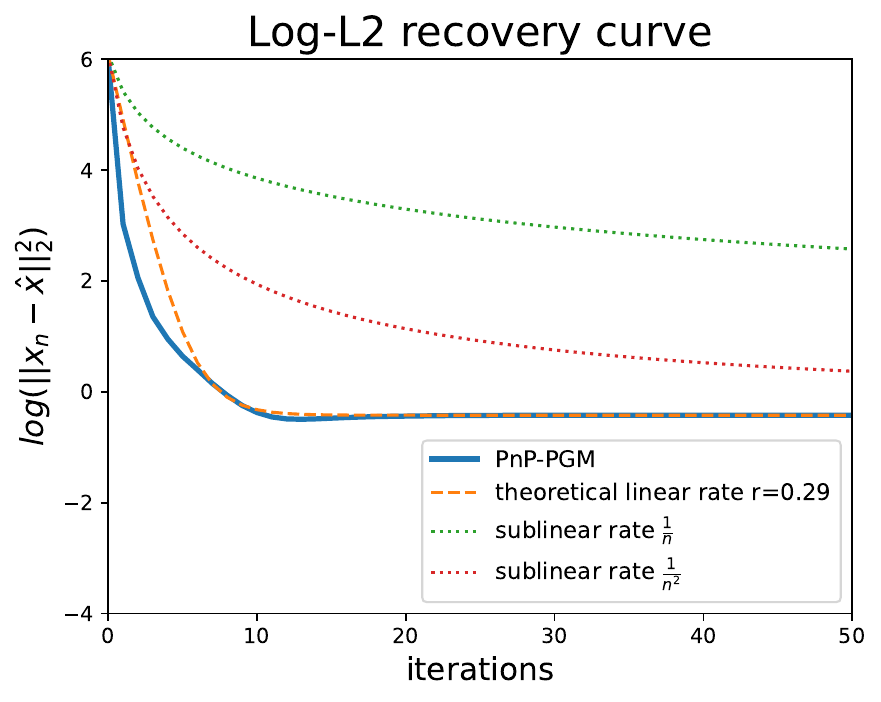}
		\caption{Log-L2 recovery curve (high blur $\sigma=3.0$)}
		\label{plot:synthetic_blur_high_pnp_pgm}
	\end{subfigure}
	\caption{Experiments for the PnP-PGM algorithm \eqref{eq:PnP_it} on a synthetic image that is an approximate fixed point of a DRUNet denoiser. For each linear measurement operator, the theoretical linear convergence rate matches well the Log-L2 recovery curve.}
	\label{img:cartoon_pnp_pgm}
		\vspace{-5mm}
\end{figure}

\subsubsection{Natural Images}

In these experiments, we use a DRUNet denoiser trained on natural images (we used the weights provided by the DeepInverse library, see Acknowledgements at the end of the paper) and with the noise level $\eta$ as input (non-blind denoising). To obtain a fixed point of the denoiser, we apply  the denoiser on an original natural image multiple times with the entry noise level set to $\eta{=}0.18$ (This parameter is manually set to give good performance). For both images tested (see Figures~\ref{fig:starfish_pnp_pgm} and~\ref{fig:butterfly_pnp_pgm}), we observe that the theoretical linear convergence rate curve (calculated in the same way as the previous experiment) indeed matches well with the convergence curve $\|x_n - \hat{x}\|_2^2$. As expected, the higher blur (i.e. a degraded Restricted Isometry Constant) leads to a decrease in the performance of recovery. Also note that when the linear rate of convergence decreases as in the high blur experiment of Figure~\ref{fig:butterfly_pnp_pgm}, it becomes harder to distinguish from a fast sub-linear rate ($\frac{1}{n^2}$).

%butterfly
\begin{figure}[]
	\centering
	\begin{subfigure}[b]{0.3\linewidth}
		\centering
		\includegraphics[width=0.99\linewidth]{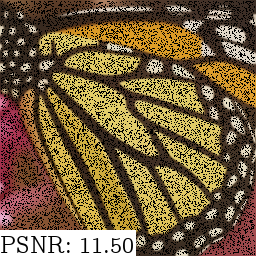}
		\caption{$y$ inpainting}
		\label{fig:butterfly_pnp_pgm_y_mask}
	\end{subfigure}
	\begin{subfigure}[b]{0.3\linewidth}
		\centering
		\includegraphics[width=0.99\linewidth]{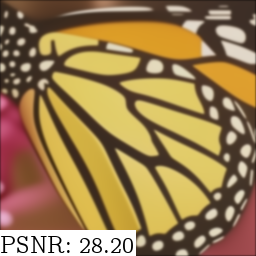}
		\caption{$y$ low blur $\sigma{=}1.0$}
		\label{fig:butterfly_pnp_pgm_y_blur_low}
	\end{subfigure}
	\begin{subfigure}[b]{0.3\linewidth}
		\centering
		\includegraphics[width=0.99\linewidth]{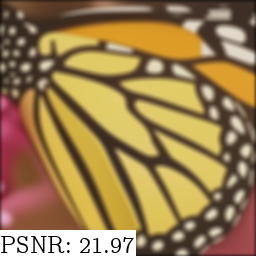}
		\caption{$y$ high blur $\sigma{=}3.0$}
		\label{fig:butterfly_pnp_pgm_y_blur_high}
	\end{subfigure}
	\\
	\begin{subfigure}[b]{0.3\linewidth}
		\centering
		\includegraphics[width=0.99\linewidth]{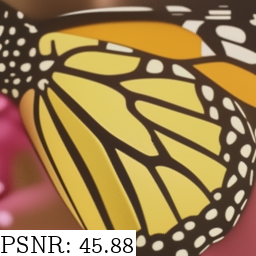}
		\caption{$x^*$ (inpainting)}
		\label{fig:butterfly_pnp_pgm_output_mask}
	\end{subfigure}
	\begin{subfigure}[b]{0.3\linewidth}
		\centering
		\includegraphics[width=0.99\linewidth]{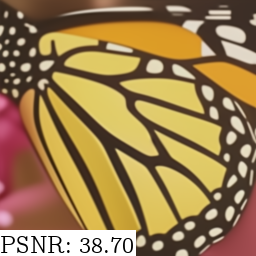}
		\caption{$x^*$ (low blur $\sigma=1.0$)}
		\label{fig:butterfly_pnp_pgm_output_blur_low}
	\end{subfigure}
	\begin{subfigure}[b]{0.3\linewidth}
		\centering
		\includegraphics[width=0.99\linewidth]{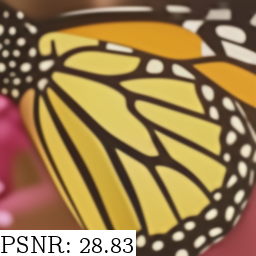}
		\caption{$x^*$ (high blur $\sigma=3.0$)}
		\label{fig:butterfly_pnp_pgm_output_blur_high}
	\end{subfigure}
	\\
	\begin{subfigure}[b]{0.325\linewidth}
		\centering
		\includegraphics[width=0.99\linewidth]{./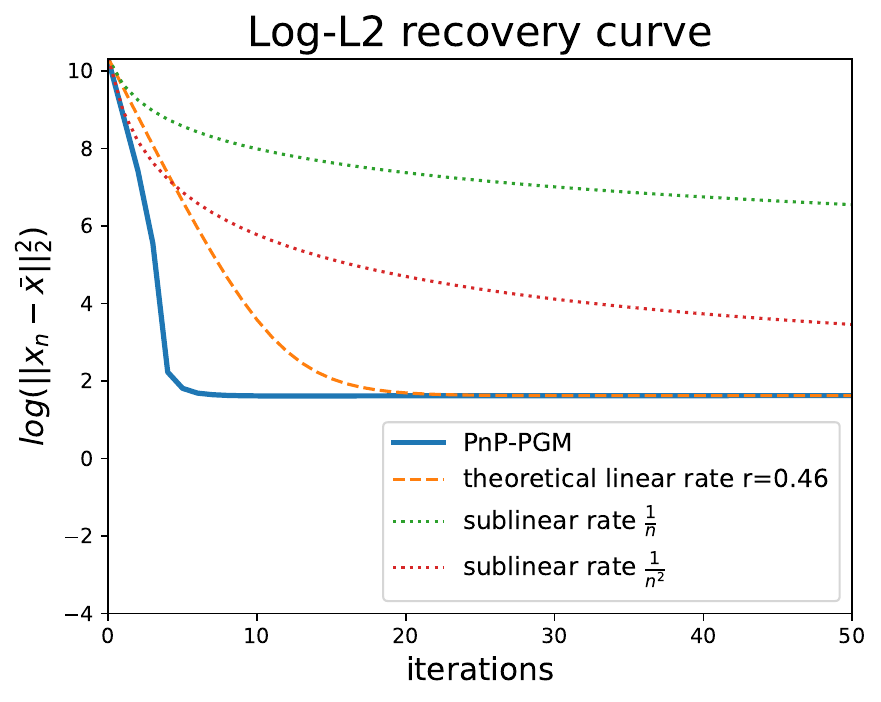}
		\caption{Log-L2 recovery curve (inpainting)}
		\label{plot:butterfly_pnp_pgm_mask}
	\end{subfigure}
	\begin{subfigure}[b]{0.325\linewidth}
		\centering
		\includegraphics[width=.99\linewidth]{./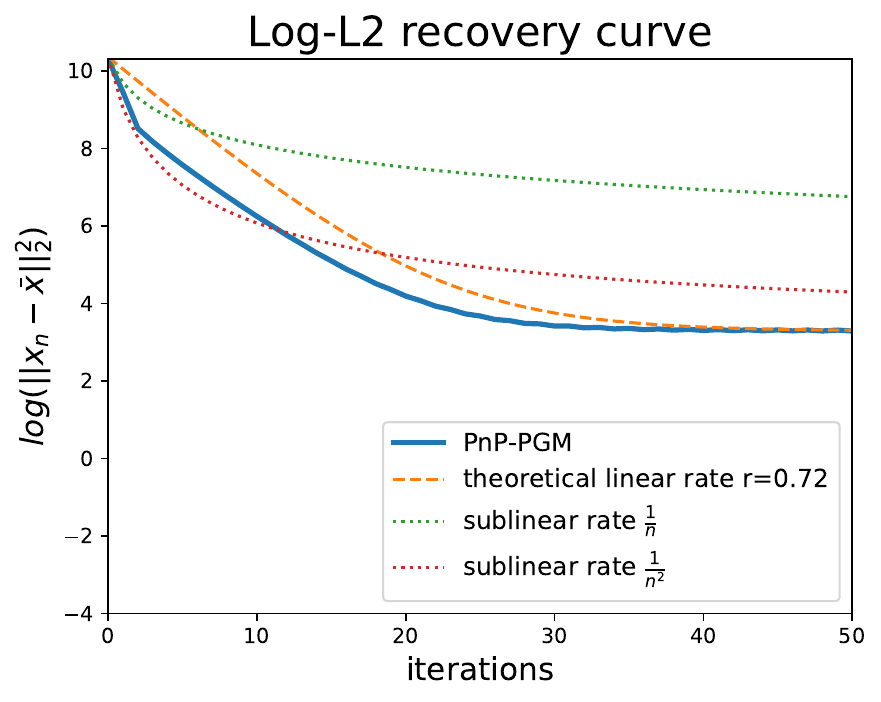}
		\caption{Log-L2 recovery curve (low blur $\sigma=1.0$)}
		\label{plot:butterfly_pnp_pgm_blur_low}
	\end{subfigure}
	\begin{subfigure}[b]{0.325\linewidth}
		\centering
		\includegraphics[width=0.99\linewidth]{./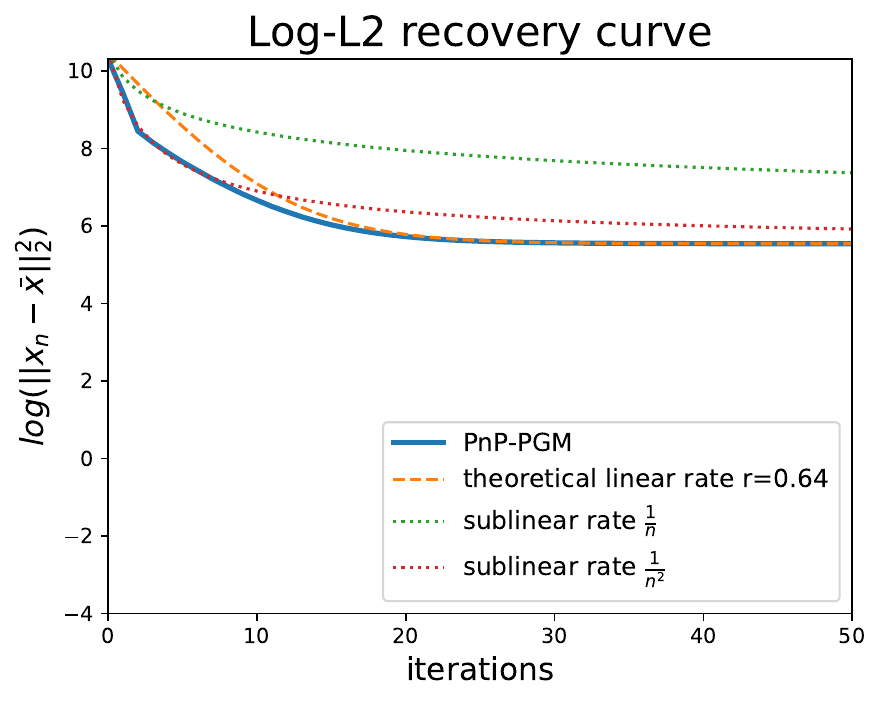}
		\caption{Log-L2 recovery curve (high blur $\sigma=3.0$)}
		\label{plot:butterfly_pnp_pgm_blur_high}
	\end{subfigure}
	\caption{Experiment of the PnP-PGM algorithm \eqref{eq:PnP_it} on an image that is an approximate fixed point of the blind denoiser. For each linear operator, the theoretical linear convergence rate matches well the Log-L2 recovery curve.}
	\label{fig:butterfly_pnp_pgm}
	\vspace*{-5mm}
\end{figure}

We verify in Figure~\ref{fig:verification_RIP_beta} that the quantities used to define the RIC ($\delta_n = \|(I-\mu A^TA)(D(x_n)-\hat{x})\|_2/\|D(x_n)-\hat{x}\|_2$) and the restricted Lipschitz constant ($\beta_n = \|D(x_n)-\hat{x}\|_2/\|x_n-\hat{x}\|$) are bounded over the iterates of PNP-PGM. In particular the condition on the rate criterion $\delta_n \beta_n <1$ is verified. The convergence to $1$ of $\delta_n\beta_n$ is to be expected for the same reason as in~\eqref{eq:numerical_linear_bound}. We  verify that the chosen denoiser maps to a set of approximate fixed points with a deviation of $2\%$ of the quantity $\|D^2(x_n)-D(x_n)\|_2$ with respect to $\|D(x_n)\|_2$. Hence the deviation from our theoretical set-up is low.

\begin{figure}
	\centering
	\begin{subfigure}[b]{0.45\linewidth}
		\centering
		\includegraphics[width=0.99\linewidth]{./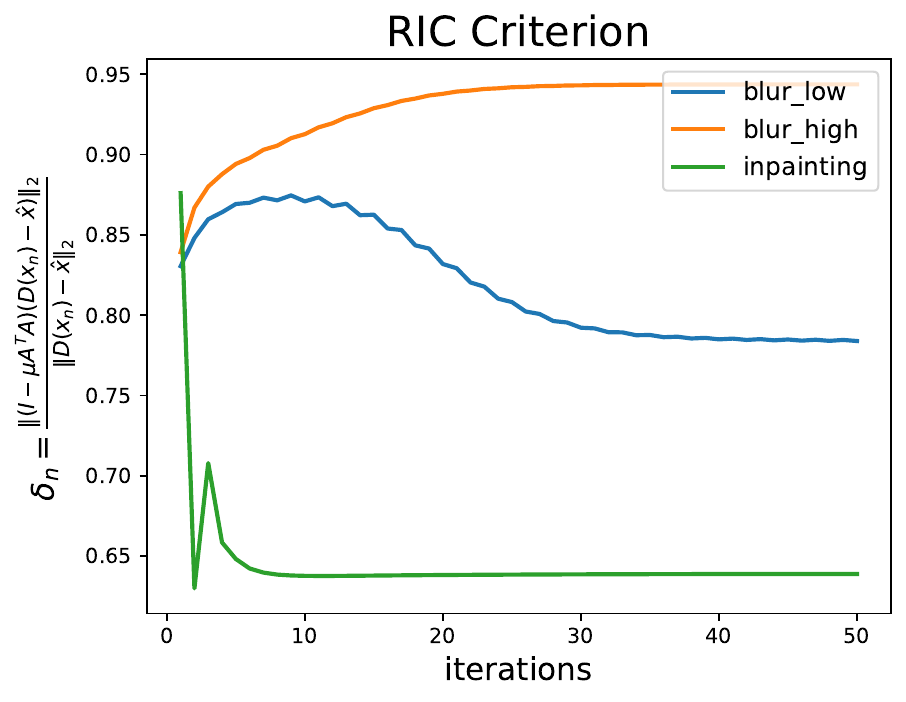}
		\caption{Evolution of the RIC criterion}
		\label{fig:butterfly_pnp_pgm_rip}
	\end{subfigure}
	\begin{subfigure}[b]{0.45\linewidth}
		\centering
		\includegraphics[width=0.99\linewidth]{./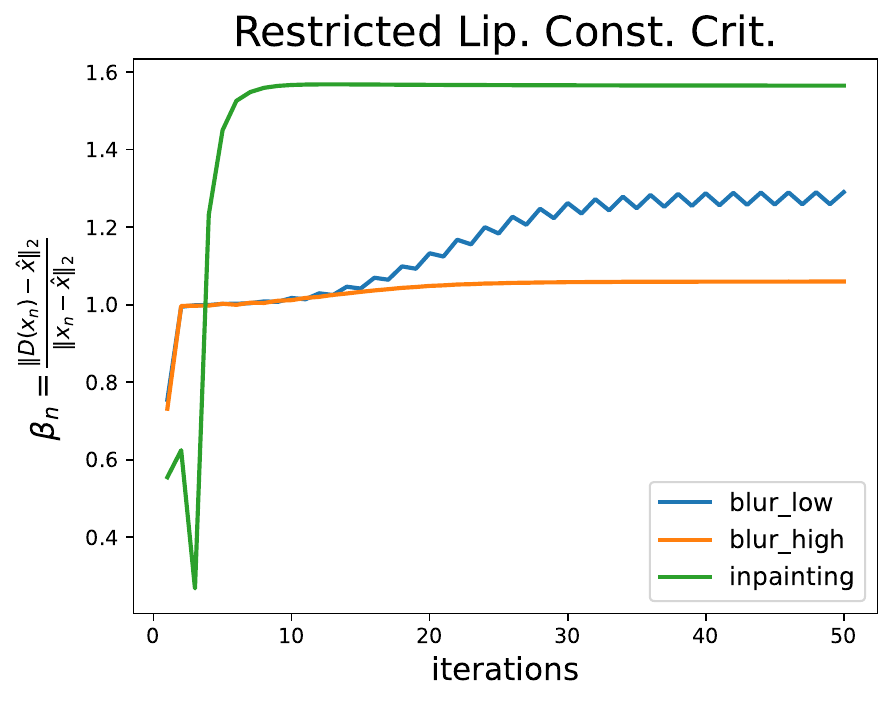}
		\caption{Evolution of the restricted Lipschitz constant criterion}
		\label{fig:butterfly_pnp_pgm_beta}
	\end{subfigure}\\
	\begin{subfigure}[b]{0.45\linewidth}
		\centering
		\includegraphics[width=0.99\linewidth]{./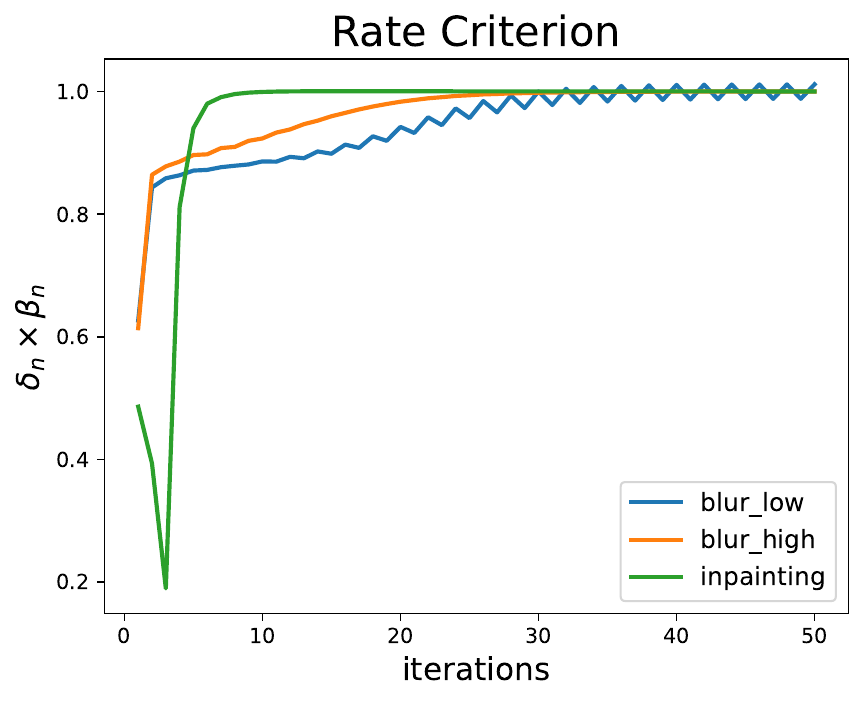}
		\caption{Evolution of the rate criterion}
		\label{fig:butterfly_pnp_pgm_linear_rate}
	\end{subfigure}
	\begin{subfigure}[b]{0.45\linewidth}
		\centering
		\includegraphics[width=0.99\linewidth]{./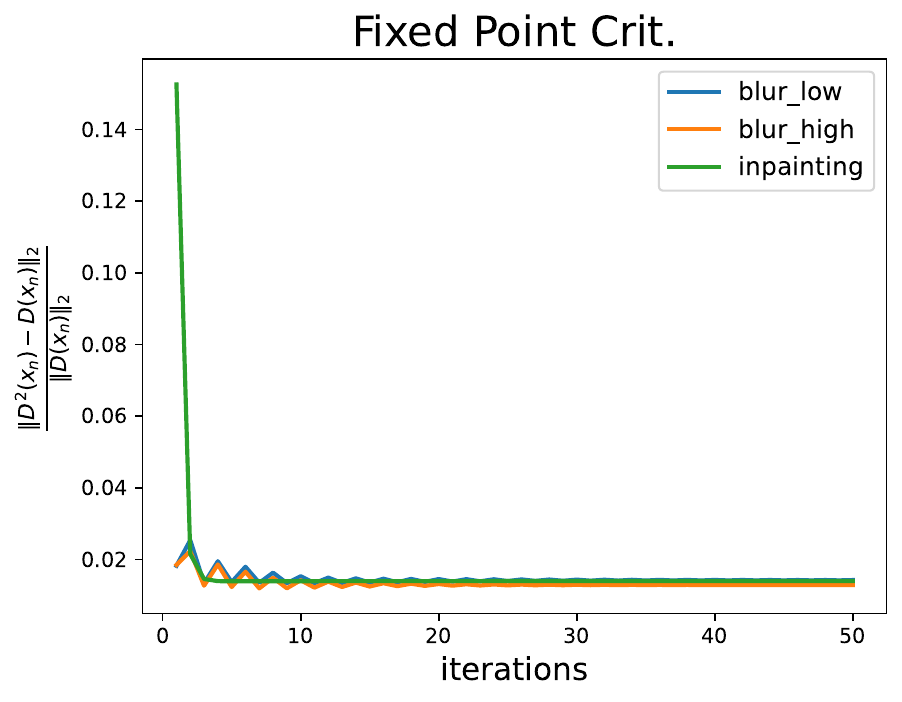}
		\caption{Fixed point evaluation of the denoiser $D_{\Sigma}$ during the iterations}
		\label{fig:butterfly_pnp_pgm_idem}
	\end{subfigure}
	\caption{ Evaluation of the hypotheses of Theorem \ref{th:gen_convevergence} within the PnP-PGM iterations for the Butterfly image (Figure \ref{fig:butterfly_pnp_pgm}). We assume that $D(x_n), \hat{x} \in \Sigma$. (\ref{fig:butterfly_pnp_pgm_rip}, \ref{fig:butterfly_pnp_pgm_beta}): The RIP property is verified with $\delta_n< 1$ and $D$ has a bounded restricted Lipschitz constant $\beta_n$. The rate criterion $\delta_n \times \beta_n$ verifies  $\delta_n \times \beta_n < 1$ (\ref{fig:butterfly_pnp_pgm_linear_rate}).(\ref{fig:butterfly_pnp_pgm_idem}): The denoiser $D$ is approximately idempotent, with less than $0.02$ relative error over the iterations. }
	\label{fig:verification_RIP_beta}
\end{figure}

%starfish
\begin{figure}[]
	\centering
	\begin{subfigure}[b]{0.3\linewidth}
		\centering
		\includegraphics[width=0.99\linewidth]{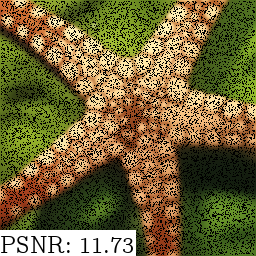}
		\caption{$y$ inpainting}
		\label{fig:starfish_pnp_pgm_y_mask}
	\end{subfigure}
	\begin{subfigure}[b]{0.3\linewidth}
		\centering
		\includegraphics[width=0.99\linewidth]{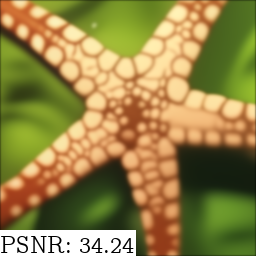}
		\caption{$y$ low blur $\sigma = 1.0$}
		\label{fig:starfish_pnp_pgm_y_blur_low}
	\end{subfigure}
	\begin{subfigure}[b]{0.3\linewidth}
		\centering
		\includegraphics[width=0.99\linewidth]{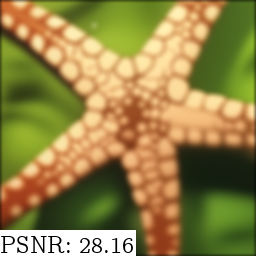}
		\caption{$y$ high blur $\sigma = 1.0$}
		\label{fig:starfish_pnp_pgm_y_blur_high}
	\end{subfigure}\\
	\begin{subfigure}[b]{0.3\linewidth}
		\centering
		\includegraphics[width=0.99\linewidth]{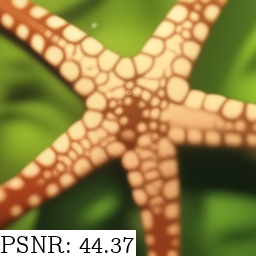}
		\caption{$x^*$ (inpainting)}
		\label{fig:starfish_pnp_pgm_output_mask}
	\end{subfigure}
	\begin{subfigure}[b]{0.3\linewidth}
		\centering
		\includegraphics[width=0.99\linewidth]{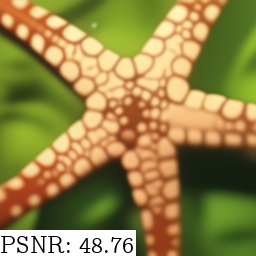}
		\caption{$x^*$ (low blur $\sigma=1.0$)}
		\label{fig:starfish_pnp_pgm_output_blur_low}
	\end{subfigure}
	\begin{subfigure}[b]{0.3\linewidth}
		\centering
		\includegraphics[width=0.99\linewidth]{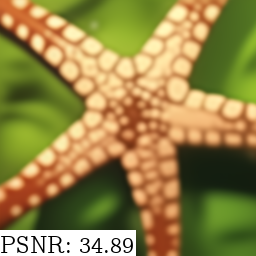}
		\caption{$x^*$ (high blur $\sigma=3.0$)}
		\label{fig:starfish_pnp_pgm_output_blur_high}
	\end{subfigure}\\
	\begin{subfigure}[b]{0.325\linewidth}
		\centering
		\includegraphics[width=0.99\linewidth]{./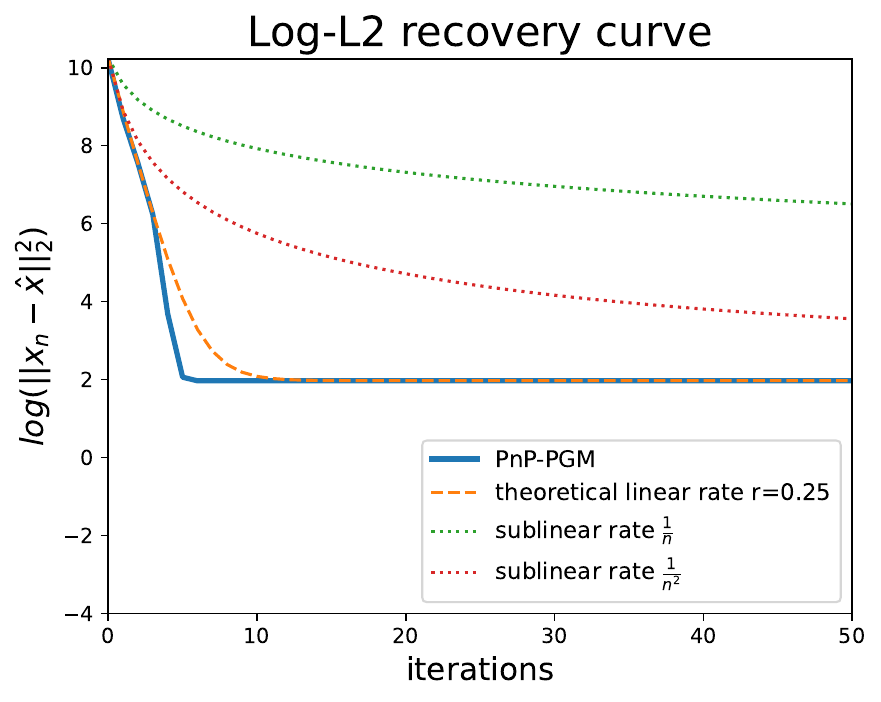}
		\caption{Log-L2 recovery curve (inpainting)}
		\label{plot:starfish_pnp_pgm_mask}
	\end{subfigure}
	\begin{subfigure}[b]{0.325\linewidth}
		\centering
		\includegraphics[width=.99\linewidth]{./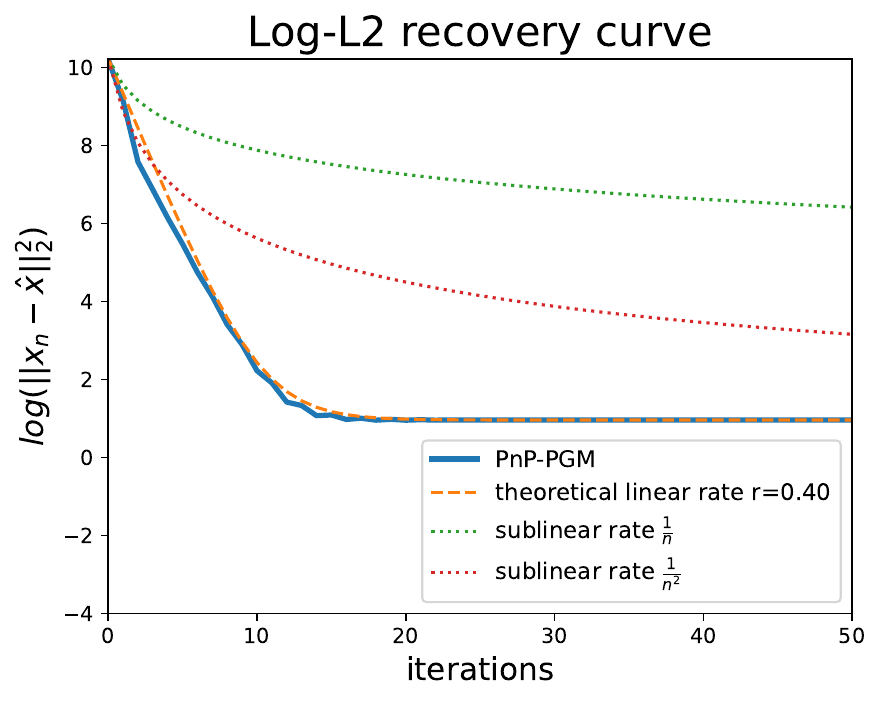}
		\caption{Log-L2 recovery curve (low blur $\sigma=1.0$)}
		\label{plot:starfish_pnp_pgm_blur_low}
	\end{subfigure}
	\begin{subfigure}[b]{0.325\linewidth}
		\centering
		\includegraphics[width=0.99\linewidth]{./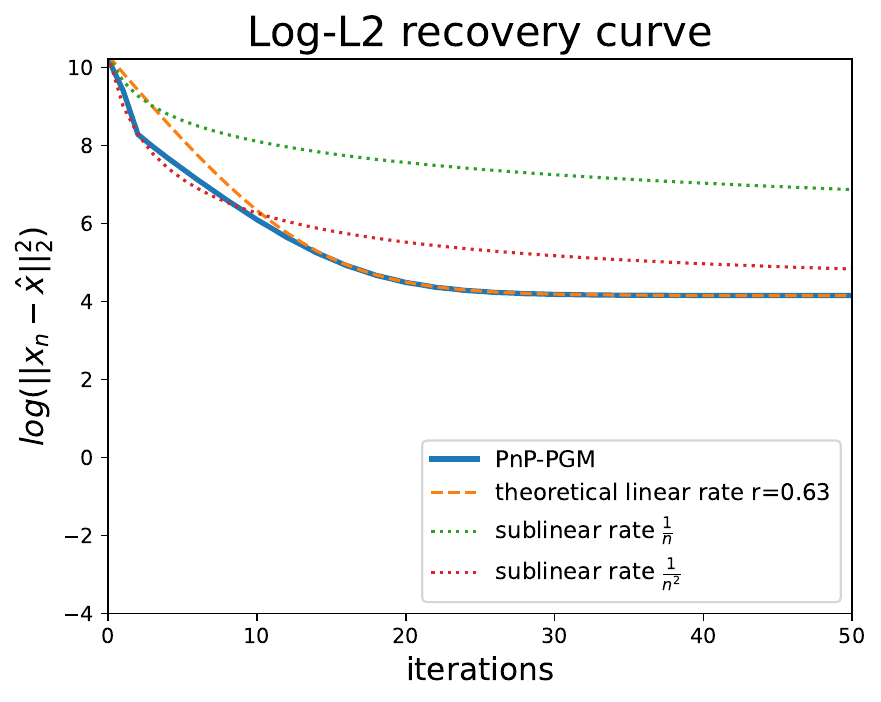}
		\caption{Log-L2 recovery curve (high blur $\sigma=3.0$)}
		\label{plot:starfish_pnp_pgm_blur_high}
	\end{subfigure}
	\caption{Experiment of the PnP-PGM algorithm \eqref{eq:PnP_it} on an image that is an approximate fixed point of the blind denoiser. For each linear operator,  the theoretical linear convergence rate matches well the Log-L2 recovery curve.}
	\label{fig:starfish_pnp_pgm}
		\vspace{-5mm}
\end{figure}

\subsubsection{Comparison between PNP-PGM and GM-RED}\label{sec:comp_PGM_RED}
We perform the same previous experiments with the GM-RED algorithm, which falls outside the theoretical results provided in this paper. Once again, we observe that the theoretical linear convergence rate matched very well the convergence curve $\|x_n-\hat{x}\|_2^2$ (see Fig. \ref{img:images_synthetic_gm_red}, \ref{img:images_butterfly_gm_red}, \ref{img:images_starfish_gm_red}). Furthermore, the measured convergence rate of GM-RED is slower than PnP-PGM, reinforcing our initial conjecture. Interestingly, we found that GM-RED was able to obtain a better recovery of our estimated fixed point than PnP-PGM in some experiments (particularly when the measurement operator is better conditioned (low blur)). This could be due to the additional parameter in GM-RED which allows a finer direction towards the low-dimensional model $\Sigma$ induced by the denoiser or simply that our choice of approximate fixed point just matches better the GM-RED algorithm. This opens a question to include approximation error in the projection  in our analysis (as was for example done in~\cite{golbabaee2018inexact}).

While these experiments do not prove the optimality (with respect to the rate of convergence) of the  projected gradient method for low-dimensional recovery, it suggests that our analysis lays out good foundations for a global understanding of low-dimensional recovery with deep priors.

% synthetic

\begin{figure}[]
	\centering
	\begin{subfigure}[b]{0.325\linewidth}
		\centering
		\includegraphics[width=0.99\linewidth]{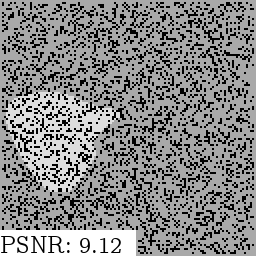}
		\caption{$y$ inpainting}
		\label{fig:synthetic_gm_red_y_mask}
	\end{subfigure}
	\begin{subfigure}[b]{0.325\linewidth}
		\centering
		\includegraphics[width=0.99\linewidth]{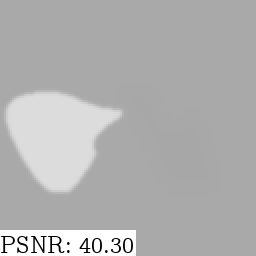}
		\caption{$y$ low blur $\sigma=1.0$}
		\label{fig:synthetic_gm_red_y_blur_low}
	\end{subfigure}
	\begin{subfigure}[b]{0.325\linewidth}
		\centering
		\includegraphics[width=0.99\linewidth]{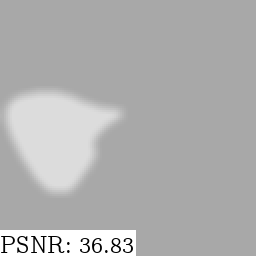}
		\caption{$y$ high blur $\sigma=3.0$}
		\label{fig:synthetic_gm_red_y_blur_high}
	\end{subfigure}\\
	\begin{subfigure}[b]{0.325\linewidth}
		\centering
		\includegraphics[width=0.99\linewidth]{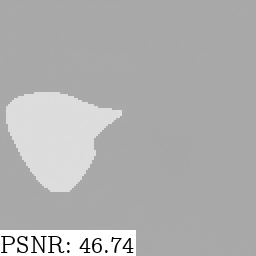}
		\caption{$x^\ast$ (inpainting)}
		\label{fig:synthetic_gm_red_output_mask}
	\end{subfigure}
	\begin{subfigure}[b]{0.325\linewidth}
		\centering
		\includegraphics[width=0.99\linewidth]{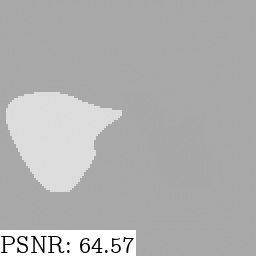}
		\caption{$x^\ast$ (low blur $\sigma{=}1.0$)}
		\label{fig:synthetic_gm_red_output_blur_low}
	\end{subfigure}
	\begin{subfigure}[b]{0.325\linewidth}
		\centering
		\includegraphics[width=0.99\linewidth]{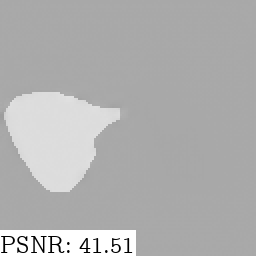}
		\caption{$x^\ast$ (high blur $\sigma=3.0$)}
		\label{fig:synthetic_gm_red_output_blur_high}
	\end{subfigure}\\
	\begin{subfigure}[b]{0.325\linewidth}
		\centering
		\includegraphics[width=0.99\linewidth]{./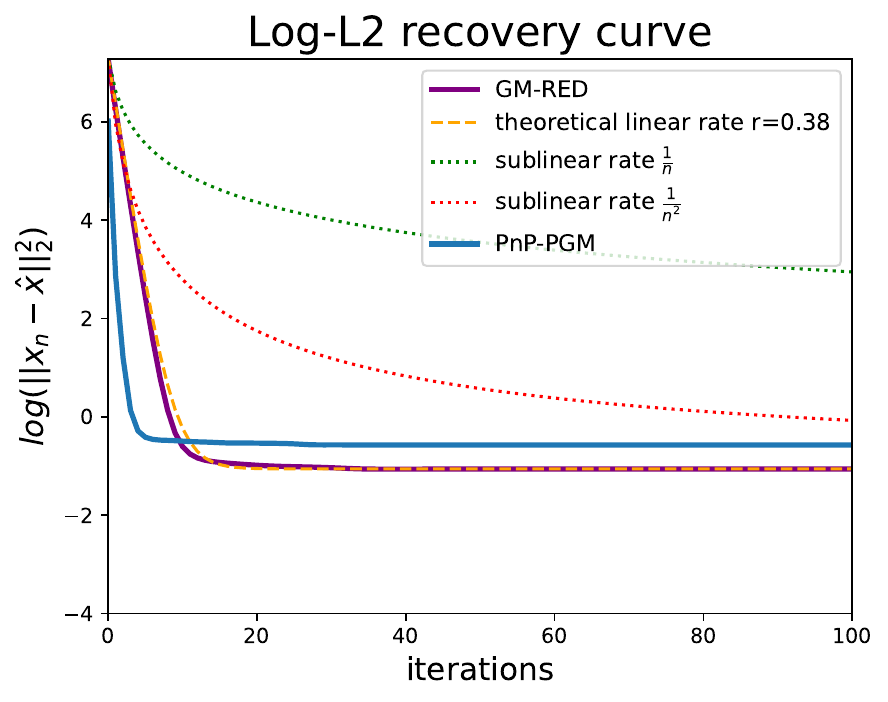}
		\caption{Log-L2 recovery curve (inpainting)}
		\label{plot:synthetic_gm_red_mask}
	\end{subfigure}
	\begin{subfigure}[b]{0.325\linewidth}
		\centering
		\includegraphics[width=.99\linewidth]{./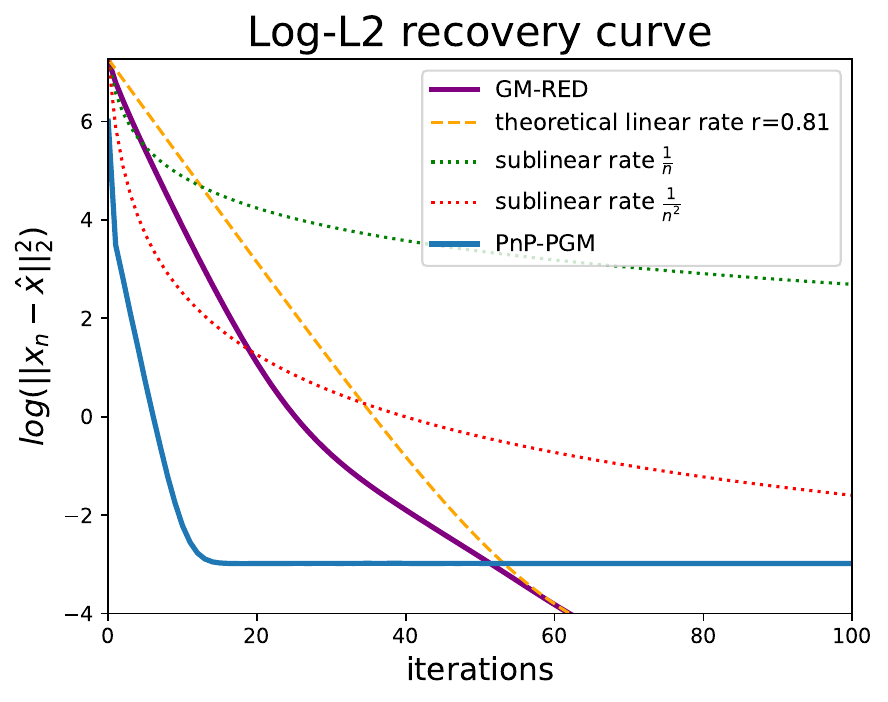}
		\caption{Log-L2 recovery curve (low blur $\sigma=1.0$)}
		\label{plot:synthetic_gm_red_blur_low}
	\end{subfigure}
	\begin{subfigure}[b]{0.325\linewidth}
		\centering
		\includegraphics[width=0.99\linewidth]{./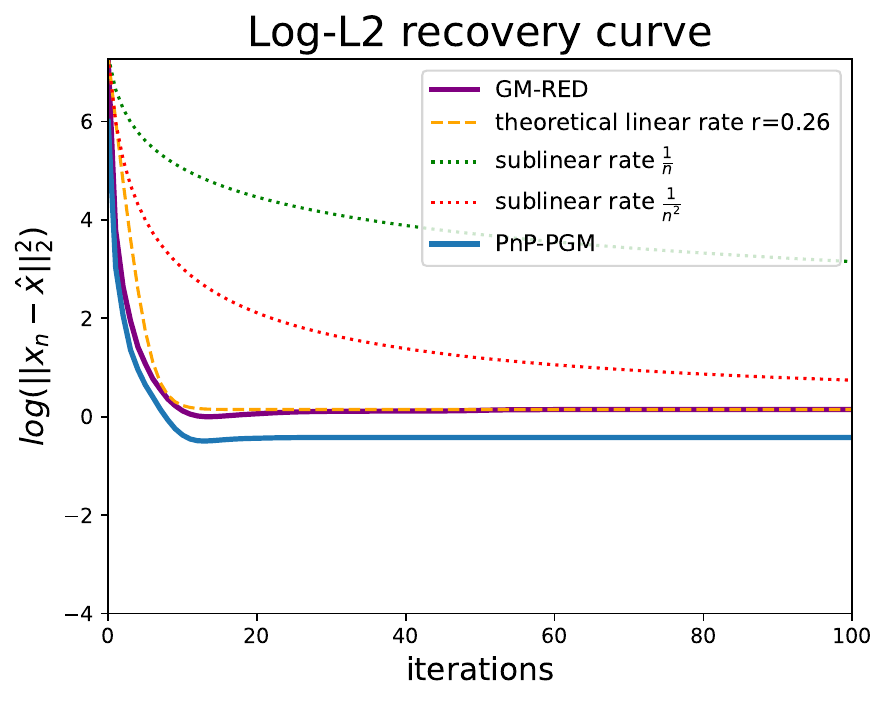}
		\caption{Log-L2 recovery curve (high blur $\sigma=3.0$)}
		\label{plot:synthetic_gm_red_blur_high}
	\end{subfigure}
	\caption{Experiments of the GM-RED algorithm \eqref{eq:PnP_it} on a synthetic image for different measurement operators. We observe that GM-RED presents a linear convergence rate. Moreover, the measured convergence rate is slower than PnP-PGM.}
	\label{img:images_synthetic_gm_red}
	\vspace{-3mm}
		\vspace{-5mm}
\end{figure}

% butterfly
\begin{figure}[]
	\centering
	\begin{subfigure}[b]{0.325\linewidth}
		\centering
		\includegraphics[width=0.99\linewidth]{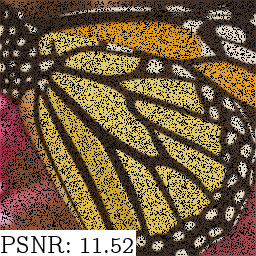}
		\caption{}
		\label{fig:butterfly_gm_red_y_mask}
	\end{subfigure}
	\begin{subfigure}[b]{0.325\linewidth}
		\centering
		\includegraphics[width=0.99\linewidth]{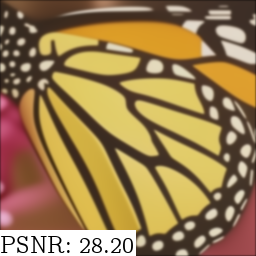}
		\caption{}
		\label{fig:butterfly_gm_red_y_blur_low}
	\end{subfigure}
	\begin{subfigure}[b]{0.325\linewidth}
		\centering
		\includegraphics[width=0.99\linewidth]{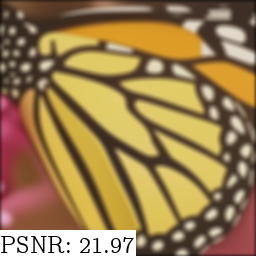}
		\caption{}
		\label{fig:butterfly_gm_red_y_blur_high}
	\end{subfigure}
	\\
	\begin{subfigure}[b]{0.325\linewidth}
		\centering
		\includegraphics[width=0.99\linewidth]{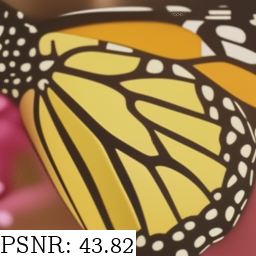}
		\caption{}
		\label{fig:butterfly_gm_red_output_mask}
	\end{subfigure}
	\begin{subfigure}[b]{0.325\linewidth}
		\centering
		\includegraphics[width=0.99\linewidth]{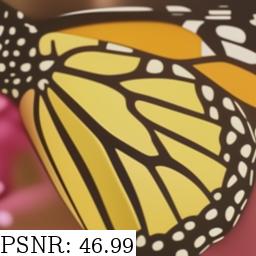}
		\caption{}
		\label{fig:butterfly_gm_red_output_blur_low}
	\end{subfigure}
	\begin{subfigure}[b]{0.325\linewidth}
		\centering
		\includegraphics[width=0.99\linewidth]{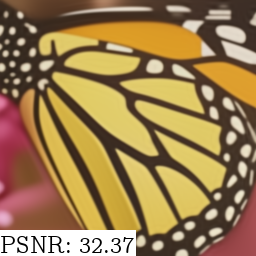}
		\caption{}
		\label{fig:butterfly_gm_red_output_blur_high}
	\end{subfigure}
	\\
	\begin{subfigure}[b]{0.325\linewidth}
		\centering
		\includegraphics[width=0.99\linewidth]{./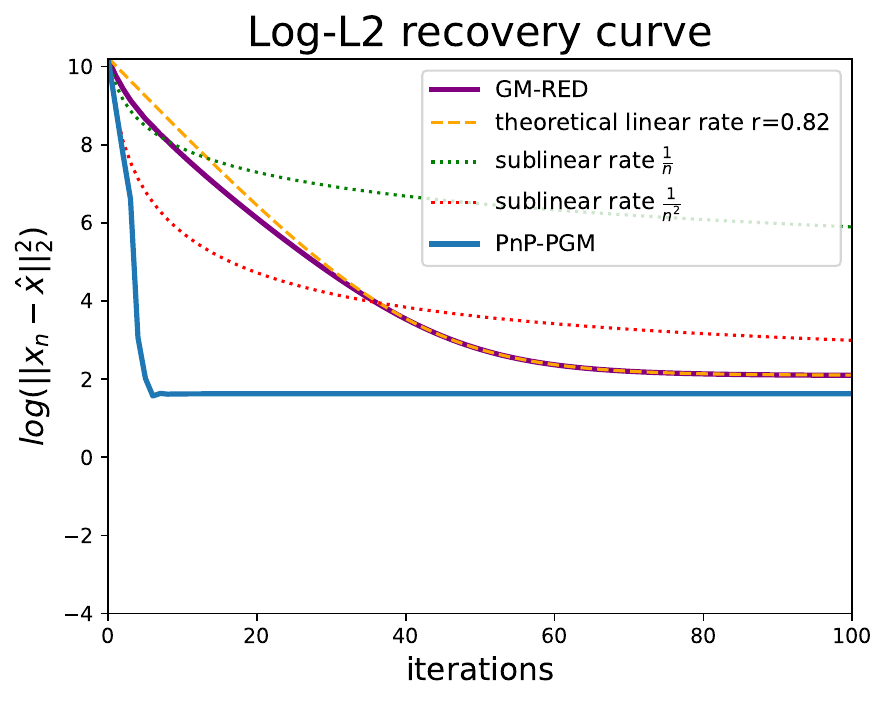}
		\caption{}
		\label{plot:butterfly_gm_red_mask}
	\end{subfigure}
	\begin{subfigure}[b]{0.325\linewidth}
		\centering
		\includegraphics[width=.99\linewidth]{./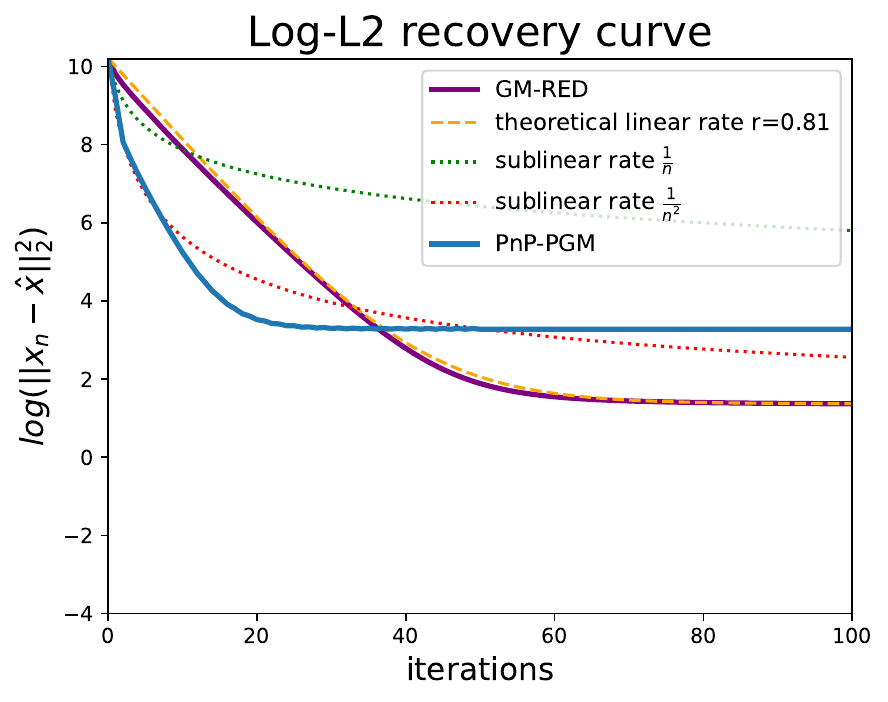}
		\caption{}
		\label{plot:butterfly_gm_red_blur_low}
	\end{subfigure}
	\begin{subfigure}[b]{0.325\linewidth}
		\centering
		\includegraphics[width=0.99\linewidth]{./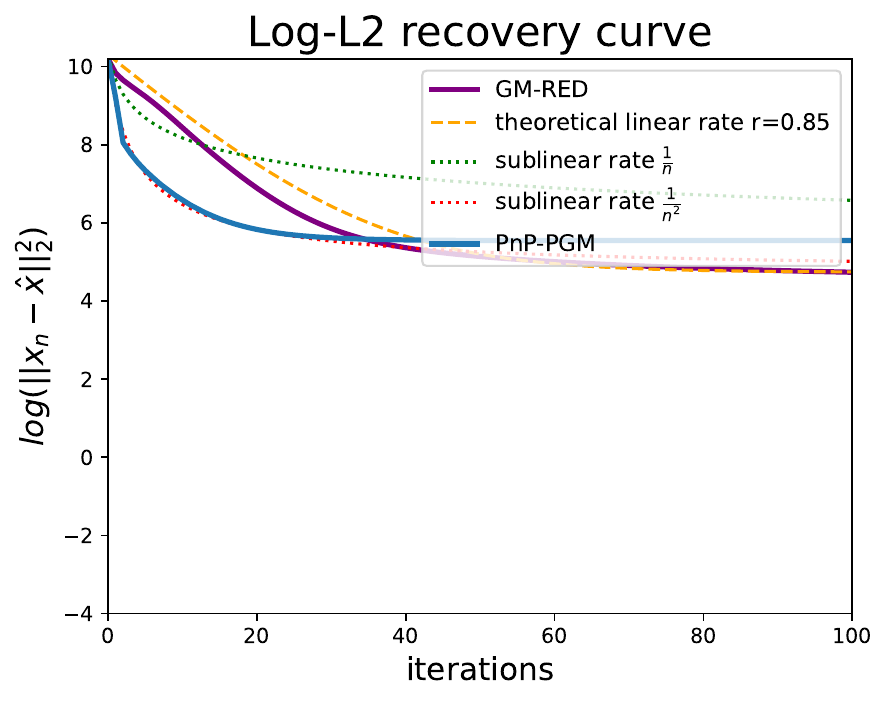}
		\caption{}
		\label{plot:butterfly_gm_red_blur_high}
	\end{subfigure}
	\caption{Experiment of the GM-RED algorithm \eqref{eq:PnP_it} on the butterfly image for different linear operations. We observe that GM-RED presents a linear convergence rate. Moreover, the measured convergence rate is slower than PnP-PGM.}
	\label{img:images_butterfly_gm_red}
		\vspace*{-4mm}
\end{figure}

% starfish
\begin{figure}[ht!]
	\centering
	\begin{subfigure}[b]{0.325\linewidth}
		\centering
		\includegraphics[width=0.99\linewidth]{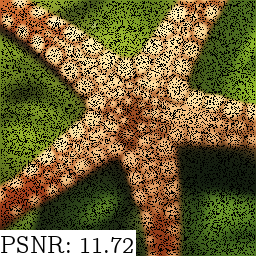}
		\caption{$y$ (inpainting)}
		\label{fig:starfish_gm_red_y_mask}
	\end{subfigure}
	\begin{subfigure}[b]{0.325\linewidth}
		\centering
		\includegraphics[width=0.99\linewidth]{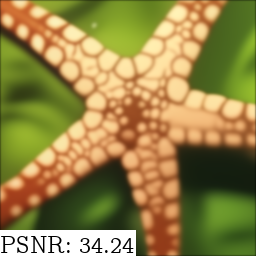}
		\caption{$y$ (low blur $\sigma=1.0$)}
		\label{fig:starfish_gm_red_y_blur_low}
	\end{subfigure}
	\begin{subfigure}[b]{0.325\linewidth}
		\centering
		\includegraphics[width=0.99\linewidth]{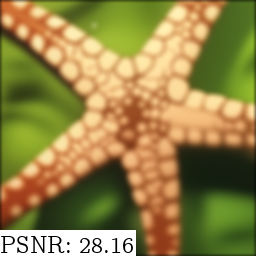}
		\caption{$y$ (high blur $\sigma=3.0$)}
		\label{fig:starfish_gm_red_y_blur_high}
	\end{subfigure}
	\\
	\begin{subfigure}[b]{0.325\linewidth}
		\centering
		\includegraphics[width=0.99\linewidth]{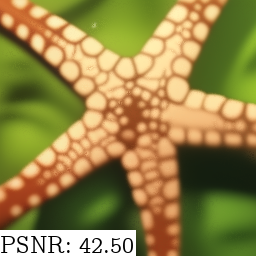}
		\caption{$x^\ast$ (inpainting)}
		\label{fig:starfish_gm_red_output_mask}
	\end{subfigure}
	\begin{subfigure}[b]{0.325\linewidth}
		\centering
		\includegraphics[width=0.99\linewidth]{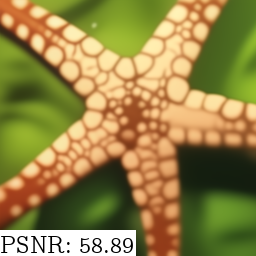}
		\caption{$x^\ast$ (low blur $\sigma=1.0$)}
		\label{fig:starfish_gm_red_output_blur_low}
	\end{subfigure}
	\begin{subfigure}[b]{0.325\linewidth}
		\centering
		\includegraphics[width=0.99\linewidth]{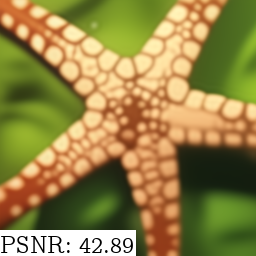}
		\caption{$x^\ast$ (high blur $\sigma=3.0$)}
		\label{fig:starfish_gm_red_output_blur_high}
	\end{subfigure}
	\\
	\begin{subfigure}[b]{0.325\linewidth}
		\centering
		\includegraphics[width=0.99\linewidth]{./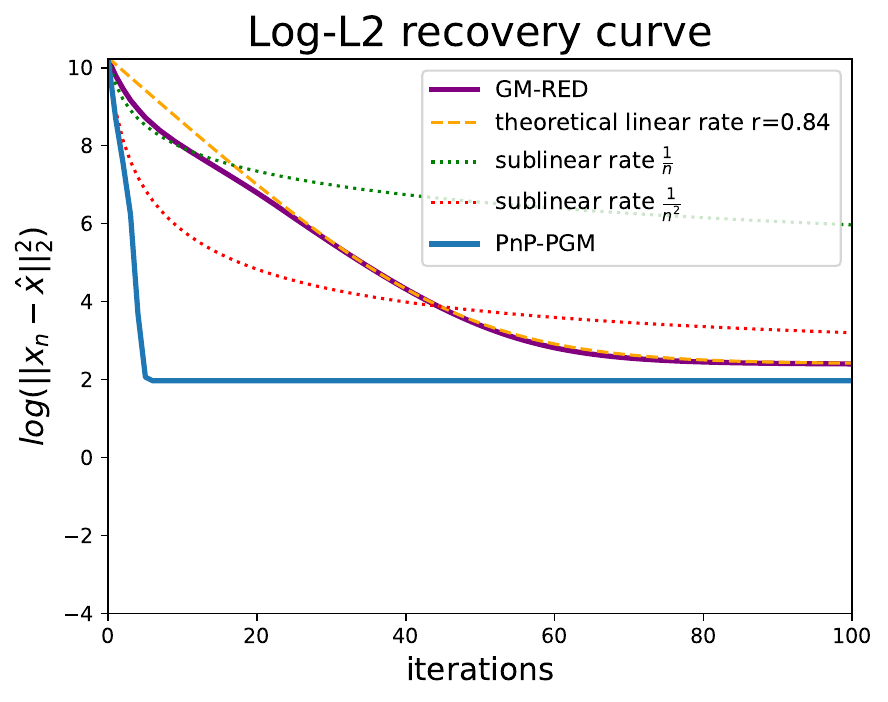}
		\caption{}
		\label{plot:starfish_gm_red_mask}
	\end{subfigure}
	\begin{subfigure}[b]{0.325\linewidth}
		\centering
		\includegraphics[width=.99\linewidth]{./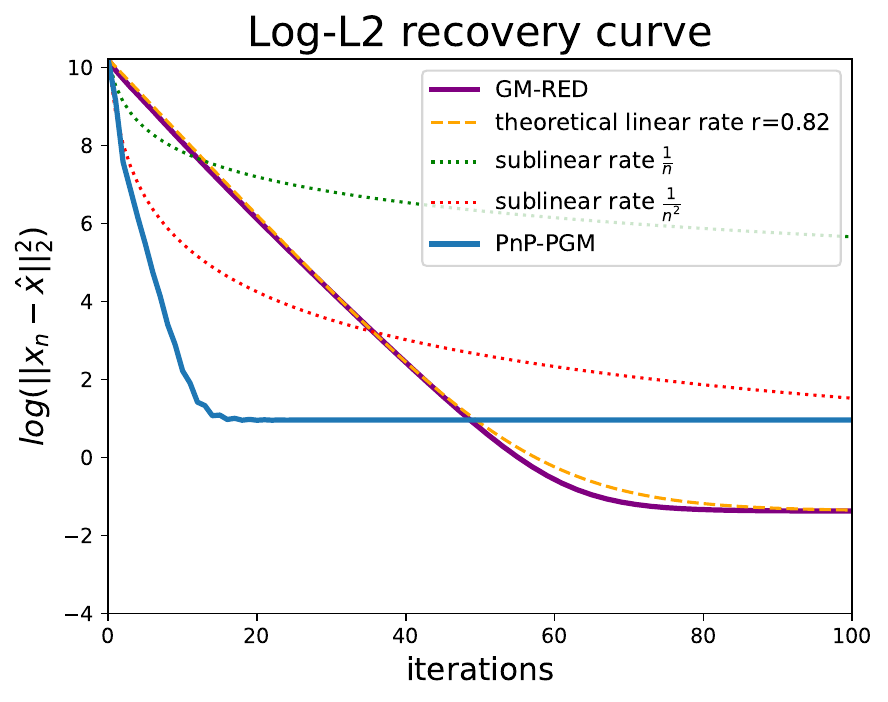}
		\caption{}
		\label{plot:starfish_gm_red_blur_low}
	\end{subfigure}
	\begin{subfigure}[b]{0.325\linewidth}
		\centering
		\includegraphics[width=0.99\linewidth]{./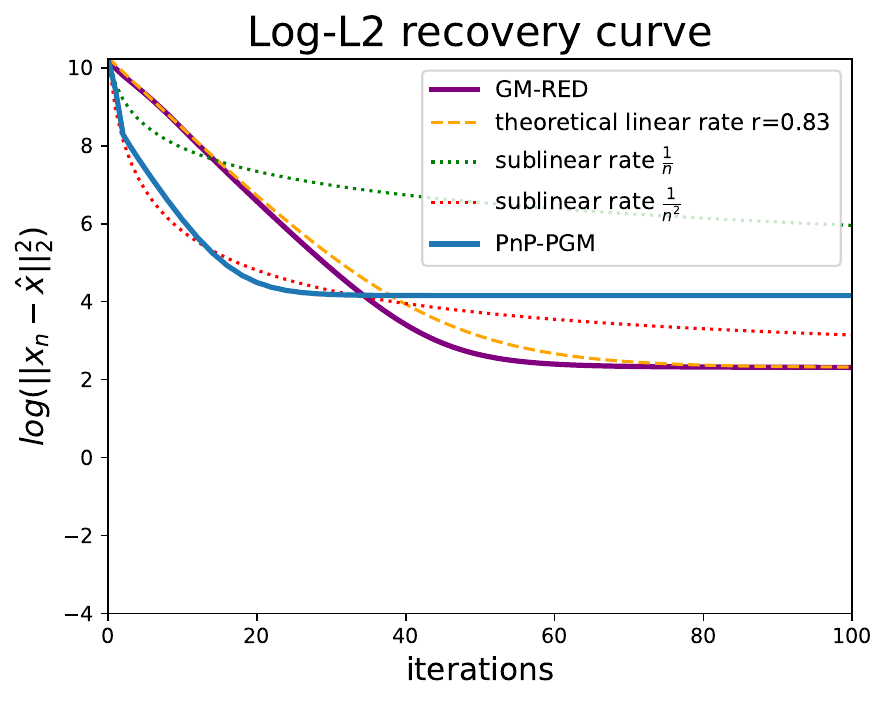}
		\caption{}
		\label{plot:starfish_gm_red_blur_high}
	\end{subfigure}
	\caption{Experiment of the GM-RED algorithm \eqref{eq:PnP_it} on the starfish image for different linear operations. We observe that GM-RED presents a linear convergence rate. Moreover, the measured convergence rate is slower than PnP-PGM.}
	\label{img:images_starfish_gm_red}
		\vspace{-5mm}
\end{figure}

\section{Conclusion} \label{sec:conclusion}
We have given a convergence analysis of a class of projected descent algorithms for the recovery of low-dimensional models. Our result explicitly quantifies the convergence rate with the restricted isometry constants of the measurement operator and a newly introduced restricted Lipschitz condition on the operator projecting onto the model set. This decouples the role of the geometry of the model and the quality of the measurement operator in the rate of convergence.

More particularly the orthogonal projection yields very general guarantees for  general sets. These guarantees can be improved in the case of sparse recovery and iterative hard thresholding, showing that hard thresholding is indeed optimal for the convergence rate (via the restricted Lipschitz constant) when considering the whole class of sparse models (for any sparsity).

Our work lays out the foundation of a theoretical framework for optimal algorithms for the recovery of low-dimensional beyond the variational approach. Many ideas can be explored to generalize this work. Extending to more general classes of algorithms, exploring the tightness of our different results, or studying the impact of the noise in the search for optimality are possible interesting leads. Another possibility would be to add more flexibility to the uniform rate condition and to consider a finite time ''burning'' period, i.e. linear convergence guaranteed after a fixed number of iterations.

Our  results guarantee linear convergence for solving inverse problems with deep priors. They also raise the question of learning a projection (a denoiser in the plug-and-play framework) with a good \emph{restricted} Lipschitz constant, thus relaxing the global Lipschitz condition.

\section{Acknowledgements}\label{sec:ack}
Experiments presented in this paper were carried out using the PlaFRIM experimental testbed, supported by Inria, CNRS (LABRI and IMB), Université de Bordeaux, Bordeaux INP and Conseil Régional d’Aquitaine (see \url{https://www.plafrim.fr}).
Furthermore, we are grateful to the DeepInverse python library (\url{https://deepinv.github.io/deepinv/index.html}) from which the code and weights of the denoiser for natural images was taken from. This work was supported by the French National
Research Agency (ANR) under reference ANR-20-CE40-0001 (EFFIREG project), and by PEPR PDE\_AI. 
We thank the anonymous reviewers whose comments helped improve this article.

\bibliographystyle{abbrv}
\bibliography{towards_optimal_algorithms.bib}

\begin{thebibliography}{10}

\bibitem{attouch2013convergence}
H.~Attouch, J.~Bolte, and B.~F. Svaiter.
\newblock Convergence of descent methods for semi-algebraic and tame problems:
  proximal algorithms, forward--backward splitting, and regularized
  gauss--seidel methods.
\newblock {\em Mathematical Programming}, 137(1):91--129, 2013.

\bibitem{bahmani2016learning}
S.~Bahmani, P.~T. Boufounos, and B.~Raj.
\newblock Learning model-based sparsity via projected gradient descent.
\newblock {\em IEEE Transactions on Information Theory}, 62(4):2092--2099,
  2016.

\bibitem{barber2018gradient}
R.~F. Barber and W.~Ha.
\newblock Gradient descent with non-convex constraints: local concavity
  determines convergence.
\newblock {\em Information and Inference: A Journal of the IMA}, 7(4):755--806,
  2018.

\bibitem{bastounis2021extended}
A.~Bastounis, A.~C. Hansen, and V.~Vla{\v{c}}i{\'c}.
\newblock The extended smale's 9th problem--on computational barriers and
  paradoxes in estimation, regularisation, computer-assisted proofs and
  learning.
\newblock {\em arXiv preprint arXiv:2110.15734}, 2021.

\bibitem{beck2019optimization}
A.~Beck and N.~Hallak.
\newblock Optimization problems involving group sparsity terms.
\newblock {\em Mathematical Programming}, 178:39--67, 2019.

\bibitem{blumensath2010normalized}
T.~Blumensath and M.~E. Davies.
\newblock Normalized iterative hard thresholding: Guaranteed stability and
  performance.
\newblock {\em IEEE Journal of selected topics in signal processing},
  4(2):298--309, 2010.

\bibitem{bourrier2014fundamental}
A.~Bourrier, M.~E. Davies, T.~Peleg, P.~P{\'e}rez, and R.~Gribonval.
\newblock Fundamental performance limits for ideal decoders in high-dimensional
  linear inverse problems.
\newblock {\em IEEE Transactions on Information Theory}, 60(12):7928--7946,
  2014.

\bibitem{burgisser2013condition}
P.~B{\"u}rgisser and F.~Cucker.
\newblock {\em Condition: The geometry of numerical algorithms}, volume 349.
\newblock Springer Science \& Business Media, 2013.

\bibitem{chen2021deep}
W.~Chen, D.~Wipf, and M.~Rodrigues.
\newblock Deep learning for linear inverse problems using the plug-and-play
  priors framework.
\newblock In {\em ICASSP 2021-2021 IEEE International Conference on Acoustics,
  Speech and Signal Processing (ICASSP)}, pages 8098--8102. IEEE, 2021.

\bibitem{cohen2021regularization}
R.~Cohen, M.~Elad, and P.~Milanfar.
\newblock Regularization by denoising via fixed-point projection (red-pro).
\newblock {\em SIAM Journal on Imaging Sciences}, 14(3):1374--1406, 2021.

\bibitem{cucker1999complexity}
F.~Cucker and S.~Smale.
\newblock Complexity estimates depending on condition and round-off error.
\newblock {\em Journal of the ACM (JACM)}, 46(1):113--184, 1999.

\bibitem{davenport2016overview}
M.~A. Davenport and J.~Romberg.
\newblock An overview of low-rank matrix recovery from incomplete observations.
\newblock {\em IEEE Journal of Selected Topics in Signal Processing},
  10(4):608--622, 2016.

\bibitem{foucart2011hard}
S.~Foucart.
\newblock Hard thresholding pursuit: an algorithm for compressive sensing.
\newblock {\em SIAM Journal on numerical analysis}, 49(6):2543--2563, 2011.

\bibitem{Foucart_2013}
S.~Foucart and H.~Rauhut.
\newblock {\em A Mathematical Introduction to Compressive Sensing}.
\newblock Springer, 2013.

\bibitem{franchetti1986embedding}
C.~Franchetti and E.~Cheney.
\newblock The embedding of proximinal sets.
\newblock {\em Journal of approximation theory}, 48(2):213--225, 1986.

\bibitem{golbabaee2018inexact}
M.~Golbabaee and M.~E. Davies.
\newblock Inexact gradient projection and fast data driven compressed sensing.
\newblock {\em IEEE Transactions on Information Theory}, 64(10):6707--6721,
  2018.

\bibitem{gonzalez2022solving}
M.~Gonz{\'a}lez, A.~Almansa, and P.~Tan.
\newblock Solving inverse problems by joint posterior maximization with
  autoencoding prior.
\newblock {\em SIAM Journal on Imaging Sciences}, 15(2):822--859, 2022.

\bibitem{goujaud2024pepit}
B.~Goujaud, C.~Moucer, F.~Glineur, J.~M. Hendrickx, A.~B. Taylor, and
  A.~Dieuleveut.
\newblock Pepit: computer-assisted worst-case analyses of first-order
  optimization methods in python.
\newblock {\em Mathematical Programming Computation}, 16(3):337--367, 2024.

\bibitem{guennec2024joint}
A.~Guennec, J.-F. Aujol, and Y.~Traonmilin.
\newblock Joint structure-texture low dimensional modeling for image
  decomposition with a plug and play framework.
\newblock 2024.

\bibitem{hand2019global}
P.~Hand and V.~Voroninski.
\newblock Global guarantees for enforcing deep generative priors by empirical
  risk.
\newblock {\em IEEE Transactions on Information Theory}, 66(1):401--418, 2019.

\bibitem{hauptmann2024convergent}
A.~Hauptmann, S.~Mukherjee, C.-B. Sch{\"o}nlieb, and F.~Sherry.
\newblock Convergent regularization in inverse problems and linear
  plug-and-play denoisers.
\newblock {\em Foundations of Computational Mathematics}, pages 1--34, 2024.

\bibitem{huo2010complexity}
X.~Huo and J.~Chen.
\newblock Complexity of penalized likelihood estimation.
\newblock {\em Journal of Statistical Computation and Simulation},
  80(7):747--759, 2010.

\bibitem{hurault2022gradient}
S.~Hurault, A.~Leclaire, and N.~Papadakis.
\newblock Gradient step denoiser for convergent plug-and-play.
\newblock In {\em International Conference on Learning Representations
  (ICLR'22)}, 2022.

\bibitem{hurault2021gradient}
S.~Hurault, A.~Leclaire, and N.~Papadakis.
\newblock Gradient step denoiser for convergent plug-and-play.
\newblock 2022.

\bibitem{kamilov2023plug}
U.~S. Kamilov, C.~A. Bouman, G.~T. Buzzard, and B.~Wohlberg.
\newblock Plug-and-play methods for integrating physical and learned models in
  computational imaging: Theory, algorithms, and applications.
\newblock {\em IEEE Signal Processing Magazine}, 40(1):85--97, 2023.

\bibitem{kamilov2016learning}
U.~S. Kamilov and H.~Mansour.
\newblock Learning optimal nonlinearities for iterative thresholding
  algorithms.
\newblock {\em IEEE Signal Processing Letters}, 23(5):747--751, 2016.

\bibitem{kamilov2017plug}
U.~S. Kamilov, H.~Mansour, and B.~Wohlberg.
\newblock A plug-and-play priors approach for solving nonlinear imaging inverse
  problems.
\newblock {\em IEEE Signal Processing Letters}, 24(12):1872--1876, 2017.

\bibitem{leong2022optimal}
O.~Leong, E.~O'Reilly, Y.~S. Soh, and V.~Chandrasekaran.
\newblock Optimal regularization for a data source.
\newblock {\em arXiv preprint arXiv:2212.13597}, 2022.

\bibitem{li2019compressed}
C.~Li and B.~Adcock.
\newblock Compressed sensing with local structure: uniform recovery guarantees
  for the sparsity in levels class.
\newblock {\em Applied and Computational Harmonic Analysis}, 46(3):453--477,
  2019.

\bibitem{liu2020between}
H.~Liu and R.~Foygel~Barber.
\newblock Between hard and soft thresholding: optimal iterative thresholding
  algorithms.
\newblock {\em Information and Inference: A Journal of the IMA}, 9(4):899--933,
  2020.

\bibitem{liu2021recovery}
J.~Liu, S.~Asif, B.~Wohlberg, and U.~Kamilov.
\newblock Recovery analysis for plug-and-play priors using the restricted
  eigenvalue condition.
\newblock {\em Advances in Neural Information Processing Systems},
  34:5921--5933, 2021.

\bibitem{nesterov2018lectures}
Y.~Nesterov et~al.
\newblock {\em Lectures on convex optimization}, volume 137.
\newblock Springer, 2018.

\bibitem{olikier2024projected}
G.~Olikier and I.~Waldspurger.
\newblock Projected gradient descent accumulates at bouligand stationary
  points.
\newblock {\em arXiv preprint arXiv:2403.02530}, 2024.

\bibitem{ongie2020deep}
G.~Ongie, A.~Jalal, C.~A. Metzler, R.~G. Baraniuk, A.~G. Dimakis, and
  R.~Willett.
\newblock Deep learning techniques for inverse problems in imaging.
\newblock {\em IEEE Journal on Selected Areas in Information Theory}, 1(1),
  2020.

\bibitem{peng2020solving}
P.~Peng, S.~Jalali, and X.~Yuan.
\newblock Solving inverse problems via auto-encoders.
\newblock {\em IEEE Journal on Selected Areas in Information Theory},
  1(1):312--323, 2020.

\bibitem{Puy_2015}
G.~Puy, M.~E. Davies, and R.~Gribonval.
\newblock Recipes for stable linear embeddings from hilbert spaces to
  $\mathbb{R}^m$.
\newblock {\em Information Theory, IEEE Transactions on}, 63(4):2171--2187,
  2017.

\bibitem{romano2017little}
Y.~Romano, M.~Elad, and P.~Milanfar.
\newblock The little engine that could: Regularization by denoising (red).
\newblock {\em SIAM Journal on Imaging Sciences}, 10(4):1804--1844, 2017.

\bibitem{roulet2017computational}
V.~Roulet, N.~Boumal, and A.~d’Aspremont.
\newblock Computational complexity versus statistical performance on sparse
  recovery problems.
\newblock {\em Information and Inference: A Journal of the IMA}, 9(1):1--32,
  2020.

\bibitem{scarlett2022theoretical}
J.~Scarlett, R.~Heckel, M.~R. Rodrigues, P.~Hand, and Y.~C. Eldar.
\newblock Theoretical perspectives on deep learning methods in inverse
  problems.
\newblock {\em IEEE journal on selected areas in information theory},
  3(3):433--453, 2022.

\bibitem{tachella2023sensing}
J.~Tachella, D.~Chen, and M.~Davies.
\newblock Sensing theorems for unsupervised learning in linear inverse
  problems.
\newblock {\em Journal of Machine Learning Research}, 24(39):1--45, 2023.

\bibitem{tirer2021convergence}
T.~Tirer and R.~Giryes.
\newblock On the convergence rate of projected gradient descent for a
  back-projection based objective.
\newblock {\em SIAM Journal on Imaging Sciences}, 14(4):1504--1531, 2021.

\bibitem{gitlab-link}
Y.~Traonmilin, J.-F. Aujol, and A.~Guennec.
\newblock Linear rate numerical experiments.
\newblock
  \url{https://plmlab.math.cnrs.fr/aguennec/beyond_variation_numerical_final},
  2024.

\bibitem{traonmilin2018stable}
Y.~Traonmilin and R.~Gribonval.
\newblock Stable recovery of low-dimensional cones in hilbert spaces: One rip
  to rule them all.
\newblock {\em Applied and Computational Harmonic Analysis}, 45(1):170--205,
  2018.

\bibitem{traonmilin2021theory}
Y.~Traonmilin, R.~Gribonval, and S.~Vaiter.
\newblock A theory of optimal convex regularization for low-dimensional
  recovery.
\newblock {\em Information and Inference: A Journal of the IMA}, 13(2):iaae013,
  2024.

\bibitem{unser2024parseval}
M.~Unser and S.~Ducotterd.
\newblock Parseval convolution operators and neural networks.
\newblock {\em arXiv preprint arXiv:2408.09981}, 2024.

\bibitem{venkatakrishnan2013plug}
S.~V. Venkatakrishnan, C.~A. Bouman, and B.~Wohlberg.
\newblock Plug-and-play priors for model based reconstruction.
\newblock In {\em 2013 IEEE global conference on signal and information
  processing}, pages 945--948. IEEE, 2013.

\bibitem{vu2022asymptotic}
T.~Vu and R.~Raich.
\newblock On asymptotic linear convergence of projected gradient descent for
  constrained least squares.
\newblock {\em IEEE Transactions on Signal Processing}, 70:4061--4076, 2022.

\bibitem{zhang2021plug}
K.~Zhang, Y.~Li, W.~Zuo, L.~Zhang, L.~Van~Gool, and R.~Timofte.
\newblock Plug-and-play image restoration with deep denoiser prior.
\newblock {\em IEEE Transactions on Pattern Analysis and Machine Intelligence},
  44(10):6360--6376, 2021.

\end{thebibliography}
\end{document}